\DeclarePairedDelimiter\abs{\lvert}{\rvert}%
\DeclarePairedDelimiter\norm{\lVert}{\rVert}%
\DeclareMathOperator{\dist}{dist}
\DeclareMathOperator{\supp}{supp}
\DeclareMathOperator{\per}{per}
\DeclareMathOperator{\p}{p}
\DeclareMathOperator{\free}{free}
\DeclareMathOperator{\f}{f}
\DeclareMathOperator{\fp}{fp}
\DeclareMathOperator{\UP}{SG}
\DeclareMathOperator{\zero}{zero}
\DeclareMathOperator{\GFF}{GFF}
\DeclareMathOperator{\IV}{IV}
\DeclareMathOperator{\Vil}{Vil}
\DeclareMathOperator{\ext}{ext}
\DeclareMathOperator{\sep}{sep}
\DeclareMathOperator{\zv}{z}%\infty}
\newtheorem{thm}{Theorem}[section]
\newtheorem{lemma}[thm]{Lemma}
\newtheorem{prop}[thm]{Proposition}
\newtheorem{definition}[thm]{Definition}
\newtheorem{claim}[thm]{Claim}
\theoremstyle{definition}
\newtheorem*{remark}{Remark}
\newtheorem*{conj}{Conjecture}
\newcommand{\twopartdef}[4]
{
	\left\{
		\begin{array}{ll}
			#1 & #2 \\
			#3 & #4
		\end{array}
	\right.
}
\newcommand{\threepartdef}[6]
{
	\left\{
		\begin{array}{lll}
			#1 & #2 \\
			#3 & #4 \\
			#5 & #6
		\end{array}
	\right.
}
\numberwithin{equation}{section}
\title{The Fr\"ohlich-Spencer Proof of the Berezinskii-Kosterlitz-Thouless Transition}
\author{Vital Kharash\thanks{Supported by ISF grant~861/15 and by ERC starting grant 678520 (LocalOrder). School of Mathematical Sciences, Tel Aviv
University, Tel Aviv 69978, Israel. Emails: vitalk@gmail.com,
peledron@post.tau.ac.il} \and Ron Peled\footnotemark[1] }
\begin{document}
\newcommand*{\mida}{d\mu_{\text{\tiny$\beta$}}(\phi)}
\newcommand*{\midaLambda}{d\mu_{\text{\tiny$\beta, \Lambda$}}^{\GFF}(\phi)}
\newcommand*{\midaLambdaV}{d\mu_{\text{\tiny$\beta, \Lambda, v$}}^{\GFF}(\phi)}
\newcommand*{\midaLambdaVof}[1]{d\mu_{\text{\tiny$\beta, \Lambda, v$}}^{\GFF}(#1)}
\newcommand*{\midaLambdaVBeta}[1]{d\mu_{\text{\tiny$#1, \Lambda, v$}}^{\GFF}(\phi)}
\newcommand*{\midar}{d\mu_\beta(\phi)}
\newcommand*{\ens}{\mathscr{N}}
\newcommand*{\ensE}{\mathscr{E}}
\newcommand*{\ensQ}{\mathcal{Q}}
\newcommand*{\ensN}{\mathscr{G}}
\newcommand*{\ensI}{\mathscr{I}}
\newcommand*{\ensS}{\mathscr{S}}

\newcommand*{\zed}{z(\beta, \varrho, \ens)}
\newcommand*{\tp}{\text{\tiny$\prime$}}
\newcommand*{\phiOf}[1]{\langle\phi, #1\rangle}
\newcommand*{\psiOf}[1]{\langle\psi, #1\rangle}
\newcommand*{\mOf}[1]{\langle m, #1\rangle}
\newcommand*{\sigmaOf}[1]{\langle\sigma, #1\rangle}
\newcommand*{\sigmaOfVil}[1]{\langle\sigma^{\Vil}, #1\rangle}

\newcommand*{\SqrsSet}{\mathscr{S}_k(\varrho)}
\newcommand*{\SqrsSetTag}{\mathscr{S}^{\sep}_k(\varrho)}
\newcommand*{\SqrsSetTagGen}{\mathscr{S}^{\sep}}
\newcommand*{\SqrsSetDtag}{\mathscr{S}^{\prime\prime}_k(\varrho)}
\newcommand*{\SqrsSetGamma}{\mathscr{S}_{\gamma(k)}(\varrho)}
\newcommand*{\SqrsSetTagGamma}{\mathscr{S}^{\sep}_{\gamma(k)}(\varrho)}
\newcommand*{\SqrsSetDtagGamma}{\mathscr{S}^{\prime\prime}_{\gamma(k)}(\varrho)}
\newcommand*{\SqrsSetOf}[2]{\mathscr{S}_{#1}(#2)}
\newcommand*{\SqrsSetTagOf}[2]{\mathscr{S}^{\sep}_{#1}(#2)}
\newcommand*{\gradNorm}[1]{\sum_{j\thicksim \ell} ({#1}_j - {#1}_\ell)^2}
\newcommand*{\gradNormUp}[1]{\sum_{j\thicksim \ell} ({#1}(j) - {#1}(\ell))^2}
\newcommand*{\gradNormOld}[1]{{\langle #1, -\Delta_\Lambda #1\rangle}}

\newcommand*{\bvr}{\bar{\varrho}}

\newcommand*{\SG}{(\Gamma, \eta, \theta)\text{-sub-Gaussian}}
\newcommand*{\SGS}{(\Gamma, \eta, \theta)\text{-sub-Gaussians}}

\newcommand*{\DOne}{D_1}
\newcommand*{\DTwo}{D_2}
\newcommand*{\DThr}{D_3}
\newcommand*{\DFr}{D_4}
\newcommand*{\DSix}{D_5}
\newcommand*{\DSev}{D_6}

\newenvironment{enumerate-math}
{\begin{enumerate}
\addtolength{\itemsep}{5pt}
\renewcommand\theenumi{(\roman{enumi})}}
{\end{enumerate}}

\maketitle
\begin{abstract}
  We present the Fr\"ohlich-Spencer proof of the Berezinskii-Kosterlitz-Thouless transition. Our treatment includes the proof of delocalization for the integer-valued discrete Gaussian free field at high temperature and the proof of existence of a phase with power-law decay of correlations in the plane rotator model with Villain interaction, both in two dimensions.
  The treatment differs from the original in various technical points and we hope it will be of benefit to the community.
\end{abstract}

\section{Introduction}\label{sec:introduction}
Berezinskii \cite{Ber71,Ber72} and Kosterlitz and Thouless \cite{KT72,KT73} predicted in 1972 the existence of new types of phase transitions leading to \emph{topological phases of matter}. These theoretical discoveries were the basis for the award of the 2016 Nobel prize in Physics to Haldane, Kosterlitz and Thouless. Fr\"ohlich and Spencer \cite{fs} proved these predictions with mathematical rigor in a celebrated 1981 paper. The Fr\"ohlich-Spencer proof yielded a wealth of information on many two-dimensional models of statistical physics: Delocalization for the integer-valued discrete Gaussian free field and the Solid-On-Solid model at high temperature, the absence of Debye screening in the Coulomb gas at low temperature, the existence of a phase with power-law decay of correlations (the Berezinskii-Kosterlitz-Thouless phase) in the plane rotator models with XY and Villain interactions at low temperature and in the $\mathbb{Z}_n$ clock models at large $n$ and intermediate temperatures. Nevertheless, mathematical understanding of the fine properties of these models remains incomplete with many outstanding open questions (see Section~\ref{sec:openQ}). In this paper we present the Fr\"ohlich-Spencer proof of the delocalization of the integer-valued discrete Gaussian free field and the existence of the Berezinskii-Kosterlitz-Thouless phase in the plane rotator model with Villain interaction. While our proof is essentially the same as the original, we have attempted to make the presentation accessible by providing a detailed overview of the proof, organizing the arguments into additional intermediate steps and making various local enhancements. We additionally chose to provide all arguments in finite volume (with estimates uniform in the volume) to reduce the technical prerequisites on the side of the reader. A review of some of the main differences from the original paper is in Section~\ref{sec:diffs_from_fs}. It is hoped that this paper will be of benefit to the community and provide further access to the remaining challenges.

\subsection{Notation}
In presenting the main results and throughout the paper we use the following notation.

Let $\Lambda_L^{\free}$, $L>1$, be the graph whose vertices are $V(\Lambda_L^{\free}) = \{0, \ldots, L-1 \}^2$, and whose edges $E(\Lambda_L^{\free})$ are all pairs $\{(a,b),(c,d)\}$ with $\{|a-b|, |c-d|\}=\{0,1\}$.

Let $\Lambda_L^{\per}$, $L > 1$ even, be the graph whose vertices are $V(\Lambda_L^{\per}) = \{0, \ldots, L-1 \}^2$, and whose edges $E(\Lambda_L^{\per})$ are all pairs $\{(a,b), (c,d)\}$ with $(a,b),(c,d)$ equal in one coordinate and differing by exactly one modulo $L$ in the other coordinate. We require $L$ to be even so that $\Lambda_L^{\per}$ is bipartite.

Let $\Lambda_L^{\zero}$, $L>1$, be the multigraph whose vertices are $V(\Lambda_L^{\zero}) = V(\Lambda_L^{\free}) \bigcup \{\zv\}$ and whose edges are $E(\Lambda_L^{\free})$ adjoined with an edge $\{j,\zv\}$, for each $j\in V(\Lambda_L^{\free})$, of multiplicity $4$ minus the degree of $j$ in $\Lambda_L^{\free}$ (so that each $j\in V(\Lambda_L^{\zero})$ has degree $4$ in $\Lambda_L^{\zero}$).

We call $\Lambda$ a \textit{square domain} if $\Lambda$ is either $\Lambda_L^{\free}$, $\Lambda_L^{\per}$ or $\Lambda_L^{\zero}$ for some $L>1$ ($L$ even for $\Lambda_L^{\per}$) and distinguish the cases by saying that $\Lambda$ has free, periodic or zero boundary conditions (b.c.), respectively.
%If $\Lambda = \Lambda_L^{\free}$ we say that $\Lambda$ has \emph{free boundary conditions} (b.c.).
%If $\Lambda = \Lambda_L^{\per}$ we say that $\Lambda$ has \emph{periodic b.c.}.
%If $\Lambda = \Lambda_L^{\zero}$ we say that $\Lambda$ has \emph{zero b.c.}.
%When $\Lambda = In the periodic case we require that $L$ be even, so that $\Lambda_L^{\per}$ is bipartite.

For a finite, connected graph $\Lambda$ and $j,\ell\in V(\Lambda)$ we write $j\thicksim \ell$ to denote that $j$ and $\ell$ are adjacent. We write $\dist(j,\ell)$ for the graph distance between $j$ and $\ell$ and $\dist(A,B):=\min_{j\in A, \ell\in B}\dist(j,\ell)$ for $A,B\subset V(\Lambda)$. We define the finite difference Laplacian $\Delta_\Lambda$ of $\Lambda$ as the linear operator satisfying
\begin{equation}\label{eq:Laplacian_def}
(\Delta_\Lambda f)(j) := \sum_{\ell:\ell \thicksim j} f_\ell-f_j
\end{equation}
(with both terms inside the sum, so that $f(j)$ is multiplied by the degree of $j$). We define $\Delta_\Lambda^{-1}$ to be the Moore-Penrose pseudoinverse of $\Delta_\Lambda$, characterized by requiring that $\Delta_\Lambda^{-1}$ be linear and
\begin{equation}\label{eq:pseudo-inverse_def}
\text{$\Delta_\Lambda \Delta_\Lambda^{-1}f = \Delta_\Lambda^{-1} \Delta_\Lambda f = f$ for every $f:\Lambda\rightarrow\mathbb{R}$ with $\sum_{j\in\Lambda}f(j)=0$,\quad $\Delta_\Lambda^{-1} f = 0$ for constant $f$.}
\end{equation}
We call $-\Delta_\Lambda^{-1}$ the Green function of $\Lambda$. For notational simplicity, we identify $\Lambda$ with $V(\Lambda)$ when no ambiguity can arise. For $f,g: \Lambda \rightarrow \mathbb{R}$ we write
$$
\langle f, g \rangle := \sum_{j\in\Lambda} f_j g_j.
$$

\subsection{Discrete Gaussian Free Field} \label{sec:DGFF}
The \emph{discrete (or lattice) Gaussian free field} on a finite, connected graph $\Lambda$ at inverse temperature $\beta>0$, normalized to take values in $[-\pi, \pi)$ at a given $v\in \Lambda$, is the measure on $\phi:\Lambda\rightarrow\mathbb{R}$ given by
\begin{equation} \label{label:mida}
\midaLambdaV := \frac{1}{Z_{\beta, \Lambda,v}^{\GFF}} \cdot
\exp\Bigg[ -\frac{\beta}{2}\sum_{j\thicksim\ell}(\phi_j - \phi_\ell)^2 \Bigg]
\mathds{1}_{[-\pi, \pi)}\big(\phi_v\big)\prod_{j \in \Lambda}d\phi_j,
\end{equation}
where $d\phi_j$ is the Lebesgue measure on $\mathbb{R}$, and the normalization constant $Z_{\beta, \Lambda,v}^{\GFF}$ makes $\mu_{\text{\tiny$\beta, \Lambda, v$}}^{\GFF}$ into a probability measure. We emphasize that in the sum over $j\thicksim\ell$ each edge of $\Lambda$ is counted exactly once. We write $\mathbb{E}_{\beta, \Lambda, v}^{\GFF}$ for the corresponding expectation.

The normalization $\phi_v \in [-\pi, \pi)$, instead of the more standard requirement that $\phi_v=0$, is chosen with a view towards Section~\ref{sec:DGFF_periodic_weights} below, where we augment the above definition with the addition of $2\pi$-periodic single-site weights. This choice does not affect the distribution of the gradients $(\phi_j - \phi_\ell)$, $j,\ell\in\Lambda$, for the discrete Gaussian free field.

%chosen as it is technically convenient in proving our main theorems. One may then use these theorems, specifically Theorem~\ref{thm11}, to deduce analogous statements with other normalization choices, as remarked at the end of Section~\ref{sec:thm11}.

For every $f:\Lambda \rightarrow \mathbb{R}$ satisfying $\sum_{j \in \Lambda} f_j = 0$ it holds that
\begin{equation} \label{eq:DGFFRes}
\mathbb{E}_{\beta, \Lambda, v}^{\GFF}\left[ e^{\phiOf{f}} \right] =
\exp\left({\frac{1}{2\beta} \langle f, -\Delta_\Lambda^{-1}f\rangle}\right),
\end{equation}
as one sees with the change of variables
$
\phi_j \rightarrow \phi_j + \sigma_j,
$
where $\sigma:\Lambda \rightarrow \mathbb{R}$ is the solution of
\begin{equation} \label{eq:sigmaDef}
\twopartdef
{(-\Delta_\Lambda) \sigma = \frac{1}{\beta} f} {}
{\sigma_v = 0}{}\!\!\!\!\!\!\!\!.
\end{equation}
The equality \eqref{eq:DGFFRes} remains valid also when $f$ is allowed to take complex values (but still sums to $0$), by analytic continuation.

\subsection{Integer-Valued Discrete Gaussian Free Field}
%high temp
The \emph{integer-valued discrete Gaussian free field} on a finite, connected graph $\Lambda$ at inverse temperature $\beta>0$, normalized to equal $0$ at a given $v\in\Lambda$, is the measure on $m:\Lambda\rightarrow\mathbb{Z}$ given by
$$
d\mu_{\beta, \Lambda, v}^{\IV}(m) :=
\frac{1}{Z_{\beta, \Lambda,v}^{\IV}}
\exp\Bigg[-\frac{\beta}{2}\sum_{j\thicksim\ell}(m_j - m_\ell)^2\Bigg]
d\delta_0(m_v)
\prod_{j \in \Lambda\setminus\{v\}} d_{\text{count}(\mathbb{Z})}(m_j)
$$
where $\delta_0$ is the Dirac delta measure at $0$, $d_{\text{count}(\mathbb{Z})}$ is the counting measure on $\mathbb{Z}$, and the normalization constant $Z_{\beta, \Lambda,v}^{\IV}$ normalizes the measure $\mu_{\beta, \Lambda, v}^{\IV}$ to be a probability measure. We write $\mathbb{E}_{\beta, \Lambda, v}^{\IV}$ for the corresponding expectation.

Our main interest is in computing $\mathbb{E}_{\beta, \Lambda, v}^{\IV}\left[ F(m) \right]$ for functionals $F:\mathbb{Z}^{\Lambda}\rightarrow \mathbb{R}$ which are invariant under translation by constant functions (i.e., $F(m) = F(m + c)$ for constant $c:\Lambda\rightarrow\mathbb{Z}$). We note that for such functionals, the expectation is independent of the choice of $v\in\Lambda$, i.e.,
$$
\mathbb{E}_{\beta, \Lambda, v_1}^{\IV}\left[ F(m) \right] =
\mathbb{E}_{\beta, \Lambda, v_2}^{\IV}\left[ F(m) \right], \quad v_1,v_2\in\Lambda.
$$
It is well known that the integer-valued discrete Gaussian free field is localized at low temperature, as follows from standard Peierls-type arguments. For instance, that there exist absolute constants $\beta_0, C>0$ for which
\begin{equation*}
\mathbb{E}_{\beta, \Lambda_L^{\free}, (0,0)}^{\IV}\left[ (m_j-m_k)^2 \right]\le C,\quad\text{for all $L>1$, $j,k\in \Lambda_L^{\free}$ and $\beta\ge \beta_0$}.
\end{equation*}
Fr\"ohlich and Spencer \cite{fs} prove the following theorem which establishes the existence of a roughening transition for the integer-valued discrete Gaussian free field.
\begin{thm} \label{thm:IV}
For every $\varepsilon > 0$, there exists $\beta_0 > 0$, such that the following holds for any square domain $\Lambda$, with free or periodic b.c., and any $\beta < \beta_0$. For every $f:\Lambda \rightarrow \mathbb{R}$ with $\sum_{j \in \Lambda} f_j = 0$,
\begin{equation} \label{eq:IVThm1}
\mathbb{E}_{\beta, \Lambda, v}^{\IV}\left[ e^{\mOf{f}} \right]
\geq \exp\Big[\frac{1}{2(1+\varepsilon)\beta}\langle f, -\Delta_\Lambda^{-1}f\rangle\Big].
\end{equation}
Consequently, for such f,
\begin{equation} \label{eq:IVThm2}
\mathbb{E}_{\beta, \Lambda, v}^{\IV}\left[ \mOf{f}^2 \right] \geq \frac{1}{(1+\varepsilon)\beta}\langle f, -\Delta_\Lambda^{-1}f\rangle
\end{equation}
and, in particular, for some absolute constant $c>0$,
$$
\mathbb{E}_{\beta, \Lambda, v}^{\IV}\left[ \big(m_j-m_k\big)^2 \right] \geq \frac{c}{\beta}\cdot \log(\dist(j,k) + 1), \quad j,k\in\Lambda.
$$
\end{thm}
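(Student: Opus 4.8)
\emph{Reductions.} The plan is to deduce the two ``Consequently'' assertions from \eqref{eq:IVThm1} and then to prove \eqref{eq:IVThm1} by passing to the dual Coulomb gas. For the first reduction, applying \eqref{eq:IVThm1} to $tf$ and to $-tf$ and averaging yields, for every $t\in\mathbb{R}$,
\[
\mathbb{E}_{\beta,\Lambda,v}^{\IV}\big[\cosh(t\mOf{f})\big]\ \geq\ \exp\Big[\tfrac{t^2}{2(1+\varepsilon)\beta}\langle f,-\Delta_\Lambda^{-1}f\rangle\Big].
\]
Both sides are real-analytic in $t$ (the integer-valued field has Gaussian tails, hence finite exponential moments), they agree at $t=0$ together with their first derivatives, and the inequality holds for all $t$; comparing second derivatives at $t=0$ gives \eqref{eq:IVThm2}. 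Specializing to $f=\delta_j-\delta_k$ (which sums to $0$, with $\mOf{f}=m_j-m_k$), the logarithmic bound follows once $\langle\delta_j-\delta_k,-\Delta_\Lambda^{-1}(\delta_j-\delta_k)\rangle\geq c\,\log(\dist(j,k)+1)$ for an absolute $c>0$, uniformly over square domains with free or periodic b.c.\ --- the standard logarithmic growth of the increment variance of the two-dimensional discrete free field (compare \eqref{eq:DGFFRes}), which I would cite or reprove by a short flow/energy argument.

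\emph{Duality.} For \eqref{eq:IVThm1} itself I would complete the square using the $\sigma$ of \eqref{eq:sigmaDef}, producing the factor $\exp[\tfrac1{2\beta}\langle f,-\Delta_\Lambda^{-1}f\rangle]$ exactly as in \eqref{eq:DGFFRes}, and then apply Poisson summation in the variables $m_j$, $j\neq v$. This writes $\mathbb{E}_{\beta,\Lambda,v}^{\IV}[e^{\mOf{f}}]$ as $\exp[\tfrac1{2\beta}\langle f,-\Delta_\Lambda^{-1}f\rangle]$ times the ratio
\[
\frac{\sum_{q}w(q)\cos(2\pi\langle\sigma,q\rangle)}{\sum_{q}w(q)},\qquad w(q):=\exp\Big[-\tfrac{2\pi^2}{\beta}\langle q,-\Delta_\Lambda^{-1}q\rangle\Big],
\]
with $q$ ranging over integer charge configurations (the imaginary parts cancel via $q\mapsto -q$). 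Since $\tfrac1\beta\langle f,-\Delta_\Lambda^{-1}f\rangle=\beta\langle\sigma,-\Delta_\Lambda\sigma\rangle$ and, as $f$ varies, $\sigma$ ranges over all functions with $\sigma_v=0$, \eqref{eq:IVThm1} becomes the assertion that
\[
\frac{\sum_{q}w(q)\cos(2\pi\langle\sigma,q\rangle)}{\sum_{q}w(q)}\ \geq\ \exp\Big[-\tfrac{\varepsilon\beta}{2(1+\varepsilon)}\langle\sigma,-\Delta_\Lambda\sigma\rangle\Big]
\]
uniformly in $\Lambda$ and in $\sigma$. By a Gaussian (Hubbard--Stratonovich) representation the left-hand side is the Laplace transform $\mathbb{E}[e^{\phiOf{f}}]$ of the real free field reweighted by $2\pi$-periodic single-site weights --- the model of Section~\ref{sec:DGFF_periodic_weights} --- so \eqref{eq:IVThm1} is precisely the statement that this reweighted field is $\SG$ with $\Gamma$ close to $\beta$ for $\beta$ small.

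\emph{The core estimate, and the main obstacle.} The displayed bound is the heart of the Fr\"ohlich--Spencer argument: for small $\beta$ the Coulomb gas $w$ is at \emph{low} temperature, and one must show its dielectric constant is close to $1$, i.e.\ that the external charges carried by $\sigma$ still interact through essentially the unscreened potential $-\Delta_\Lambda^{-1}$. I would establish this by a multiscale / renormalization-group analysis: decompose $-\Delta_\Lambda^{-1}$ into a sum of positive-definite kernels of geometrically growing range (a finite-range decomposition of the Green function), and organize each charge configuration into a hierarchy of neutral clusters on dyadic scales. The two quantitative inputs are (i) a far-field estimate --- a neutral cluster of diameter $\ell$ couples to distant charges only through its vanishing low-order multipole moments, with a decay summable over scales --- and (ii) a scale-by-scale resummation showing that, after the charges below a given scale are integrated out, every configuration still pays an energy at least $(1-\delta(\beta))$ times its bare Gaussian cost, where $\delta(\beta)\to0$ as $\beta\to0$. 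Keeping the entropy of charge configurations under control against this screened energy, uniformly in the volume, is the delicate step, and I expect it to be by far the hardest part. Granting (i)--(ii), the displayed ratio is bounded below by $\exp[-\tfrac{\delta(\beta)}{2}\beta\langle\sigma,-\Delta_\Lambda\sigma\rangle]$, which is \eqref{eq:IVThm1} with $\varepsilon=\varepsilon(\delta(\beta))\to0$. The remaining items --- the precise Poisson-summation identity in the presence of the pinning at $v$ and of free versus periodic b.c., and checking that the single-site weights are a small perturbation of $1$ so that the comparison with \eqref{eq:DGFFRes} is transparent --- are routine and uniform in $L$.
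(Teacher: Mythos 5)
Your reductions of \eqref{eq:IVThm2} and the logarithmic bound from \eqref{eq:IVThm1} are fine and essentially agree with the paper (the paper uses a Taylor expansion with an explicit cubic remainder rather than your second-derivative comparison, but these are equivalent). For \eqref{eq:IVThm1} itself, though, there is a genuine gap, and also an unnecessary detour. The gap: after reducing to the statement that the Coulomb-gas ratio $\sum_q w(q)\cos(2\pi\langle\sigma,q\rangle)/\sum_q w(q)$ is bounded below by $\exp[-\frac{\varepsilon\beta}{2(1+\varepsilon)}\langle\sigma,-\Delta_\Lambda\sigma\rangle]$ uniformly in $\Lambda$ and $\sigma$, you ``grant'' items (i)--(ii) of a multiscale/renormalization-group argument and acknowledge that controlling entropy against screened energy is the hard step. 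That acknowledgment is exactly right --- and it is the entire content of the paper's Theorems~\ref{thm234}, \ref{thm:21}, \ref{thm:thm41} and Proposition~\ref{prop:aExistance} (Sections~\ref{sec:proofOverview}, \ref{sec:chap2}, \ref{sec:chap4}), occupying the bulk of the paper. Your sketch names the ingredients but does not construct the dyadic ensemble decomposition, the spin-wave complex translation, or the coefficient bound \eqref{eq:ZBoundThm234}, so the core is missing rather than proved.

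The detour: the paper does not pass through Poisson summation and a Hubbard--Stratonovich transform to get from the integer-valued field to the periodically-weighted free field. It approximates the Dirac comb on $2\pi\mathbb{Z}$ directly by Fej\'er kernels $F_N$, observes that $F_N$ is $(1,0,0)$-sub-Gaussian for every $N$, applies Theorem~\ref{thm11} with the substitution $(\beta,f)\mapsto(\beta/(2\pi)^2, f/(2\pi))$ to get a bound uniform in $N$, and then takes $N\to\infty$ via dominated convergence over the cells $\Omega_m$. Your round trip through the Coulomb gas is conceptually sound, but it introduces distracting technicalities (the imaginary-part cancellation, the precise form of the Poisson identity in the presence of the pinning at $v$) that the Fej\'er-kernel route avoids. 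Finally, your closing remark that ``the single-site weights are a small perturbation of $1$'' is incorrect for the integer-valued constraint: the Fourier coefficients of the periodic weight arising here are \emph{not} small, and that is precisely why the sub-Gaussian condition \eqref{eqCondLambda} is stated for arbitrary $\Gamma>0$ and why the renormalization step cannot be replaced by a naive perturbative expansion around the Gaussian.
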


The results of Fr\"ohlich and Park \cite[Section 3]{upperBound} imply the following proposition, which shows that the fluctuations of the integer-valued discrete Gaussian free field are dominated by those of the real-valued discrete Gaussian free field. The proof is based on the methods of Ginibre \cite{Ginibre}, and is given in Section~\ref{sec:upper_bound} for completeness. Thus the lower bounds of Theorem~\ref{thm:IV} are almost tight at high temperature.
\begin{prop}\label{prop:IVUpperBound}
For any finite, connected graph $\Lambda$, $\beta > 0$, $v\in\Lambda$ and $f:\Lambda \rightarrow \mathbb{R}$ with $\sum_{j \in \Lambda} f_j = 0$,
\begin{equation*}
\mathbb{E}_{\beta, \Lambda, v}^{\IV}\left[ e^{\mOf{f}} \right] \leq
\mathbb{E}_{\beta, \Lambda, v}^{\GFF}\left[ e^{\phiOf{f}} \right] =
\exp\left({\frac{1}{2\beta} \langle f, -\Delta_\Lambda^{-1}f\rangle}\right).
\end{equation*}
\end{prop}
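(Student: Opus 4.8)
The plan is to exploit the fact that the integer‑valued field is, up to normalization, the restriction of the real Gaussian density to the integer lattice, and then to invoke the multidimensional Poisson summation formula; this step is the incarnation of the method of Ginibre~\cite{Ginibre} referred to above. Fix $v\in\Lambda$ and set $\Lambda'=\Lambda\setminus\{v\}$. Since $\Lambda$ is connected, the quadratic form $x\mapsto\sum_{j\thicksim\ell}(\tilde x_j-\tilde x_\ell)^2$ on $\mathbb{R}^{\Lambda'}$, where $\tilde x$ extends $x$ by $\tilde x_v=0$, is positive definite; write it as $\langle x,Bx\rangle$ with $B$ symmetric positive definite. In the normalization $\phi_v=0$ the real Gaussian free field is then $N(0,\tfrac1\beta B^{-1})$, with density $p(x)\propto e^{-\frac\beta2\langle x,Bx\rangle}$ on $\mathbb{R}^{\Lambda'}$, while the integer‑valued field is the probability measure on $\mathbb{Z}^{\Lambda'}$ whose mass at $m$ is proportional to $p(m)$. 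Because $\sum_jf_j=0$ and $m_v=\phi_v=0$, we may replace $f$ by its restriction $h:=f|_{\Lambda'}$: $\langle m,f\rangle=\langle m,h\rangle$ and $\langle\phi,f\rangle=\langle\phi,h\rangle$, and $\mathbb{E}_{\beta,\Lambda,v}^{\GFF}[e^{\langle\phi,f\rangle}]$ is unchanged by the switch of normalization since $\langle\phi,f\rangle$ depends on $\phi$ only through its gradients.

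Completing the square gives $p(x)\,e^{\langle x,h\rangle}=e^{\frac1{2\beta}\langle h,B^{-1}h\rangle}\,p_\mu(x)$ with $\mu:=\tfrac1\beta B^{-1}h$ and $p_\mu$ the density of $N(\mu,\tfrac1\beta B^{-1})$. Summing over $\mathbb{Z}^{\Lambda'}$ and taking the ratio with the $h=0$ case,
$$
\mathbb{E}_{\beta,\Lambda,v}^{\IV}\!\big[e^{\langle m,f\rangle}\big]=e^{\frac1{2\beta}\langle h,B^{-1}h\rangle}\cdot\frac{\sum_{m\in\mathbb{Z}^{\Lambda'}}p_\mu(m)}{\sum_{m\in\mathbb{Z}^{\Lambda'}}p(m)},
$$
so the proposition reduces to the claim that summing a nondegenerate Gaussian density over the integer lattice is maximized when the density is centered at a lattice point, i.e. $\sum_{m\in\mathbb{Z}^{\Lambda'}}p_\mu(m)\le\sum_{m\in\mathbb{Z}^{\Lambda'}}p(m)$. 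Here I would apply Poisson summation (legitimate since $p_\mu$ is Schwartz and, as $B$ is positive definite, both lattice sums converge absolutely): with $\widehat{p_\mu}(\xi)=\int p_\mu(x)e^{-i\langle\xi,x\rangle}\,dx=e^{-i\langle\mu,\xi\rangle-\frac1{2\beta}\langle\xi,B^{-1}\xi\rangle}$,
$$
\sum_{m\in\mathbb{Z}^{\Lambda'}}p_\mu(m)=\sum_{k\in\mathbb{Z}^{\Lambda'}}\widehat{p_\mu}(2\pi k)=\sum_{k\in\mathbb{Z}^{\Lambda'}}e^{-\frac{2\pi^2}{\beta}\langle k,B^{-1}k\rangle}\cos\!\big(2\pi\langle\mu,k\rangle\big),
$$
the imaginary part cancelling under $k\mapsto-k$. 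Since the weights are positive and $\cos\le1$, this is at most its value at $\mu=0$, namely $\sum_{m}p(m)$; hence the ratio above is $\le1$.

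It remains to identify the Gaussian prefactor. Evaluating the real Gaussian free field in the $\phi_v=0$ normalization as $N(0,\tfrac1\beta B^{-1})$ gives $\mathbb{E}_{\beta,\Lambda,v}^{\GFF}[e^{\langle\phi,f\rangle}]=e^{\frac1{2\beta}\langle h,B^{-1}h\rangle}$, while \eqref{eq:DGFFRes} evaluates the same quantity as $\exp(\tfrac1{2\beta}\langle f,-\Delta_\Lambda^{-1}f\rangle)$; therefore $\langle h,B^{-1}h\rangle=\langle f,-\Delta_\Lambda^{-1}f\rangle$, and combining with the two displays above,
$$
\mathbb{E}_{\beta,\Lambda,v}^{\IV}\!\big[e^{\langle m,f\rangle}\big]\le e^{\frac1{2\beta}\langle f,-\Delta_\Lambda^{-1}f\rangle}=\mathbb{E}_{\beta,\Lambda,v}^{\GFF}\!\big[e^{\langle\phi,f\rangle}\big],
$$
which is the assertion. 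I expect no deep obstacle: the only points requiring care are a clean justification of the Poisson summation step (Schwartz class, absolute convergence of both sides) and the bookkeeping in passing between $\Lambda$ and $\Lambda'$ and between $B^{-1}$ and $-\Delta_\Lambda^{-1}$; everything else is routine Gaussian calculus.
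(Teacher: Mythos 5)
Your proof is correct, but it takes a genuinely different route from the paper. The paper follows Fr\"ohlich--Park and Ginibre: it interpolates between the real and integer-valued fields via the Sine--Gordon measure $d\mu^{\UP}_{\beta,\Lambda,\eta,v}\propto e^{\eta\sum_j\cos\phi_j}\,d\mu^{\GFF}$, shows that $\eta\to\infty$ recovers the integer-valued field (after rescaling by $2\pi$), and proves $\frac{d}{d\eta}\mathbb{E}^{\UP}_{\beta,\Lambda,\eta,v}[e^{\langle\phi,f\rangle}]\le 0$ by the Ginibre duplicate-variable device (change of variables $\psi=(\phi-\phi')/\sqrt2$, $\chi=(\phi+\phi')/\sqrt2$, product-to-sum identities, and a Taylor expansion making the derivative a sum of $-(\int F_{j,n})^2$). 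You instead write the IV partition sums directly as lattice sums of a shifted Gaussian and apply multidimensional Poisson summation: after completing the square, the ratio of lattice sums becomes $\sum_k w_k\cos(2\pi\langle\mu,k\rangle)\big/\sum_k w_k$ with $w_k>0$, which is at most $1$. Your argument is more elementary and self-contained (no interpolation, no differentiation under the integral), while the paper's illustrates a correlation-inequality machinery that extends to non-Gaussian single-site weights. One small inaccuracy worth flagging: you call the Poisson-summation step ``the incarnation of the method of Ginibre,'' but Ginibre's method is the duplicate-variable trick used in the paper, not Poisson summation; the two are distinct techniques that happen to give the same inequality here. The bookkeeping points you flag (Schwartz class, absolute convergence, the identification $\langle h,B^{-1}h\rangle=\langle f,-\Delta_\Lambda^{-1}f\rangle$ via matching both expressions against \eqref{eq:DGFFRes}) are all sound.
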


\subsection{Plane Rotator Model with Villain Interaction}
%low temp
The \emph{plane rotator model with Villain interaction}, or simply the \emph{Villain model}, on the graph $\Lambda_L^{\zero}$ at inverse temperature $\beta>0$, normalized to equal $0$ at $\zv\in\Lambda_L^{\zero}$ (recalling that $\zv$ stands for the `wired boundary' of $\Lambda_L^{\zero}$), is the measure on $\theta:\Lambda_L^{\zero}\rightarrow[-\pi, \pi)$ given by
$$
d\mu_{\beta, \Lambda_L^{\zero}}^{\Vil}(\theta) :=
\frac{1}{Z_{\beta, \Lambda_L^{\zero}}^{\Vil}}
\prod_{j\thicksim \ell}
\sum_{m \in \mathbb{Z}} e^{-\frac{\beta}{2}(\theta_j - \theta_\ell + 2\pi m)^2}
\delta_{0}(\theta_{\zv})
\prod_{j \in \Lambda_L^{\zero}\setminus\{\zv\}} \mathds{1}_{[-\pi, \pi)}\big(\theta_j\big) d\theta_j
$$
%\mathds{1}_{[-\pi, \pi)}(\theta_j)
where $d\theta_j$ is the Lebesgue measure on $\mathbb{R}$, and the normalization constant $Z_{\beta, \Lambda_L^{\zero}}^{\Vil}$ normalizes the measure $\mu_{\beta, \Lambda_L^{\zero}}^{\Vil}$ to be a probability measure. We write $\mathbb{E}_{\beta, \Lambda_L^{\zero}}^{\Vil}$ for the corresponding expectation.

McBryan and Spencer \cite{McSp} establish a quantitative version of the Mermin-Wagner theorem, showing that at all inverse temperatures $\beta>0$, for some absolute constants $C,c>0$,
$$
\mathbb{E}_{\beta, \Lambda_L^{\zero}}^{\Vil}\big[\cos \theta_j \big] \leq C(\dist(j, \zv)+1)^{-c/\beta}, \quad j\in\Lambda_L^{\zero}.
$$
High-temperature expansion shows that the above expectation, in fact, decays exponentially in $\dist(j, \zv)$ for small $\beta$ (see, e.g., \cite[Section 2.4]{ri}).
Thus, the following theorem of Fr\"ohlich and Spencer \cite{fs} proves the existence of a Berezinskii-Kosterlitz-Thouless transition in the Villain model.
\begin{thm} \label{thm:Vil}
There exist $\beta_0, C>0$ such that for every $L>1$ and every $\beta > \beta_0$,
$$
\mathbb{E}_{\beta, \Lambda_L^{\zero}}^{\Vil}\big[\cos \theta_j \big] \geq (\dist(j, \zv)+1)^{-C/\beta}, \quad j\in\Lambda_L^{\zero}.
$$
\end{thm}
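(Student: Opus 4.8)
Since $\theta_{\zv}=0$ we have $\cos\theta_j=\cos(\theta_j-\theta_{\zv})$, and Poisson summation on each Villain edge factor, $\sum_{m\in\mathbb Z}e^{-\frac\beta2(x+2\pi m)^2}=(2\pi\beta)^{-1/2}\sum_{n\in\mathbb Z}e^{-n^2/2\beta}e^{inx}$, turns the weight into $\prod_{a\thicksim b}\sum_{n_{ab}\in\mathbb Z}e^{-n_{ab}^2/2\beta}e^{in_{ab}(\theta_a-\theta_b)}$. Integrating the angles $\theta_a$, $a\neq\zv$, over $[-\pi,\pi)$ forces the integer edge-field $n$ to have vanishing discrete divergence $\partial n$ off $\zv$; with the insertion $e^{i\theta_j}$ this creates a unit source at $j$ (compensated at $\zv$), giving the exact identity
\begin{equation*}
\mathbb E^{\Vil}_{\beta,\Lambda_L^{\zero}}[\cos\theta_j]=\frac{\sum_{n\in\mathbb Z^{E}:\;\partial n=\delta_j-\delta_{\zv}}e^{-\frac1{2\beta}\sum_e n_e^2}}{\sum_{n\in\mathbb Z^{E}:\;\partial n=0}e^{-\frac1{2\beta}\sum_e n_e^2}},
\end{equation*}
with $\delta_j$ the indicator of $j$. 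Decomposing $n$ orthogonally into its gradient part (forced to be the electrostatic flow $\nabla\varphi$, $\Delta_\Lambda\varphi=\pm(\delta_j-\delta_{\zv})$) and its cycle part, and using that integer cycles of a planar graph are gradients of integer height functions on the planar dual $\Lambda^*$ (a square domain of size comparable to $L$, with free b.c.), this factorizes as
\begin{equation*}
\mathbb E^{\Vil}_{\beta,\Lambda_L^{\zero}}[\cos\theta_j]=\exp\!\Big(\!-\tfrac1{2\beta}\,\langle\,\delta_j-\delta_{\zv},\,-\Delta_\Lambda^{-1}(\delta_j-\delta_{\zv})\,\rangle\Big)\cdot Q_j,
\end{equation*}
where $Q_j\in(0,1]$ is a ``vortex'' correction: a shifted theta-function ratio for the integer cycle lattice, equivalently a correlation $Q_j=\mathbb E_{\mathrm{vortex}}[\cos\langle v,\omega_j\rangle]$ in the lattice Coulomb gas on $\Lambda^*$ (inverse temperature $\propto\beta$), with $\omega_j$ the discrete solid angle subtended at a face by a geodesic $\gamma$ from $j$ to $\zv$.

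\textbf{Step 2 (the spin-wave factor).} The prefactor is precisely the DGFF two-point function: it is \eqref{eq:DGFFRes} with imaginary coupling $f=i(\delta_j-\delta_{\zv})$. So it suffices to prove $\langle\delta_j-\delta_{\zv},-\Delta_\Lambda^{-1}(\delta_j-\delta_{\zv})\rangle\le C\log(\dist(j,\zv)+1)$ uniformly over square domains -- the classical logarithmic growth of the Green function of a planar domain, which follows from elementary random-walk estimates (expected returns to $j$ before hitting $\zv$, i.e.\ before exiting a ball of radius $\asymp\dist(j,\zv)$). This already gives a factor $\ge(\dist(j,\zv)+1)^{-C/2\beta}$, which is the asserted power law; it remains to see that $Q_j$ does not spoil it.

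\textbf{Step 3 (the vortex factor -- the main obstacle).} One must show $Q_j\ge(\dist(j,\zv)+1)^{-c/\beta}$ uniformly in $L,j$ for $\beta>\beta_0$ -- in fact $Q_j$ should be bounded below by a constant close to $1$, since the Coulomb gas is dilute at large $\beta$ (vortices occur only in tightly bound neutral pairs), so $\langle v,\omega_j\rangle$ is typically $O(1)$ and $\cos\langle v,\omega_j\rangle\approx1$. This is the delicate part, and it is \emph{not} a black-box consequence of Theorem~\ref{thm:IV}: a second Poisson summation (the sine--Gordon transform) rewrites $Q_j$ as a ratio of integer-valued DGFF partition functions on $\Lambda^*$ at inverse temperature $1/\beta<\beta_0$, twisted by the disorder field $\omega_j$, and feeding \eqref{eq:IVThm1} into this yields only $Q_j\ge\exp\!\big(-\tfrac{\varepsilon}{2(1+\varepsilon)\beta}\cdot(\text{Dirichlet energy of }\omega_j\text{ on }\Lambda^*)\big)$; but $\omega_j$ is a step function across $\gamma$, so its Dirichlet energy is of order $\dist(j,\zv)$, and this decays exponentially in $\dist(j,\zv)$ -- far too weak. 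What is genuinely needed is the Fr\"ohlich--Spencer multiscale (renormalization-group) control of the low-temperature Coulomb gas -- the same machinery that underlies Theorem~\ref{thm:IV} -- applied directly to the vortex activities (or, equivalently, a bound on Coulomb-gas partition functions with external charges established along the way). Granting such control, a second-moment estimate $\mathbb E_{\mathrm{vortex}}[\langle v,\omega_j\rangle^2]\le2-\delta$ -- using boundedness of $\omega_j$ and the controlled charge fluctuations across $\gamma$ -- gives $Q_j\ge1-\tfrac12\mathbb E_{\mathrm{vortex}}[\langle v,\omega_j\rangle^2]\ge\delta/2$.

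\textbf{Step 4 (conclusion).} Multiplying the bounds of Steps 2 and 3 and using $\dist(j,\zv)\ge1$ for $j\neq\zv$ (the claim being trivial for $j=\zv$), one absorbs the residual constant into the exponent to obtain $\mathbb E^{\Vil}_{\beta,\Lambda_L^{\zero}}[\cos\theta_j]\ge(\dist(j,\zv)+1)^{-C'/\beta}$ for all $\beta>\beta_0$. Steps 1, 2 and 4 are routine; I expect Step 3 -- the dilute-vortex-gas estimate -- to carry essentially all the analytic content of the proof.
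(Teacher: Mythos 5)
Your overall architecture matches the paper's: a Poisson-summation duality converting the Villain correlation into a dual integer-valued surface observable (Proposition~\ref{prop:VilIVRel}), then a spin-wave/vortex-type splitting, with the spin-wave factor giving the power law via Green function estimates (Claim~\ref{claim:GreenId1}) and the vortex factor requiring the full Fr\"ohlich--Spencer renormalization (Theorem~\ref{thm234}). You also correctly diagnose that a na\"ive black-box application of Theorem~\ref{thm:IV} is exponentially too weak because the Dirichlet energy of the disorder field $\omega_j$ is linear in $\dist(j,\zv)$.

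The gap is in Step 3, where you replace the required analysis with a ``second-moment estimate'' $Q_j \geq 1 - \tfrac{1}{2}\mathbb{E}_{\mathrm{vortex}}[\langle v,\omega_j\rangle^2] \geq \delta/2$, and this does not go through for two distinct reasons. First, the vortex measure produced by Poisson summation is a \emph{signed} measure (the activities are not of definite sign), so the inequality $\mathbb{E}[\cos X] \geq 1 - \tfrac{1}{2}\mathbb{E}[X^2]$ does not apply; obtaining a genuine probability measure to which a Jensen-type argument applies is precisely the content of Theorem~\ref{thm234} (the positivity \eqref{eq:ZBoundThm234} of the coefficients $z(\beta,\varrho,\ens)$), and that cannot be granted for free. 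Second, even granting positivity, the bound $\mathbb{E}_{\mathrm{vortex}}[\langle v,\omega_j\rangle^2] \leq 2-\delta$ is false as stated: the number of dipoles crossing the geodesic $\gamma$ grows linearly in $\dist(j,\zv)$ and each contributes $O(1)$ to $\langle v,\omega_j\rangle$, so the raw second moment diverges. What rescues the argument is that $\langle v,\omega_j\rangle$ only matters modulo $2\pi$. The paper exploits this through Claim~\ref{newlemma55eqVil}: the gradients of $\sigma^{\Vil}$ are reduced mod $2\pi$ via $[\cdot]_{2\pi}$, and the Green function identities \eqref{eq:PerId1}--\eqref{eq:PerId2} combined with the integer-valuedness of $F_y$ show that the mod-$2\pi$-reduced Dirichlet energy $\sum_{j\thicksim\ell}[\sigma^{\Vil}_j-\sigma^{\Vil}_\ell]_{2\pi}^2$ equals $\langle \partial_1 F_y, (-\Delta_{2L}^{\mathrm p})^{-1}\partial_1 F_y\rangle$, which is logarithmic in $L-y_0$ by \eqref{eq:GreenId2}. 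This periodicity reduction is the specifically nontrivial algebraic ingredient missing from your proposal; without it, Step 3 does not close and the claimed constant lower bound on $Q_j$ is unjustified.
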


\subsection{Discrete Gaussian Free Field with Periodic Single-Site Weights}\label{sec:DGFF_periodic_weights}
To prove Theorem \ref{thm:IV} and Theorem \ref{thm:Vil} we consider the \emph{discrete Gaussian free field with periodic single-site weights model}, defined in \eqref{label:eq116} below. This model is the dual of the lattice Coulomb gas model via the Sine-Gordon or Siegert representation, as discussed in \cite{fs}.

We say that $\lambda:\mathbb{R}\rightarrow\mathbb{R}$ is a \textit{real, even, normalized trigonometric polynomial} if
$$
\lambda(\phi) = 1 + 2\sum_{q=1}^N \hat{\lambda}(q)\cos(q\phi),
$$
for some integer $N > 0$ and real $(\hat{\lambda}(q))$, $1\leq q \leq N$. For notational convenience, we set $\hat{\lambda}(q)=0$ for $q > N$.
For a finite, connected graph $\Lambda$, $\beta>0$, $v\in\Lambda$ and
$$
\lambda_\Lambda:=(\lambda_j)_{j\in\Lambda}
$$
real, even, normalized trigonometric polynomials, we define a, not necessarily positive, measure on $\phi:\Lambda\to\mathbb{R}$ by
\begin{equation} \label{label:eq116}
d\mu_{\beta, \Lambda, \lambda_\Lambda, v}(\phi) := \frac{1}{Z_{\beta, \Lambda, \lambda_\Lambda, v}} \prod_{j \in \Lambda} \lambda_j(\phi_j)\midaLambdaV,
\end{equation}
where
$$
Z_{\beta, \Lambda, \lambda_\Lambda, v} := \int \prod_{j \in \Lambda} \lambda_j(\phi_j)\midaLambdaV.
$$
It will be shown in Theorem~\ref{thm234} below that $Z_{\beta, \Lambda, \lambda_\Lambda, v}>0$ under the conditions of our main theorem, Theorem~\ref{thm11}, so that $\mu_{\beta, \Lambda, \lambda_\Lambda, v}$ is well defined. Corresponding to our previous notation, we denote by $\mathbb{E}_{\beta, \Lambda, \lambda_\Lambda, v}$ the integration against $\mu_{\beta, \Lambda, \lambda_\Lambda, v}$ operation, bearing in mind that $\mu_{\beta, \Lambda, \lambda_\Lambda, v}$ may not be a positive measure. We note that $\mu_{\beta, \Lambda, \underbar 1, v} = \mu_{\text{\tiny$\beta, \Lambda, v$}}^{\GFF}$, where $\underbar 1$ denotes the vector of polynomials which are identically equal to $1$.

Our main interest is in computing $\mathbb{E}_{\beta, \Lambda, \lambda_\Lambda, v}\left[ F(\phi) \right]$, for functionals $F:\mathbb{R}^{\Lambda}\rightarrow \mathbb{R}$, which are invariant to translation by a constant function (i.e., $F(m) = F(m + c)$, for a constant function $c:\Lambda\rightarrow\mathbb{R}$). Our choice of normalization, the requirement that $\phi_v \in [-\pi, \pi)$, implies that for such functionals, integrals are independent of the choice of $v$,
$$
\mathbb{E}_{\beta, \Lambda, \lambda_\Lambda, v_1}\left[ F(\phi) \right] =
\mathbb{E}_{\beta, \Lambda, \lambda_\Lambda, v_2}\left[ F(\phi) \right], \quad v_1,v_2\in\Lambda.
$$

\subsection{Main Theorem} \label{sec:thm11}

\begin{definition}[Sub-Gaussian Condition]
We say that $\lambda: \mathbb{R} \rightarrow \mathbb{R}$ is $\SG$, if $\lambda$ is a real, even, normalized trigonometric polynomial, and
\begin{equation} \label{eqCondLambda}
\abs{\hat\lambda(q)} \leq \Gamma \cdot \exp\left[\left(\eta + \frac{\theta}{\beta}\right)q^2\right], \quad q\ge 1.
\end{equation}
\end{definition}

\begin{thm} \label{thm11}
For any $\Gamma > 0$, $\eta\in\mathbb{R}$, $0 \leq \theta < 1/16$, and $\varepsilon > 0$, there exists $\beta_0 > 0$ such that the following holds.
Let $\Lambda$ be a square domain, with free or periodic b.c., $v\in\Lambda$, $\lambda_\Lambda=(\lambda_j)_{j\in\Lambda}$ be a collection of $\SG$ polynomials, $\beta < \beta_0$, and $f:\Lambda \rightarrow \mathbb{R}$ satisfy $\sum_{j \in \Lambda} f_j = 0$. Then

\begin{equation} \label{label:eq54}
\mathbb{E}_{\beta, \Lambda, \lambda_\Lambda,v}\big[ e^{\phiOf{f}} \big] \geq
\exp\left[ \frac{1}{2(1+\varepsilon)\beta}\langle f, -\Delta_\Lambda^{-1}f\rangle \right].
\end{equation}
%$$\mathbb{E}_{\beta, \Lambda, \lambda_\Lambda, v}\big[ \phiOf{f}^2 \big] \geq \frac{1}{2\beta}\langle f, -\Delta_\Lambda^{-1}f\rangle.$$
\end{thm}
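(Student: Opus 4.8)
\textbf{The plan} is to pass to the lattice Coulomb-gas (Sine--Gordon) representation of $\mu_{\beta,\Lambda,\lambda_\Lambda,v}$, reduce Theorem~\ref{thm11} to a lower bound --- uniform in $\Lambda$ and in $f$ --- on a relative Coulomb-gas partition function, and then carry out the real work: a multiscale estimate on that quantity. \emph{Step 1 (charge representation).} Writing each weight in Fourier form $\lambda_j(\phi)=\sum_{q\in\mathbb Z}\hat\lambda_j(q)e^{iq\phi}$ (a finite sum, with $\hat\lambda_j$ real and even and $\hat\lambda_j(0)=1$) and expanding the product one gets $\prod_{j\in\Lambda}\lambda_j(\phi_j)=\sum_{q\colon\Lambda\to\mathbb Z}\big(\prod_j\hat\lambda_j(q_j)\big)e^{i\langle\phi,q\rangle}$, so that, interchanging the finite sum with the Gaussian integral,
$$
Z_{\beta,\Lambda,\lambda_\Lambda,v}\,\mathbb E_{\beta,\Lambda,\lambda_\Lambda,v}\big[e^{\langle\phi,f\rangle}\big]
= Z^{\GFF}_{\beta,\Lambda,v}\sum_{q}\Big(\prod_j\hat\lambda_j(q_j)\Big)\,\mathbb E^{\GFF}_{\beta,\Lambda,v}\big[e^{\langle\phi,\,f+iq\rangle}\big].
$$
Because the Gaussian weight ignores the constant mode while $\phi_v$ is uniform on $[-\pi,\pi)$, integrating out the constant mode annihilates every term with $\sum_j q_j\neq0$; for a neutral $q$, using also $\sum_j f_j=0$, formula~\eqref{eq:DGFFRes} applies to the complex test function $f+iq$ and, expanding the quadratic form, gives $\mathbb E^{\GFF}_{\beta,\Lambda,v}[e^{\langle\phi,f+iq\rangle}]=\exp[\tfrac1{2\beta}\langle f,-\Delta_\Lambda^{-1}f\rangle]\exp[\tfrac i\beta\langle f,-\Delta_\Lambda^{-1}q\rangle-\tfrac1{2\beta}\langle q,-\Delta_\Lambda^{-1}q\rangle]$. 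Dividing by $Z_{\beta,\Lambda,\lambda_\Lambda,v}$ and using the $q\leftrightarrow-q$ symmetry of the activity $z(q):=\big(\prod_j\hat\lambda_j(q_j)\big)e^{-\frac1{2\beta}\langle q,-\Delta_\Lambda^{-1}q\rangle}$ to make the ratio real, one arrives at
$$
\mathbb E_{\beta,\Lambda,\lambda_\Lambda,v}\big[e^{\langle\phi,f\rangle}\big]=\exp\Big[\tfrac1{2\beta}\langle f,-\Delta_\Lambda^{-1}f\rangle\Big]\cdot R(f),\qquad R(f):=\frac{\sum_{q:\ \sum_j q_j=0}z(q)\cos\!\big(\tfrac1\beta\langle f,-\Delta_\Lambda^{-1}q\rangle\big)}{\sum_{q:\ \sum_j q_j=0}z(q)},
$$
the denominator being positive (that is Theorem~\ref{thm234}, which must be proved under the same hypotheses).

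\emph{Step 2 (reduction).} Since $\langle f,-\Delta_\Lambda^{-1}f\rangle\ge0$, Theorem~\ref{thm11} follows once we show $R(f)\ge\exp[-\tfrac{\varepsilon}{2(1+\varepsilon)\beta}\langle f,-\Delta_\Lambda^{-1}f\rangle]$. The natural route is to use $\cos x\ge1-\tfrac12x^2$ termwise, which reduces matters to an upper bound of the shape $\tfrac1{\beta^2}\big\langle\big(\langle f,-\Delta_\Lambda^{-1}q\rangle\big)^2\big\rangle_z\le\tfrac{\varepsilon'}{\beta}\langle f,-\Delta_\Lambda^{-1}f\rangle$ for the Coulomb-gas average with (a priori signed) weight $z$ --- i.e.\ the assertion that the charge--charge correlation of the gas, tested against $-\Delta_\Lambda^{-1}f$ on both sides, is only a small multiple of the free quadratic form (``the vortices shift the stiffness by an arbitrarily small amount when $\beta$ is small''). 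A more robust variant, which also pins down the sign of $z$, is to compare directly the Coulomb-gas sum carrying the phase $e^{\frac i\beta\langle f,-\Delta_\Lambda^{-1}q\rangle}$ with the one without it.

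\emph{Step 3 (the multiscale estimate --- the crux).} Controlling these Coulomb-gas sums uniformly in $\Lambda$ is where essentially all the difficulty sits, and here I would run the Fr\"ohlich--Spencer argument. Decompose $-\Delta_\Lambda^{-1}$ into a sum of positive-definite kernels attached to a dyadic hierarchy of scales and boxes (the objects behind the notation $\SqrsSet$, $\SqrsSetTag$), correspondingly break each neutral charge configuration into a hierarchy of ``dipole ensembles'' on the successive scales, and perform the sum over configurations scale by scale, eliminating the finest scale first. At each scale the estimate is an entropy-versus-energy balance: a cluster carrying total charge $Q$ (in absolute value) and spanning a scale-$\ell$ region costs Coulomb energy at least of order $\ell Q^2/\beta$ --- the energy $\langle q,-\Delta_\Lambda^{-1}q\rangle$ of a neutral unit pair at distance $r$ growing like $\log r$ with a sharp constant coming from $G(0)-G(e)=\tfrac14$ for the whole-plane Green function $G=-\Delta^{-1}$ --- whereas the number of such clusters is only exponential in $\ell Q$; the sub-Gaussian hypothesis $|\hat\lambda(q)|\le\Gamma e^{(\eta+\theta/\beta)q^2}$ with $\theta<1/16$ is exactly the margin that keeps the renormalized activity summable, and decaying, after each scale is removed --- the threshold $1/16$ being what survives once the factor $\tfrac12$ in the Boltzmann weight and the sharp Coulomb constant are propagated through the scale recursion. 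An induction over scales then shows that $R(f)$, and (taking $f=0$) the ratio $Z_{\beta,\Lambda,\lambda_\Lambda,v}/Z^{\GFF}_{\beta,\Lambda,v}$, differ from $1$ by an amount governed by $\beta$ and $\langle f,-\Delta_\Lambda^{-1}f\rangle$, uniformly in $\Lambda$ and $f$, the rate of this control fixing $\beta_0$ in terms of $(\Gamma,\eta,\theta,\varepsilon)$. I expect the entire burden to lie in Step 3 --- setting up the charge-ensemble decomposition, proving the per-scale energy lower bound with the correct constant, and closing the induction with volume-uniform bounds --- while Steps 1 and 2 are routine.
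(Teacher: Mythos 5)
Your high-level plan is the Fr\"ohlich--Spencer plan and Step~1 is correct: the paper reaches the same reduction by the change of variables $\phi\mapsto\phi+\sigma$ with $\sigma$ solving $-\Delta_\Lambda\sigma=f/\beta$, so that $\langle\sigma,q\rangle=\tfrac{1}{\beta}\langle f,-\Delta_\Lambda^{-1}q\rangle$ is exactly your phase and your $R(f)$ equals the ratio $Z_{\beta,\Lambda,\lambda_\Lambda,v}(\sigma)/Z_{\beta,\Lambda,\lambda_\Lambda,v}(0)$ of~\eqref{eq:ZOfSigmaDef}. Your Step~3 is a fair high-level mnemonic for the proof of Theorem~\ref{thm234} (Sections~\ref{sec:proofOverview}, \ref{sec:chap3}--\ref{sec:chap4}), and you rightly flag it as carrying almost all of the burden.

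The genuine gap is in Step~2, i.e.\ the deduction of Theorem~\ref{thm11} once the multiscale machinery (Theorem~\ref{thm234}) is available. Your ``natural route'' of applying $\cos x\ge 1-\tfrac{1}{2}x^2$ termwise fails for two independent reasons: the activities $z(q)=\bigl(\prod_j\hat\lambda_j(q_j)\bigr)e^{-\frac{1}{2\beta}\langle q,-\Delta_\Lambda^{-1}q\rangle}$ are signed, so a termwise lower bound on $\cos$ does not translate into a lower bound on the ratio $R(f)$; and even with positive weights, $1-\tfrac{1}{2}x^2$ is a merely linear lower bound, which becomes negative when $\langle f,-\Delta_\Lambda^{-1}f\rangle$ is of order $\beta$, whereas~\eqref{label:eq54} requires an exponential bound uniform in $f$. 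The paper resolves both at once. Theorem~\ref{thm234} rewrites $Z(\sigma)$ as a convex combination $\sum_\ens c_\ens Z_\ens(\sigma)$ in which each $Z_\ens(\sigma)$ integrates the manifestly positive product $\prod_{\varrho\in\ens}\bigl[1+z(\beta,\varrho,\ens)\cos(\langle\phi,\bar{\varrho}\rangle+\langle\sigma,\varrho\rangle)\bigr]$, $|z|<1/8$. Claim~\ref{lemma:doubleCos} then supplies the multiplicative pointwise estimate
\begin{equation*}
1+z\cos(x+y)\ \ge\ \exp\Big(-\frac{z\sin x\sin y}{1+z\cos x}-\DFr|z|y^2\Big)\,(1+z\cos x),
\end{equation*}
which splits an odd-in-$\phi$ factor from a purely $\sigma$-quadratic one; Jensen's inequality, applied to the probability measure $Z_\ens(0)^{-1}\prod_\varrho[1+z(\beta,\varrho,\ens)\cos\langle\phi,\bar{\varrho}\rangle]\,\midaLambdaV$, which is invariant under $\phi\mapsto-\phi$, kills the odd factor; and Claim~\ref{newlemma55eq} controls the remainder via the Dirichlet-form bound $\sum_\varrho|z(\beta,\varrho,\ens)|\langle\sigma,\varrho\rangle^2\le\tfrac{\beta}{D}\sum_{j\thicksim\ell}(\sigma_j-\sigma_\ell)^2$, which relies on the decay~\eqref{eq:ZBoundThm234} and the scale-disjointness property~\ref{enum:thm234PropD}. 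Your ``more robust variant'' gestures in this direction, but without the ensemble decomposition, the pointwise multiplicative bound, and the even/odd split combined with Jensen, the reduction in Step~2 does not close.
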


We note that in the Gaussian free field case, when $\lambda_\Lambda = \underbar 1$, the lower bound in the right-hand side of \eqref{label:eq54} is close to the exact expression \eqref{eq:DGFFRes} for small $\beta$.

As Theorem \ref{thm11} holds uniformly for sub-Gaussian trigonometric polynomials $\lambda_j$, $j\in\Lambda$, with given parameters $\Gamma, \eta, \theta$, it follows that its statement may be extended to $2\pi$-periodic $\lambda_j:\mathbb{R}\rightarrow\mathbb{R}$, $j\in\Lambda$ (or $2\pi$-periodic measures) obtained as suitable limits of such polynomials.
 We demonstrate this technique by deducing Theorem~\ref{thm:IV} from Theorem~\ref{thm11}, in Section~\ref{sec:proofIV}.
One may also use this technique to change the normalization $\phi_v\in[-\pi, \pi)$ in Theorem~\ref{thm11} to other normalization choices including the standard normalization $\phi_v = 0$.

\subsection{Renormalization Step}

We state here the main step in the proof of Theorem \ref{thm11}, starting with a few definitions.

\subsubsection{Densities, Ensembles and Charges}

The \emph{support} of a function $f:\Lambda \rightarrow \mathbb{R}$ is
$$
\supp f := \{j\in\Lambda: f_j \neq 0 \}.
$$
A \textit{(charge) density} is a function $\varrho:\Lambda \rightarrow \mathbb{Z}$ which is not identically zero. Its diameter is
$$
d(\varrho) := \max_{i,j\in\supp\varrho} \dist(i,j).
$$
An \emph{ensemble} is a finite (possibly empty) collection of charge densities whose supports are mutually disjoint.
For each density $\varrho$, let $j\in\supp\varrho$ be such that there exists $k\in\supp\varrho$ with $d(\varrho)=\dist(j,k)$, with $j$ chosen in some fixed arbitrary way when more than one option is available. Define
\begin{equation} \label{eq:DRhoDef}
D(\varrho)=D_j(\varrho):=\{\ell\in\Lambda: \dist(\ell,j) < 2d(\varrho)\},
\end{equation}
and say that $j$ is the \emph{center} of $D(\varrho)$.
Note that $\supp\varrho \subset D(\varrho)$ for densities with $d(\varrho)\geq 1$.

The \textit{charge} $Q(\varrho)$ of a density $\varrho$ is defined by
$$
Q(\varrho) := \sum_{j\in\Lambda}\varrho(j).
$$
A density $\varrho$ is called \textit{neutral} if $Q(\varrho)=0$; otherwise it is said to be \textit{charged} or \textit{non-neutral}.
Observe that $d(\varrho)\geq1$ for neutral $\varrho$. We also denote
$$
\norm{\varrho}_2 := \sqrt{\sum_{j\in\Lambda} \varrho(j)^2}.
$$

We remark that the above terminology originates from the Coulomb gas representation of the model.

\subsubsection{Renormalization Step Theorem}

%The next theorem states that for $\beta$ small enough, the partition function of $\mu_{\beta, \Lambda, \lambda_\Lambda, v}$ is equal to the partition function of a positive measure, dependent only on gradients of $\phi$.
%In order to achieve that, we change from singleton densities $\varrho_j(k) = \delta_j(k)$, to neutral densities.
%The downside of this change is that our densities might be with large support, but densities with large support has a smaller effect on the partition function, as opposed to densities with small support.

Let $f:\Lambda \rightarrow \mathbb{R}$ satisfy $\sum_{j \in \Lambda} f_j = 0$. The change of variables
$
\phi_j \rightarrow \phi_j + \sigma_j,
$
where $\sigma$ is defined in \eqref{eq:sigmaDef}, yields
\begin{equation} \label{eq:varChangeInE}
\begin{split}
\mathbb{E}_{\beta, \Lambda, \lambda_\Lambda, v}\big[ e^{\phiOf{f}} \big]
&=
\frac{1}{Z_{\beta, \Lambda, \lambda_\Lambda, v}} \int e^{\phiOf{f}}\prod_{j \in \Lambda} \lambda_j\big(\phi(j)\big)\midaLambdaV \\
&=
\frac{1}{Z_{\beta, \Lambda, \lambda_\Lambda, v}} \exp\left({\frac{1}{2\beta} \langle f, -\Delta_\Lambda^{-1}f\rangle}\right) \int \prod_{j \in \Lambda} \lambda_j\big(\phi(j) + \sigma(j)\big)\midaLambdaV.
\end{split}
\end{equation}
Our main challenge, therefore, is to analyse the integral on the right-hand side of \eqref{eq:varChangeInE}.
The next theorem expresses this integral as a convex combination of integrals arising from gradient fields with \emph{positive} interaction weights. This constitutes the main step in the proof of Theorem~\ref{thm11}.
\begin{thm} \label{thm234}
Let $\Gamma > 0$, $\eta\in\mathbb{R}$ and $0 \leq \theta < 1/16$.
There exist $\beta_0, c_1 > 0$ such that the following holds. Let $\Lambda$ be a square domain, with free or periodic b.c., let $v\in\Lambda$, let $\beta < \beta_0$, and let $\lambda_j: \mathbb{R} \rightarrow \mathbb{R}$, $j\in\Lambda$, be a collection of $\SG$ polynomials. Then there exist:
\begin{itemize}
  \item a finite collection of ensembles $\mathscr{F}$,
  \item positive $(c_\ens)$, $\ens\in\mathscr{F}$, summing to $1$,
  \item real $(\zed)$, $\varrho\in\ens$, $\ens\in\mathscr{F}$,
  \item functions $\bar{\varrho}:\Lambda\to\mathbb{R}$ for each $\varrho\in\ens$, $\ens\in\mathscr{F}$
\end{itemize}
such that for every $\sigma: \Lambda \rightarrow \mathbb{R}$,
\begin{equation}\label{eqObserve}
\int\prod_{j \in \Lambda} \lambda_j(\phi_j + \sigma_j)\midaLambdaV =
\sum_{\ens \in \mathscr{F}} c_\ens \int\prod_{\varrho \in \ens}[1 + \zed\cos(\phiOf{\bar{\varrho}} + \langle\sigma,\varrho\rangle)]\midaLambdaV,
\end{equation}
and the following properties are satisfied for every $\ens\in\mathscr{F}$:
\begin{enumerate}
\item All charge densities $\varrho\in\ens$ are neutral.
\item For every $\varrho\in\ens$,
%{label:eq53}
\begin{equation} \label{eq:ZBoundThm234}
\abs{\zed} \leq \exp\left[-\frac{c_1}{\beta}\Big(\norm{\varrho}_2^2+\log_2(d(\varrho) + 1)\Big)\right].
\end{equation}
\item \label{enum:thm234PropD}
For distinct $\varrho_1, \varrho_2 \in \ens$, if $d(\varrho_1), d(\varrho_2) \in [2^k-1,2^{k+1}-2]$, $k\geq 1$, then ${D(\varrho_1) \cap D(\varrho_2) = \emptyset}$.
\end{enumerate}
\end{thm}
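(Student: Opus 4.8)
The plan is to pass to the Sine--Gordon (Coulomb gas) representation of the left-hand side of \eqref{eqObserve}, to sort the resulting integer charge configurations by a dyadic length scale, and then to build $\mathscr F$ by a finite induction over the scales whose single step --- the renormalization step --- expands the charges at the current scale, \emph{separates} the well-spaced ones, and \emph{merges} the remaining ones into larger compound charges.

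\emph{Step 1 (Coulomb gas expansion).} Writing $\lambda_j(t)=\sum_{q\in\mathbb Z}\hat\lambda_j(\abs q)e^{iqt}$ with $\hat\lambda_j(0)=1$ and multiplying out,
\[
\prod_{j\in\Lambda}\lambda_j(\phi_j+\sigma_j)=\sum_{n\colon\Lambda\to\mathbb Z}\Big(\prod_{j\in\supp n}\hat\lambda_j(\abs{n_j})\Big)e^{i\langle n,\sigma\rangle}e^{i\langle n,\phi\rangle}.
\]
Under $\mu_{\beta,\Lambda,v}^{\GFF}$ the value $\phi_v$ is uniform on $[-\pi,\pi)$ and independent of the gradient field, so $\int e^{i\langle n,\phi\rangle}\,\midaLambdaV$ vanishes unless $Q(n)=0$ and otherwise equals $\exp(-\tfrac1{2\beta}\langle n,-\Delta_\Lambda^{-1}n\rangle)$ --- by \eqref{eq:DGFFRes} for neutral $n$, and by the uniformity of $\phi_v$ for non-neutral $n$. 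Hence \eqref{eqObserve} reduces to an identity between two $\sigma$-generating functions: the left side is a Coulomb-gas sum over nonzero neutral $n$ (paired with $-n$, since $\hat\lambda_j$ is even), with activity $\prod_{j\in\supp n}\hat\lambda_j(\abs{n_j})\cdot e^{-\frac1{2\beta}\langle n,-\Delta_\Lambda^{-1}n\rangle}$ attached to $\cos\langle n,\sigma\rangle$, while the right side is, after the same manipulations, the analogous sum in which the frequencies $n$ are the signed sums $\sum_{\varrho\in\ens'}\pm\varrho$ over sub-ensembles $\ens'\subseteq\ens$, weighted by $c_\ens$, by products $\prod_{\varrho\in\ens'}\zed$, and by the self-energy of $\sum_{\varrho\in\ens'}\pm\bvr$. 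The task is to choose the ensembles, the weights and the $\bvr$'s so that the two agree; that we need only match these $\sigma$-functions, not the $\phi$-integrands, is what lets us add or delete non-neutral frequencies and lets $\bvr$ differ from $\varrho$.

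\emph{Step 2 (multi-scale renormalization).} Call a density at scale $k\ge 1$ when $d(\varrho)\in[2^k-1,2^{k+1}-2]$, so property~\ref{enum:thm234PropD} asks densities at a common scale to have disjoint $D(\cdot)$. We construct $\mathscr F$ by induction over $k=1,2,\dots$ up to the largest scale present in $\Lambda$. The inductive state is a finite convex combination representing the left side of \eqref{eqObserve}, each of whose terms reads $\int\prod_{\varrho\in\ens}[1+z\cos(\langle\phi,\bvr\rangle+\langle\sigma,\varrho\rangle)]\cdot\big(\text{as-yet-unprocessed factor}\big)\,\midaLambdaV$, where the $\varrho\in\ens$ are neutral, mutually separated within each scale $<k$, and already obey \eqref{eq:ZBoundThm234}, while the unprocessed factor collects the neutral configurations supported at scales $\ge k$. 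To pass from $k-1$ to $k$: expand the scale-$k$ part; select a maximal separated sub-collection of the scale-$k$ charges; and merge each remaining scale-$k$ charge --- together with any lower-scale charges reaching into its $D$-neighbourhood --- onto the selected charge it is near, producing a compound charge of diameter $\Theta(2^k)$ whose squared $\ell^2$-norm is the sum of those of its constituents. Each of the finitely many merging patterns is retained with a positive combinatorial weight; this is the origin of the $(c_\ens)$, positive and summing to $1$. The merged $\phi$- and $\sigma$-modes are reassembled with $\cos a\cos b=\tfrac12\cos(a+b)+\tfrac12\cos(a-b)$, which is why the density $\bvr$ that couples to $\phi$ (a sum of constituent densities, carrying the effective self-energy) need not equal the charge $\varrho$ that couples to the external field. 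After all scales are processed one is left with \eqref{eqObserve}, neutrality of the $\varrho$'s, and property~\ref{enum:thm234PropD}.

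\emph{Step 3 (activity bound) and the main obstacle.} The per-charge estimate \eqref{eq:ZBoundThm234} rests on the energy lemma that \emph{every} neutral density satisfies $\langle\varrho,-\Delta_\Lambda^{-1}\varrho\rangle\ge\tfrac18\norm{\varrho}_2^2+c\log_2(d(\varrho)+1)$ with $c>0$ absolute and uniform over square domains: the first term is the spectral bound $-\Delta_\Lambda^{-1}\succeq\tfrac18 I$ on the orthocomplement of constants (every vertex of a square domain has degree at most $4$), the second the Coulomb cost of displacing a unit of charge across distance $d(\varrho)$. Together with $\abs{\hat\lambda_j(q)}\le\Gamma e^{(\eta+\theta/\beta)q^2}$ and $\abs{\supp\varrho}\le\norm{\varrho}_2^2$, this bounds the bare activity by $e^{O(\norm{\varrho}_2^2)}\,e^{-(\frac1{16}-\theta)\norm{\varrho}_2^2/\beta}\,e^{-\frac{c}{2\beta}\log_2(d(\varrho)+1)}$, which for $\theta<1/16$ and $\beta_0$ small is at most the right-hand side of \eqref{eq:ZBoundThm234} --- this is precisely where the numerical hypothesis $\theta<1/16$ enters. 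The genuine difficulty is carrying out the reorganization of Step~2 so that (a) the weights $(c_\ens)$ come out nonnegative, which requires a combinatorial/probabilistic reading of the merging procedure, and (b) the compound charges and the renormalized activities $\zed$ still satisfy \eqref{eq:ZBoundThm234}. For (b) the obstruction is screening: a dense cluster of small-scale charges has a combined Coulomb energy that cannot be controlled term by term, so it must be summed out scale by scale (smallest first), and one must show this summation does not over-screen --- that the cross-interactions created by merging and by integrating out the lower scales stay below the $\tfrac18\norm{\varrho}_2^2+c\log_2(d(\varrho)+1)$ budget of each surviving charge, uniformly in $\Lambda$. Once this stability estimate is secured, the remaining ingredients --- how many scale-$k$ charges can crowd near a point, and the elementary geometry of the sets $D(\varrho)$ --- are routine, and free versus periodic boundary conditions enter only through the uniformity of the energy lemma.
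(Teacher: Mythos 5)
Your overall scheme is in the right spirit---multi-scale renormalization of a Coulomb-gas expansion, separation and merging of charges, activity bounds with the $\theta<\tfrac1{16}$ hypothesis entering exactly as you say, and the correct use of the $\phi_v\in[-\pi,\pi)$ normalization to kill non-neutral frequencies. But there are two genuine gaps, one of them structural. First, the energy lemma you posit, $\langle\varrho,-\Delta_\Lambda^{-1}\varrho\rangle\ge\frac18\norm{\varrho}_2^2+c\log_2(d(\varrho)+1)$, is too weak to absorb the combinatorial factors produced by the merging in your Step~2. Each merge comes with an $O(1)$ multiplicative factor on the activity (the $3$'s in the paper's identity \eqref{label:eq213}), so after the full cascade the prefactor grows like $e^{D\cdot A(\varrho)}$ where $A(\varrho)$ counts squares of side $2^k$ covering $\supp\varrho$ summed over \emph{all} dyadic scales $k\le O(\log d(\varrho))$---this is Proposition~\ref{prop:ABound}, and $A(\varrho)$ is in general much larger than $\log_2(d(\varrho)+1)$ (consider a Cantor-like multi-scale arrangement of dipoles). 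The energy gain per surviving charge must therefore also be $\gtrsim\frac1\beta\sum_k|\mathscr{S}^{\sep}_k(\varrho)|$, not merely $\gtrsim\frac1\beta\log_2(d(\varrho)+1)$; this is exactly what the paper's spin-wave construction (Propositions~\ref{prop:aExistance} and \ref{prop:a_s_construction}) delivers, and it is precisely tuned to the bound $A(\varrho)\le D_1\bigl(|\mathscr{S}_0(\varrho)|+\sum_k|\mathscr{S}^{\sep}_k(\varrho)|\bigr)$.

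Second, and more fundamentally, you correctly name the obstacle---cross-interactions/screening among charges at different scales---but then write ``once this stability estimate is secured, the remaining ingredients are routine,'' without securing it. This is not a detail: it is the heart of the argument. The paper resolves it not by an a-priori stability estimate for the diagonal Coulomb energy, but by the complex-translation/spin-wave device: for each neutral $\varrho\in\ens$ one constructs $a_{\varrho,\ens}$ which is \emph{constant} on $D^{+}(\varrho')$ for every smaller $\varrho'$ nearby (Proposition~\ref{prop:aExistance}, item~\ref{enum1:aExistance}), so that $\langle a_\varrho,\varrho'\rangle=\langle a_\varrho,\Delta_\Lambda a_{\varrho'}\rangle=0$ and all cross-terms in $E_\beta$ vanish \emph{exactly}, and one translates $\phi\mapsto\phi+ia_\varrho$ to extract $e^{-E_\beta(a_\varrho,\varrho)}$ per charge. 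This is also what produces the $\bar\varrho=\varrho+\beta\Delta_\Lambda a_{\varrho,\ens}$ in the theorem statement, which your proposal leaves unspecified. Your alternative of integrating out $\phi$ first and matching $\sigma$-generating functions is not obviously wrong, but it reintroduces the very cross-terms the spin waves are designed to kill, and you have not exhibited a mechanism for controlling them. Relatedly, positivity of the weights $c_\ens$ (your point (a)) is asserted via ``a combinatorial/probabilistic reading'' but not demonstrated; the paper's proof of Theorem~\ref{thm:21} obtains it manifestly from the four-term convex identity \eqref{label:eq213}, iterated across scales, which is a concrete ingredient your proposal is missing.
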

We remark that the proof also shows that for each $\varrho\in\ens$, $\ens\in\mathscr{F}$, the function $\bar{\varrho}$ is supported within $D(\varrho)$ and is neutral in the sense that $\sum_j \bar{\varrho}_j=0$, but this will not be used in the sequel.

The advantage of the representation given by \eqref{eqObserve} is that each term in the product on the right-hand side is explicitly positive, by \eqref{eq:ZBoundThm234}, and involves only the gradients of $\sigma$, since each $\varrho$ is neutral. Moreover, terms in which the density $\varrho$ has large diameter or norm are damped by \eqref{eq:ZBoundThm234} and thus, in a sense, the main contribution to the right-hand side comes from densities $\varrho$ with small diameter and norm.

In Section \ref{sec:proofOverview} we present an overview of the proof of Theorem \ref{thm234}, and in Sections \ref{sec:chap3}, \ref{sec:chap2}, \ref{sec:chap4} we elaborate on the more technical parts of the proof. Theorem~\ref{thm11} is deduced from Theorem~\ref{thm234} in Section~\ref{sec:proof11}.

\subsection{Main Differences from the Paper of Fr\"ohlich-Spencer} \label{sec:diffs_from_fs}
The proof presented in this paper is essentially the proof given by Fr\"ohlich and Spencer \cite{fs} for showing the delocalization of the two-dimensional integer-valued Gaussian free field at high temperature and the existence of the Berezinskii-Kosterlitz-Thouless (BKT) transition for the two-dimensional plane rotator model with Villain interaction.
The paper \cite{fs} includes additional results, omitted here, for other two-dimensional statistical physics models: Delocalization results are established for integer-valued random surfaces with certain non-Gaussian interaction functions, such as the Solid-On-Solid model, which allows to conclude the existence of the BKT transition for the standard plane rotator model (XY model) and for the $\mathbb{Z}_n$ clock models at intermediate temperature ranges.
The paper \cite{fs} also includes a discussion of the duality relation of the lattice Coulomb gas with measures of the type $\mu_{\beta, \Lambda, \lambda_\Lambda, v}$, defined in \eqref{label:eq116}, by the so-called Sine-Gordon or Siegert representation \cite{fr76, fs81, sig60}, and proves the absence of Debye screening in the Coulomb gas at \emph{fixed} activity $z$ and low temperature.
In this regard we mention that later works by Dimock-Hurd \cite{dh00}, Marchetti-Klein \cite{mk91} and Falco \cite{f12,f13} studied the behaviour of the lattice Coulomb gas at \emph{low} activity $z$ and extended the results of \cite{fs} up to the conjectured threshold $\beta = 8\pi$.

The proofs in \cite{fs} are conducted for the infinite-volume measures. For instance, when discussing random surfaces, one starts with the infinite-volume gradient Gibbs state of the discrete Gaussian free field and requires it to be integer-valued, or adds a periodic weight to it, on a finite subset of the lattice. In our treatment we chose to define all measures in finite volume (with the obtained estimates uniform in the volume) as we hope that this reduces the technical prerequisites required on the side of the reader. This choice leads to the less standard normalization condition that $\phi_v\in [-\pi, \pi)$ in our definition of the discrete Gaussian free field (Section~\ref{sec:DGFF}), but then has the benefit of allowing to eliminate the non-neutral densities (see Section~\ref{sec:removing_non_neutral}) produced by the renormalization construction in Theorem~\ref{thm:21} at an earlier stage than in \cite{fs}. In addition, we provide Theorem~\ref{thm11} in which different periodic weights are allowed at different lattice sites. This allows to pass to the more standard $\phi_v=0$ normalization and provides flexibility in taking infinite-volume limits (e.g., putting an integer-valued constraint on a finite subset of the lattice and removing all other weights in order to recover the setting of \cite{fs}).

We have included the somewhat technical statement of Theorem~\ref{thm234} already in the introduction, as it forms the central ingredient in the proof of the main theorem, with its proof occupying the bulk of this paper. Theorem~\ref{thm234} provides an \emph{equality} which, as such, may be of use in deducing further properties of the integer-valued, or otherwise periodically weighted, discrete Gaussian free field. We have also included a detailed proof overview for Theorem~\ref{thm234} in Section~\ref{sec:proofOverview}, which breaks the proof into several intermediate steps, in an attempt to make the argument more accessible.

Our treatment differs somewhat from \cite{fs} in the presentation and use of the spin wave functions. We have chosen to construct and highlight the properties of a single spin wave for each charge density (see Proposition~\ref{prop:aExistance}). The use of complex translation with the constructed spin waves in Section~\ref{sec:complex_translations_by_spin_waves} is then correspondingly modified from the treatment in \cite{fs}.

We have included the statement and proof of Proposition~\ref{prop:IVUpperBound}, which bounds the fluctuations of the integer-valued discrete Gaussian free field from \emph{above}, as it highlights the near-tightness of Theorem~\ref{thm:IV} at high temperature.

On the more technical side, a significant change to the notation is that our $\beta$ corresponds to $\frac{1}{\beta}$ in \cite{fs} in the discussion of random surface models. The reason for this is that the $\beta$ in \cite{fs} corresponds to the inverse temperature parameter in the Coulomb gas model which is in duality (which maps $\beta$ to $\frac{1}{\beta}$) with the discrete Gaussian free field with periodic weights. As our focus is on the latter model we preferred to use its natural inverse temperature notation. Our treatment of the random surface models in Theorem~\ref{thm:IV} and Theorem~\ref{thm11} allows for free or periodic boundary conditions. These are also the allowed boundary conditions in the main part of \cite{fs} but an explanation of the required modifications to handle zero boundary conditions appears there in Appendix D.

\section{Theorem \ref{thm234} - Proof Overview} \label{sec:proofOverview}

In this section we explain the derivation of Theorem~\ref{thm234}, proving some of the required statements, and leaving some for later sections.

Fix $\Lambda$ a square domain, with free or periodic b.c., $v\in\Lambda$, $\beta>0$, and $\lambda_j:\mathbb{R}\rightarrow\mathbb{R}$, $j\in\Lambda$ real, even, normalized trigonometric polynomials, for the rest of the section.

Fix also
\begin{equation} \label{eq:alphaMBounds}
3/2 < \alpha < 2, \quad M = 2^{16}.
\end{equation}
%TODO add refs
We remark that the value of $\alpha$ is important in works studying the behaviour of the Coulomb gas at low activity $z$, see Section~\ref{sec:diffs_from_fs}.

For two charge densities $\varrho_1, \varrho_2$, denote
$$
\dist(\varrho_1, \varrho_2) := \min_{\substack{i\in\supp\varrho_1 \\ j \in\supp\varrho_2}} \dist(i,j).
$$

\subsection{Square-Covering of Densities}

We start by defining a $2^k\times 2^k$ square, $k\geq 0$.
If $\Lambda = \Lambda_L^{\free}$, we say that $s\subset\Lambda_L^{\free}$ is a $2^k\times 2^k$ square if
$$
s = \{(c,d)\in \Lambda_L^{\free}: c-a, d-b \in \{0,1,\ldots,2^k-1 \} \} ,
$$
for some $(a,b)\in \Lambda_L^{\free}$.
If $\Lambda = \Lambda_L^{\per}$, we say that $s\subset\Lambda_L^{\per}$ is a $2^k\times 2^k$ square, if
$$
s = \{(c,d)\in \Lambda_L^{\per}: (c-a)\mod L, (d-b)\mod L \in \{0,1,\ldots,2^k-1 \} \},
$$
for some $(a,b)\in \Lambda_L^{\per}$. We also assume that a $2^k\times 2^k$ square $s$ contains $2^{k+1}$ points, unless $2^k > L$, in which case $s = \Lambda$.

We define a notion of a cover of a density $\varrho$, at different length scales. For any integer $k \geq 0$, let $\SqrsSet$ be a minimal collection of $2^k\times2^k$ squares covering the support of $\varrho$.
By minimal we mean that $\SqrsSet$ is chosen such that its cardinality, $\abs{\SqrsSet}$, is minimal.
The precise choice of $\SqrsSet$, when more than one minimal collection exists, is described later, in Section \ref{sec:proofOfSW}, and is only of relevance there.
Note that $\abs{\SqrsSetOf{0}{\varrho}} = \abs{\supp\varrho}$, and that $\abs{\SqrsSet} = 1$ for all $k \geq \log_2 (d(\varrho)+1)$.

Denote also
\begin{equation} \label{label:eq27}
A(\varrho) := \sum_{k=0}^{n(\varrho)} \abs{\SqrsSet}\text{ when $d(\varrho)\ge 1$ and $A(\varrho) := 0$ when $d(\varrho)=0$},
\end{equation}
where
\begin{equation} \label{label:eqNvarrho}
n(\varrho) := \lceil \log_2(M\cdot d(\varrho)^\alpha) \rceil.
\end{equation}

Define $\SqrsSetTag$, $k\geq 1$, as follows. If $\abs{\SqrsSet}=1$ then $\SqrsSetTag:=\emptyset$.
Otherwise define $\SqrsSetTag$ to be the sub-collection of those squares $s^\prime$ in $\SqrsSet$ which are separated from all other squares in $\SqrsSet$, as made precise by
\begin{equation} \label{label:eq33}
\SqrsSetTag := \big\{ s\in\SqrsSet: \dist(s^{\prime}, s)\geq 2M2^{\alpha (k + 1)} \ \ \ \forall s^{\prime}\in\SqrsSet \setminus \{s\} \big\}.
\end{equation}
Note that the cardinality $\abs{\SqrsSetTag}$ may depend on the exact choice of $\SqrsSet$, when more than one such minimal cover is possible.

The next proposition shows that the size of the cover $(\SqrsSet)$, as measured by $A(\varrho)$, is controlled by the sizes of $(\SqrsSetTag)$ and the diameter $d(\varrho)$.
\begin{prop} \label{prop:ABound}
There exists a positive absolute constant $\DOne$, such that for any density $\varrho$,
\begin{equation} \label{eq:ABounds}
\log_2(d(\varrho) + 1) \leq A(\varrho) \leq \DOne \cdot \left(\abs{\SqrsSetOf{0}{\varrho}} + \sum_{k=1}^{\infty} \abs{\SqrsSetTag}\right).
\end{equation}
\end{prop}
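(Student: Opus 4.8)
The lower bound $\log_2(d(\varrho)+1) \le A(\varrho)$ is the easy half. Since $\abs{\SqrsSet} \ge 1$ for every $k$ with $2^k \le d(\varrho)$ (one $2^k\times 2^k$ square cannot cover two points at distance $d(\varrho)$ when $2^k \le d(\varrho)$, or more carefully when $2^k-1 < d(\varrho)$), and since $n(\varrho) = \lceil \log_2(M d(\varrho)^\alpha)\rceil \ge \log_2(d(\varrho)+1)$ for $d(\varrho)\ge 1$ (as $M = 2^{16}$ and $\alpha > 1$), the sum $\sum_{k=0}^{n(\varrho)} \abs{\SqrsSet}$ has at least $\lceil \log_2(d(\varrho)+1)\rceil$ terms each of size $\ge 1$. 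That gives the first inequality; the case $d(\varrho) = 0$ is trivial since both sides are $0$.

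For the upper bound, the plan is to control $\abs{\SqrsSet}$ for each scale $k$ and sum. The first observation is a doubling/monotonicity estimate: passing from scale $k$ to scale $k+1$, four squares of side $2^k$ fit inside one of side $2^{k+1}$, so $\abs{\SqrsSetOf{k+1}{\varrho}} \le \abs{\SqrsSet}$, and in fact once $\abs{\SqrsSet}$ drops to a constant it stays bounded; more usefully, $\abs{\SqrsSet}$ is a non-increasing (up to a bounded factor) function of $k$ that eventually equals $1$. Summing a non-increasing sequence over $k$ from $0$ to $n(\varrho)$ is dominated by a telescoping-type argument, but that alone only bounds $A(\varrho)$ by roughly $\abs{\SqrsSetOf{0}{\varrho}}\log d(\varrho)$, which is too weak. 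The real point must be to relate $\abs{\SqrsSet}$ to the separated sub-collections $\abs{\SqrsSetTagOf{k'}{\varrho}}$ at nearby scales $k'$.

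The key structural step — and the main obstacle — is the following clustering dichotomy. Fix a scale $k$. Each square $s \in \SqrsSet$ either belongs to $\SqrsSetTag$ (it is separated by distance $\ge 2M2^{\alpha(k+1)}$ from all other squares of the cover), or it is close to some other square of the cover. In the latter case, at a somewhat larger scale $k'' > k$ (with $k''- k$ bounded by an absolute constant chosen so that $2^{k''} \gtrsim M 2^{\alpha(k+1)}$, which is possible because $\alpha < 2$ keeps the gap $O(1)$... wait, $\alpha<2$ gives $2^{\alpha(k+1)} \le 2^{2k+2}$, so one needs roughly $k'' \approx 2k$, not a bounded gap — so the gap is linear in $k$, not constant), several such clustered squares get merged into one square of $\SqrsSetOf{k''}{\varrho}$. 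The counting must be set up so that each square counted in $\abs{\SqrsSet}$ is either charged to a square of $\SqrsSetTag$ (at its own scale) or to a genuine reduction in cover size happening at scale $k''$; summing the reductions telescopes and is controlled by $\abs{\SqrsSetOf{0}{\varrho}}$, while the separated squares contribute $\sum_k \abs{\SqrsSetTag}$ directly. I would make this precise by an inductive/recursive argument on $k$ decreasing from $n(\varrho)$: write $\abs{\SqrsSet} \le C\abs{\SqrsSetOf{k'}{\varrho}} + \abs{\SqrsSetTag} + (\text{error from near-but-not-merged squares})$, and iterate. The delicate bookkeeping is ensuring the merging scale $k''(k)$ is chosen consistently so that the telescoping does not double-count, and that the constant $\DOne$ comes out absolute (independent of $\varrho$ and $\Lambda$); I expect this to require the precise form of the separation threshold $2M2^{\alpha(k+1)}$ in \eqref{label:eq33} and the choice $n(\varrho) = \lceil\log_2(M d(\varrho)^\alpha)\rceil$, which guarantees that at the top scale $n(\varrho)$ the relevant squares are already all separated (indeed $\abs{\SqrsSetOf{n(\varrho)}{\varrho}}$-related quantities become trivial), giving the base case of the induction.
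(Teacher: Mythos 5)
Your lower-bound argument is essentially the paper's: $A(\varrho) = \sum_{k=0}^{n(\varrho)} \abs{\SqrsSet}$ has $n(\varrho)+1$ summands, each at least $1$, and $n(\varrho)+1 \ge \log_2(d(\varrho)+1)$. Fine.

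For the upper bound, your strategy is the right one — a recursion over scales based on the dichotomy between separated squares (going into $\SqrsSetTag$) and clustered squares (merging into a single square at a larger, multiplicatively related scale), with the observation that $\alpha<2$ and the multiplicative scale relation make the bookkeeping close. But your proposal leaves unaddressed the two quantitative ingredients without which the iteration produces nothing.

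First, your recursion is written as $\abs{\SqrsSet} \le C\abs{\SqrsSetOf{k'}{\varrho}} + \abs{\SqrsSetTag} + (\text{error})$ with an unspecified constant $C$. For the iteration to terminate in a bound proportional to $\abs{\SqrsSetOf{0}{\varrho}}$ rather than to $\abs{\SqrsSetOf{0}{\varrho}}$ times something divergent, you need the coefficient on the smaller-scale cover to be \emph{strictly less than one}. The paper gets exactly $\tfrac12$: setting $b:=\alpha+3+\log_2 M$ and $\gamma(k):=\lfloor\alpha^{-1}(k-b)\rfloor$, it proves
\[
\abs{\SqrsSet} \le \tfrac12\abs{\SqrsSetGamma} + \abs{\SqrsSetTagGamma}
\qquad\text{whenever }\abs{\SqrsSetGamma}\ge 2.
\]
The $\tfrac12$ comes from a small combinatorial claim (the paper's Claim~\ref{cl:graph_cover}): form a graph on the non-separated squares of $\SqrsSetGamma$ joining two when they are within distance $2M2^{\alpha(\gamma(k)+1)}$; it has no isolated vertices, so its vertex set can be partitioned into connected pairs and triples, at most $\lfloor n/2\rfloor$ of them; each such pair/triple has diameter $<2^k$ and is covered by one square of side $2^k$. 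Nothing in your write-up pins down this halving, and without it "summing the reductions telescopes" does not actually produce a sum bounded by $\abs{\SqrsSetOf{0}{\varrho}}$.

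Second, iterating the halving $\ell(k)$ times yields $\abs{\SqrsSet}\le 2^{-\ell(k)}\abs{\SqrsSetOf{0}{\varrho}}+\sum_{m=0}^{\ell(k)-1}2^{-m}\abs{\SqrsSetTagOf{\gamma^{m+1}(k)}{\varrho}}$, and you then sum over $k$ from $0$ to $n(\varrho)$. For the second (double) sum you must reorganize by the target scale $j$ and count the multiplicity $\abs{N_{m,j}}=\abs{\{k:\gamma^m(k)=j\}}$. Since $\gamma$ contracts by a factor $\approx\alpha$, one gets $\abs{N_{m,j}}\lesssim\alpha^m$, and the inner sum becomes $\sum_m 2^{-m}\abs{N_{m+1,j}}\lesssim\sum_m(\alpha/2)^m$, which converges \emph{precisely because} $\alpha<2$. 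You do sense that $\alpha<2$ matters ("keeps the gap $O(1)$" — that isn't right; the gap $k-\gamma(k)$ grows linearly) but the actual mechanism is the convergence of the geometric series $(\alpha/2)^m$ arising from halving-versus-multiplicity, and this does not appear in your plan. The first sum $\sum_k 2^{-\ell(k)}$ also needs a lower bound on $\ell(k)$ of order $\log k/\log\alpha$, which follows from the explicit definition of $\gamma$.

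In short: your architecture matches the paper's, but the $\tfrac12$ (via the graph pairing claim) and the $(\alpha/2)^m$ multiplicity estimate are precisely the pieces that turn your plan into a proof, and both are missing.
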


Note that the sum in the right-hand side of \eqref{eq:ABounds} is finite, since $\abs{\SqrsSet}=1$ for $k > \log_2 d(\varrho)$, and therefore $\abs{\SqrsSetTag}=0$ for such $k$.

\subsection{Expanding the Weights $\prod_{j \in \Lambda} \lambda_j(\phi(j) +\sigma(j))$ as a Convex Combination of Weights of Neutral Densities}

In the first step in the proof of Theorem \ref{thm234}, we use the following theorem, with $\psi$ replaced by $\phi+\sigma$.
\begin{thm} \label{thm:21}
There exists a positive absolute constant $\DTwo$, a finite collection of ensembles $\mathscr{F}$, positive $(c_\ens)$, $\ens\in\mathscr{F}$, summing to $1$ and real $(K(\varrho))$, $\varrho\in\ens\in\mathscr{F}$, such that for every $\psi:\Lambda\to\mathbb{R}$,
\begin{equation} \label{eq:thm21}
\prod_{j \in \Lambda} \lambda_j(\psi(j)) = \sum_{\ens \in \mathscr{F}} c_\ens \prod_{\varrho \in \ens} \big[ 1 + K(\varrho)\cos(\psiOf{\varrho}) \big]
\end{equation}
with the following properties satisfied for each $\ens \in \mathscr{F}$:
\begin{enumerate}[label={(\alph*)}]
\item \label{enum21:a}
There is at most one non-neutral density in $\ens$.
\item \label{enum21:b}
Every distinct densities $\varrho_1, \varrho_2 \in \ens$ satisfy
\begin{equation*}
\dist(\varrho_1, \varrho_2) \geq M[\min(d(\varrho_1), d(\varrho_2))]^\alpha.
\end{equation*}
Moreover, if $\ens$ contains a non-neutral density $\varrho_c$, then
$$
\dist(\varrho, \varrho_c) \geq M[d(\varrho)]^\alpha, \quad \varrho \in \ens, \varrho \neq \varrho_c.
$$
\item \label{enum21:c}
For each decomposition of a neutral density $\varrho\in\ens$ into $\varrho = \varrho_1 + \varrho_2$, where $\varrho_1, \varrho_2$ are densities with disjoint supports,
\begin{equation*}
\text{if }\dist(\varrho_1, \varrho_2) \geq 2M[\min(d(\varrho_1), d(\varrho_2))]^\alpha \text{ then $\varrho_1, \varrho_2$ are non-neutral}.
\end{equation*}

\item \label{enum21:d}
The coefficients $K(\varrho)$, appearing on the right-hand side of \eqref{eq:thm21}, satisfy for all neutral $\varrho\in\ens$
\begin{equation} \label{label:eq26}
0 \leq \abs{K(\varrho)} \leq e^{\DTwo \cdot A(\varrho)} \prod_{j \in \supp\varrho} e^{\varrho(j)^2} \big|\hat{\lambda_j}(\abs{\varrho(j)})\big|.
\end{equation}
\end{enumerate}
\end{thm}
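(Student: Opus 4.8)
\emph{Proof proposal.}

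\textbf{Step 1 (bare expansion into densities).}
Write $\lambda_j(\psi_j)=1+\sum_{q\neq 0}\hat{\lambda_j}(|q|)\cos(q\psi_j)$, extending $\hat{\lambda_j}$ to $\mathbb{Z}$ by evenness. Expanding $\prod_{j\in\Lambda}\lambda_j(\psi_j)$ over the choice at each site produces a sum over \emph{charge arrays} $n:\Lambda\to\mathbb{Z}$, each contributing $\big(\prod_{j\in\supp n}\hat{\lambda_j}(|n_j|)\big)\prod_{j\in\supp n}\cos(n_j\psi_j)$. Iterating $\cos a\,\cos b=\tfrac12\cos(a+b)+\tfrac12\cos(a-b)$ rewrites a product of cosines as a combination of terms $\cos(\langle\psi,\varrho\rangle)$ indexed by the densities $\varrho$ obtained by attaching signs to the $n_j$ (a pattern and its negation giving the same term). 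The reason \eqref{eq:thm21} is a \emph{product} over an ensemble, rather than one large sum, is that this product-to-sum step is performed only \emph{locally}: $\supp n$ is first partitioned into clusters, product-to-sum is applied within each cluster to produce a single density $\varrho$, and distinct clusters stay as separate factors $[1+K(\varrho)\cos(\langle\psi,\varrho\rangle)]$ --- the ``$1$'' recording the configurations in which that cluster is absent.

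\textbf{Step 2 (the scale-dependent clustering).}
The linkage rule must put two parts of a density at diameter $d$ into the same cluster once they lie within distance $\sim M d^{\alpha}$; as $d$ refers to the cluster being assembled this is circular, and I would resolve it by a single-linkage hierarchy run along the dyadic scales $2^{k}$, $k\ge 0$: having collapsed the clusters formed at scale $2^{k-1}$, merge at scale $2^{k}$ any two survivors within distance $\sim M 2^{\alpha k}$, stopping when one cluster remains. This is precisely the mechanism already encoded in $\SqrsSet$, $\SqrsSetTag$ and Proposition~\ref{prop:ABound}. Distinct final clusters are then pairwise $\ge M[\min d]^{\alpha}$-separated, giving~\ref{enum21:b}; and a final cluster cannot be cut into two pieces each $\ge 2M[\min d]^{\alpha}$-separated, since such pieces would never have been merged, which is~\ref{enum21:c} for neutral clusters. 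Property~\ref{enum21:a} and the non-neutral part of~\ref{enum21:b} are the most delicate: after clustering one must gather \emph{all} the uncancelled residual charge into a \emph{single} density, which I would do by a further merging of the non-neutral clusters --- legitimate because diameters only grow, so \eqref{eq:ABounds} still applies --- and then check that this composite stays $\ge M d^{\alpha}$-far from every neutral cluster.

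\textbf{Step 3 (assembling the identity and bounding $K$).}
Because the linkage at scale $2^{k}$ is ambiguous for borderline-close charges, at each such step one replaces the single grouping by a convex average (weights summing to one) over the admissible groupings; since product-to-sum and re-packaging are exact identities, averaging them is still an exact identity, and the cumulative effect is the finite family $\mathscr{F}$ with weights $c_\ens$ summing to $1$, the product form of \eqref{eq:thm21} recording that within an ensemble each density may be independently kept or omitted (the precise bookkeeping that avoids over- or under-counting charge arrays being a point that needs care). For~\ref{enum21:d}: $\prod_{j\in\supp\varrho}|\hat{\lambda_j}(|\varrho(j)|)|$ is the bare Fourier weight; $e^{\DTwo A(\varrho)}$ absorbs the number of hierarchies, sign patterns and product-to-sum branchings yielding a given $\varrho$, which is $e^{O(A(\varrho))}$ precisely because $A(\varrho)=\sum_k|\SqrsSet|$ measures the multi-scale combinatorial complexity of $\supp\varrho$ --- this is where Proposition~\ref{prop:ABound} converts between $A(\varrho)$ and cover sizes --- while $\prod_{j}e^{\varrho(j)^2}$ is the mild slack absorbing the residual per-site multiplicity. \textbf{The main obstacle} will be making the scale-dependent clustering canonical (modulo the averaged borderline choices) while \emph{simultaneously} securing all of~\ref{enum21:a}--\ref{enum21:d} --- in particular the irreducibility~\ref{enum21:c} and the ``at most one non-neutral density'' property~\ref{enum21:a} --- without letting the branching count exceed $e^{\DTwo A(\varrho)}$; the self-referential $d^{\alpha}$-threshold is the genuine combinatorial heart of the theorem, and the dyadic recursion with $M=2^{16}$ and $3/2<\alpha<2$ is what tames it.
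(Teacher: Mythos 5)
Your proposal captures the right high‑level picture — a scale‑dependent clustering of the single‑site charges governed by the $M d^{\alpha}$ thresholds, the use of the $2^{k}\times 2^{k}$ covers to measure the multi‑scale complexity $A(\varrho)$, and the termination of the iteration once the separation conditions (b)--(c) hold. But there is a genuine gap at the algebraic heart of the argument: you never explain how the ``product of factors $1+K\cos(\cdots)$ indexed by an ensemble'' structure, together with the \emph{positive} weights $c_{\ens}$ summing to one, is actually produced.

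Two concrete issues. First, you expand each $\lambda_j$ as $1+\sum_{q\ne 0}\hat\lambda_j(|q|)\cos(q\psi_j)$, but the coefficients $\hat\lambda_j$ need not be positive, so summing over charge arrays does not give a convex combination. The paper instead writes $\lambda_j$ itself as a convex combination $\lambda_j(\psi)=\sum_{q\ge 1}\frac{e^{-q^2}}{C(N_j)}\bigl(1+z_j(q)\cos(q\psi)\bigr)$; this is what seeds the whole construction with a product over an ensemble of one‑site densities weighted by a probability measure in $\vec q$. Second — and this is the core omission — your ``product‑to‑sum within a cluster to produce a single density'' step does not work as described: the product of $m$ factors $(1+z_j\cos(q_j\psi_j))$ over a cluster is a sum of $3^{m}$ terms, not something of the form $1+K\cos(\langle\psi,\varrho\rangle)$. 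What makes the paper's argument go is the four‑term trigonometric identity \eqref{label:eq213},
\begin{equation*}
(1+K_1\cos A)(1+K_2\cos B)=\tfrac{1}{3}(1+3K_1\cos A)+\tfrac{1}{3}(1+3K_2\cos B)+\tfrac{1}{6}(1+3K_1K_2\cos(A-B))+\tfrac{1}{6}(1+3K_1K_2\cos(A+B)),
\end{equation*}
which is an exact \emph{convex} decomposition of a pair of ensemble factors back into ensemble factors. Iterating it on pairs of close densities is precisely how the paper merges charges while preserving the product‑over‑ensembles form, and the $\tfrac13,\tfrac13,\tfrac16,\tfrac16$ weights are what become the $c_{\ens}$; the $3$'s appearing in the identity are also what produce the $e^{\DTwo A(\varrho)}$ factor in the bound \eqref{label:eq26} after a counting argument (Lemma~\ref{lemma21}). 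Your proposed mechanism of ``averaging over admissible borderline groupings'' is not where the weights come from, and without the identity \eqref{label:eq213} (or a substitute) you cannot repackage the sum into a convex combination of products over ensembles. A smaller point: property~\ref{enum21:a} is not obtained by a post‑hoc merging of non‑neutral clusters; in the paper it falls out of the iteration automatically, since only neutral densities are ever declared ``done'' and the iteration terminates once at most one density remains active.
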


We present the proof of Theorem \ref{thm:21} in Section \ref{sec:chap2}. We remark that the theorem remains valid (with the same proof) for any $\alpha > 1$, with the choice of $\mathscr{F}$ depending on $\alpha$.

Note that for each $\ens\in\mathscr{F}$, neutral $\varrho\in\ens$, and each $s\in\SqrsSetTag$, necessarily the density
\begin{equation} \label{eq:sCapVarrho}
(s\cap\varrho)(j):= \twopartdef{\varrho(j)}{j\in s}{0}{j\notin s}
\end{equation}
is non-neutral by \eqref{label:eq33} and property \ref{enum21:c} of Theorem~\ref{thm:21}. This property is part of the motivation for the definition of $\SqrsSetTag$.

\subsection{Measure Modification}

In this section we discuss the integral of terms of the form $\prod_{\varrho \in \ens}[1 + K(\varrho)\cos(\phiOf{\varrho} + \sigmaOf{\varrho})]$, appearing on the right-hand side of \eqref{eq:thm21}, when replacing $\psi$ by $\phi+\sigma$. Our goal is to modify measures of the form
$$
\prod_{\varrho \in \ens}[1 + K(\varrho)\cos(\phiOf{\varrho} + \sigmaOf{\varrho})]\midaLambdaV,
$$
into \emph{positive} measures with similar structure and same total mass, if $\beta$ is small.

\begin{thm} \label{thm:thm41}
There exists a positive absolute constant $\DThr$ such that the following holds. Let $\ens$ be an ensemble satisfying properties \ref{enum21:a}-\ref{enum21:c} of Theorem \ref{thm:21}, then for every $\sigma:\Lambda \rightarrow \mathbb{R}$,
\begin{equation} \label{eq:renomEq}
\begin{split}
&\int\prod_{\varrho \in \ens}[1 + K(\varrho)\cos(\phiOf{\varrho} + \langle\sigma, \varrho\rangle)]\midaLambdaV \\
&=
\int\prod_{\substack{\varrho \in \ens \\ Q(\varrho)=0}}[1 + \zed\cos(\phiOf{\varrho + \beta\Delta_\Lambda a_{\varrho,\ens}} + \langle\sigma,\varrho\rangle)]\midaLambdaV,
\end{split}
\end{equation}
with $\zed\in\mathbb{R}$ and $a_{\varrho,\ens}:\Lambda\to\mathbb{R}$ independent of $\sigma$, and
\begin{equation} \label{eqEloc}
\abs{\zed} \leq \abs{K(\varrho)}\exp\Big[-\frac{1}{\beta} \Big(\frac{1}{16}\norm{\varrho}_2^2 + \DThr\sum_{k=1}^\infty \abs{\SqrsSetTagOf{k}{\varrho}}\Big)\Big].
\end{equation}

Moreover, if $\lambda_j:\mathbb{R}\rightarrow\mathbb{R}$, $j\in\Lambda$ are $\SGS$ for some $\Gamma > 0$, $ \eta\in\mathbb{R}$, $0 \leq \theta < 1/16$,
and if $\ens$ satisfies \ref{enum21:d} of Theorem \ref{thm:21},
there exist constants $\beta_0,c_1>0$, depending only on $\Gamma, \eta, \theta$,
such that
\begin{equation} \label{label:eq53}
\abs{\zed} \leq \exp\left[-\frac{c_1}{\beta}\Big(\norm{\varrho}_2^2+\log_2(d(\varrho) + 1)\Big)\right], \quad 0 < \beta < \beta_0.
\end{equation}
\end{thm}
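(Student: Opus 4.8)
The plan is to carry out the measure modification by a complex translation of the field $\phi$ along a carefully chosen "spin wave" function for each neutral density. First I would expand the product $\prod_{\varrho\in\ens}[1+K(\varrho)\cos(\phiOf{\varrho}+\sigmaOf{\varrho})]$ into a sum over sub-collections: writing $\cos x = \tfrac12(e^{ix}+e^{-ix})$, each term corresponds to a choice, for a subset $\ens'\subseteq\ens$, of a sign $\epsilon_\varrho\in\{+1,-1\}$, contributing $\prod_{\varrho\in\ens'}\tfrac12 K(\varrho)e^{i\epsilon_\varrho(\phiOf{\varrho}+\sigmaOf{\varrho})}$. Integrating such an exponential against $\midaLambdaV$ and using the Gaussian computation \eqref{eq:DGFFRes} (in its complex form, legitimate since $\sum_j\epsilon_\varrho\varrho(j)=0$ for neutral $\varrho$ and the non-neutral density has been discarded — this is where property~\ref{enum21:a} and the neutrality restriction in \eqref{eq:renomEq} enter) produces a Gaussian weight $\exp[\frac{1}{2\beta}\langle g,-\Delta_\Lambda^{-1}g\rangle]$ with $g=\sum_{\varrho\in\ens'}\epsilon_\varrho\varrho$, together with the $\sigma$-dependent phase. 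The key device is then to absorb the self-energy term $\frac1\beta\langle\varrho,-\Delta_\Lambda^{-1}\varrho\rangle$ into the cosine by re-summing: one introduces, for each neutral $\varrho$, a real function $a_{\varrho,\ens}:\Lambda\to\mathbb{R}$ (the spin wave, supplied by Proposition~\ref{prop:aExistance}) and performs the substitution $\phi\to\phi + i\sum_\varrho \epsilon_\varrho \beta a_{\varrho,\ens}$-style complex translation, after which the cross terms between distinct densities $\varrho_1\neq\varrho_2$ in $\ens'$ are controlled by the separation property~\ref{enum21:b}, while the diagonal term reproduces exactly the replacement $\varrho \rightsquigarrow \varrho + \beta\Delta_\Lambda a_{\varrho,\ens}$ inside the cosine argument on the right-hand side of \eqref{eq:renomEq}. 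Collecting the resulting scalar prefactors defines $\zed$.

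Next I would establish the bound \eqref{eqEloc}. After the complex translation the magnitude $|\zed|$ is $|K(\varrho)|$ times $\exp[-\frac1\beta(\text{energy gained})]$, where the gained energy is the difference between the bare self-energy $\frac12\langle\varrho,-\Delta_\Lambda^{-1}\varrho\rangle$ and the cost of the spin wave, $\frac12\langle\varrho+\beta\Delta_\Lambda a,-\Delta_\Lambda^{-1}(\varrho+\beta\Delta_\Lambda a)\rangle$ plus interaction remainders. The role of Proposition~\ref{prop:aExistance} is precisely to furnish $a=a_{\varrho,\ens}$ for which this energy gain is at least $\frac{1}{16}\norm{\varrho}_2^2 + \DThr\sum_{k\geq1}|\SqrsSetTagOf{k}{\varrho}|$; the constant $1/16$ on the $\norm{\varrho}_2^2$ term is the familiar $\frac12\cdot\frac18$ type loss coming from the fact that the Green function diagonal is bounded below by a constant and from the interaction crossing terms, and the sum over separated squares $\SqrsSetTag$ counts the "islands" of $\varrho$ that each contribute an independent logarithmically-sized energy, each of which is genuinely separated from the rest by \eqref{label:eq33} so Proposition~\ref{prop:aExistance} can treat them independently. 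The crossing terms between densities and between islands must be dominated using property~\ref{enum21:b}/\ref{enum21:c} together with decay of the Green function $-\Delta_\Lambda^{-1}$; this is routine but needs the exponent $\alpha>3/2$ in the separation $M[\min d]^\alpha$ to beat the logarithmic growth of the Green function.

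Finally, for the sub-Gaussian conclusion \eqref{label:eq53}, I would feed the bound \eqref{label:eq26} from property~\ref{enum21:d} into \eqref{eqEloc}:
\begin{equation*}
|\zed| \leq e^{\DTwo A(\varrho)}\prod_{j\in\supp\varrho}e^{\varrho(j)^2}|\hat\lambda_j(|\varrho(j)|)| \cdot \exp\Big[-\tfrac1\beta\Big(\tfrac1{16}\norm{\varrho}_2^2+\DThr\textstyle\sum_k|\SqrsSetTagOf{k}{\varrho}|\Big)\Big].
\end{equation*}
Using the sub-Gaussian hypothesis \eqref{eqCondLambda}, $|\hat\lambda_j(|\varrho(j)|)|\leq\Gamma\exp[(\eta+\theta/\beta)\varrho(j)^2]$, so the product over $j$ contributes $\Gamma^{|\supp\varrho|}\exp[(1+\eta+\theta/\beta)\norm{\varrho}_2^2]$. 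Since $\theta<1/16$, the term $-\frac1{16\beta}\norm{\varrho}_2^2$ dominates $+\frac{\theta}{\beta}\norm{\varrho}_2^2$ with a positive gap $(\frac1{16}-\theta)/\beta$, and for $\beta$ small this gap also absorbs the $\beta$-independent quantities $\Gamma^{|\supp\varrho|}$, $e^{(1+\eta)\norm{\varrho}_2^2}$ (note $|\supp\varrho|\leq\norm{\varrho}_2^2$). Simultaneously, $e^{\DTwo A(\varrho)}$ is controlled: by Proposition~\ref{prop:ABound}, $A(\varrho)\leq\DOne(|\SqrsSetOf{0}{\varrho}|+\sum_k|\SqrsSetTag|)$, and $|\SqrsSetOf{0}{\varrho}|=|\supp\varrho|\leq\norm{\varrho}_2^2$, so $e^{\DTwo A(\varrho)}$ is beaten by a fraction of $\exp[-\frac{\DThr}{\beta}\sum_k|\SqrsSetTagOf{k}{\varrho}|]$ together with a fraction of the $\norm{\varrho}_2^2$ gap, once $\beta$ is small; and the leftover $\sum_k|\SqrsSetTag|\gtrsim\log_2(d(\varrho)+1)$ (again Proposition~\ref{prop:ABound}) supplies the $\log_2(d(\varrho)+1)$ term in \eqref{label:eq53}. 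Choosing $c_1$ to be a small multiple of $(\frac1{16}-\theta)$ and $\beta_0$ small enough, depending only on $\Gamma,\eta,\theta$, yields \eqref{label:eq53}. The main obstacle is the construction and estimation of the spin wave function $a_{\varrho,\ens}$ giving the clean energy gain in \eqref{eqEloc} — in particular getting the constant $1/16$ and the additive island-counting term simultaneously while keeping all cross terms under control — which is exactly the content deferred to Proposition~\ref{prop:aExistance} and its proof.
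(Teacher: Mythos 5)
Your high-level plan --- expand the cosines as exponentials, complex-translate by spin waves, and read off $\zed$ from the energy gain --- is essentially the paper's approach, and your reduction of \eqref{label:eq53} to \eqref{label:eq26}, \eqref{eqEloc} and Proposition~\ref{prop:ABound} via the sub-Gaussian hypothesis matches the paper's final step. However, two points in your sketch reflect genuine gaps rather than deferred details.

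First, you say the non-neutral density ``has been discarded --- this is where property~\ref{enum21:a} and the neutrality restriction in \eqref{eq:renomEq} enter,'' but that is asserting the conclusion rather than proving it. Property~\ref{enum21:a} only tells you there is at most one non-neutral density $\varrho_c$; you still owe an argument that its contribution to the integral is \emph{exactly zero}. This is needed before \eqref{eq:DGFFRes} or the complex translation can even be invoked, since if a term carries $\varrho_c$ then $g=\sum\epsilon_\varrho\varrho$ does not sum to zero. The paper's argument is specific: it integrates out the variable $\phi_v$ over $[-\pi,\pi)$ after changing variables to $\psi_k=\phi_k-\phi_v$ ($k\ne v$), and then $\int_{-\pi}^\pi\cos(\psi_v Q(\varrho_c)+h_2(\psi))\,d\psi_v=0$ because $Q(\varrho_c)\in\mathbb{Z}\setminus\{0\}$. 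This is precisely why the paper uses the nonstandard normalization $\phi_v\in[-\pi,\pi)$; without it the step fails.

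Second, you say the cross terms between distinct $\varrho_1\neq\varrho_2$ are ``controlled by the separation property~\ref{enum21:b} together with decay of the Green function.'' In fact they must \emph{vanish identically}, not merely be small. Equation \eqref{eq:renomEq} is an exact product over neutral $\varrho\in\ens$, with $\zed$ depending on $\varrho$ alone; if the translated energy $E_\beta(a_n,\uptau_n)$ carried $n$-dependent cross terms (depending on which other densities are ``switched on'' in the expansion), the collection of exponential prefactors would not refactor into that product. The mechanism is that Proposition~\ref{prop:aExistance} delivers spin waves with $\supp a_\varrho\subseteq D(\varrho)$, $\supp(\Delta_\Lambda a_\varrho)\subset D(\varrho)$, and $a_\varrho$ constant on each $D^+(\varrho')$ for nearby $\varrho'$, which forces $\langle a_\varrho,\varrho'\rangle=0$ and $\langle a_\varrho,\Delta_\Lambda a_{\varrho'}\rangle=0$ for $\varrho\ne\varrho'$, so that $E_\beta(a_n,\uptau_n)=\sum_\varrho|n(\varrho)|E_\beta(a_\varrho,\varrho)$ exactly. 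The role of $\alpha>3/2$ is inside the construction of $a_\varrho$ (summing the contributions of $a_{s,\varrho}$ across scales), not in bounding residual Green-function cross terms, which do not appear. Otherwise your account of the energy gain and the derivation of \eqref{label:eq53} is correct.
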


Note that Theorem~\ref{thm234} is an immediate consequence of Theorem~\ref{thm:21} and Theorem~\ref{thm:thm41}. In the rest of the section we discuss the proof of Theorem \ref{thm:thm41}.

To this end, we fix an ensemble $\ens$ satisfying properties \ref{enum21:a}-\ref{enum21:c} of Theorem \ref{thm:21}, and $\sigma:\Lambda \rightarrow \mathbb{R}$. We also denote for a density $\varrho$ and a function $a:\Lambda\rightarrow\mathbb{R}$
\begin{equation} \label{eq:Eloc}
E_\beta(a, \varrho) := \langle a_, \varrho \rangle - \frac{\beta}{2} \sum_{j\thicksim \ell} (a_j-a_\ell)^2 =
\langle a_, \varrho \rangle - \frac{\beta}{2} \langle a, -\Delta_\Lambda a \rangle.
\end{equation}

\subsubsection{Step 1: Removing the Non-Neutral Density}\label{sec:removing_non_neutral}

We first deal with the non-neutral density in $\ens$, if it exists, by showing that the integral on the left-hand side of \eqref{eq:renomEq} is unchanged by its removal. We remark that this is the part of the proof in which we make use of the restriction $\phi_v\in[-\pi, \pi)$. Denote the non-neutral density by $\varrho_c$ and write
\begin{equation} \label{eq:ens0}
\ens_0 := \ens \setminus \{ \varrho_c\}.
\end{equation}
Then
\begin{equation} \label{eq:chargedRem}
\begin{split}
&\int\prod_{\varrho \in \ens}[1 + K(\varrho)\cos(\phiOf{\varrho} + \sigmaOf{\varrho})]\midaLambdaV \\
&\ =
\int\prod_{\varrho \in \ens_0}[1 + K(\varrho)\cos(\phiOf{\varrho} + \sigmaOf{\varrho})]\midaLambdaV \\
&\ +
\int {K(\varrho_c)\cos(\phiOf{\varrho_c} + \sigmaOf{\varrho_c})\prod_{\varrho \in \ens_0}[1 + K(\varrho)\cos(\phiOf{\varrho} + \sigmaOf{\varrho})]\midaLambdaV}.
\end{split}
\end{equation}
We prove that the second term, in the last sum vanishes. To see this, we change variables in the left-hand side of \eqref{eq:chargedRem}, to
$$
\psi_k := \twopartdef{\phi_v}{k=v}{\phi_k-\phi_v}{k\neq v}.
$$
Since for every $\varrho\in\ens$
\begin{align*}
\phiOf{\varrho} &= \phi_v Q(\varrho) + \sum_{j\in\Lambda}(\phi_j-\phi_v)\varrho_j = \psi_v Q(\varrho) + \sum_{j\neq v} \psi_j\varrho_j, \\
\gradNormOld{\phi} &= \sum_{j\thicksim \ell} (\phi_j - \phi_\ell)^2 = \sum_{\substack{j\thicksim \ell \\ j,\ell \neq v}} (\psi_j - \psi_\ell)^2 + \sum_{j\thicksim v} \psi_j^2,
\end{align*}
%Denote
%$$
%h_1(\psi) := \prod_{\varrho_c \neq \varrho \in \ens}\Big[1 + K(\varrho)\cos\Big(\sum_{j\neq v} \psi_j\varrho_j + \sigmaOf{\varrho}\Big)\Big] \cdot \exp \Big[ -\frac{\beta}{2}\Big((\psi_j - \psi_k)^2 + \sum_{j\thicksim v} \psi_j^2\Big)\Big],
%$$
%$$
%h_2(\psi) := \sum_{j\neq v} \psi_j\varrho_j + \sigmaOf{\varrho_c},
%$$
and since $Q(\varrho_c)$ is an integer, it holds that
\begin{equation*}
\begin{split}
&\int K(\varrho_c)\cos(\phiOf{\varrho_c} + \sigmaOf{\varrho_c})\prod_{\varrho \in \ens_0}[1 + K(\varrho)\cos(\phiOf{\varrho} + \sigmaOf{\varrho})]\midaLambdaV \\
&=\int h_1(\psi) \left(\int_{-\pi}^{\pi} \cos[\psi_vQ(\varrho_c) + h_2(\psi)] \cdot d\psi_v\right)
\prod_{v\neq k\in\Lambda}d\psi_k = 0,
\end{split}
\end{equation*}
where $h_1, h_2$ are independent of $\psi_v$.

\subsubsection{Step 2: Constructing a Spin Wave for Neutral Densities} \label{sec:step2SW}

Let $\varrho\in\ens_0$ be some density, and consider the set of densities in $\ens_0$ of smaller or comparable diameter, defined by
\begin{equation} \label{eq:NRhoDef}
\ens_0(\varrho) := \{\varrho^\prime \in \ens_0: d(\varrho^\prime)\leq 2d(\varrho),\ \varrho \neq \varrho^\prime\}.
\end{equation}

We also denote
\begin{equation} \label{eq:D_plus_def}
D^{+}(\varrho) := \{ j\in\Lambda: \dist(j,D(\varrho)) \leq 1 \}.
\end{equation}

\begin{prop}[Spin Wave Existence] \label{prop:aExistance}
There exists a positive absolute constant $\DThr$ (independent of $\ens$), such that the following holds. For each $\varrho\in\ens_0$ there exists a function $a_{\varrho, \ens_0}: \Lambda\rightarrow\mathbb{R}$, denoted by $a_{\varrho}=a_{\varrho, \ens_0}$ for clarity of notation, with the following properties:
\begin{enumerate}
\item \label{enum1:aExistance}
For every $\varrho^\prime\in\ens_0(\varrho)$, $a_{\varrho}$ is constant on $D^{+}(\varrho^\prime)$.
\item \label{enum2:aExistance}
$\supp(a_{\varrho}) \subseteq D(\varrho)$. In particular, by \eqref{eq:alphaMBounds} and property \ref{enum21:b} of Theorem~\ref{thm:21}, $\supp a_{\varrho}$ and $\supp\varrho^\prime$ are disjoint for $\varrho^\prime\in\ens_0\setminus\ens_0(\varrho)$.
%$a_{\varrho}(j) = 0$ for every $j\in\supp\varrho^\prime$, where $\varrho^\prime\notin\ens_0(\varrho)$, $\varrho \neq \varrho^\prime$.
\item \label{enum3:aExistance}
$\supp(\Delta_\Lambda a_{\varrho}) \subset D(\varrho)$.
\item \label{enum4:aExistance}
\begin{equation} \label{eq:SpinWaveBound}
E_\beta(a_\varrho, \varrho) \geq
\frac{1}{\beta}\left( \frac{1}{16} \norm{\varrho}_2^2 + \DThr \sum_{k=1}^\infty \abs{\SqrsSetTagOf{k}{\varrho}} \right).
\end{equation}
\end{enumerate}
\end{prop}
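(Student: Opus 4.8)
The plan is to construct $a_\varrho$ by a multiscale (spin-wave) construction and to phrase it variationally, which makes properties \ref{enum1:aExistance}--\ref{enum3:aExistance} automatic and isolates \ref{enum4:aExistance} as the one substantive point. I would let $W_\varrho$ be the finite-dimensional space of $w:\Lambda\to\mathbb{R}$ that vanish outside $D^-(\varrho):=\{j\in D(\varrho):\dist(j,\Lambda\setminus D(\varrho))\ge 1\}$ (so that $\supp(\Delta_\Lambda w)\subseteq D(\varrho)$) and are constant on each island $D^{+}(\varrho')$, $\varrho'\in\ens_0(\varrho)$. Since $\varrho$ is neutral, $\langle w,\varrho\rangle$ and $\langle w,-\Delta_\Lambda w\rangle$ are insensitive to adding a constant to $w$, so the concave quadratic $w\mapsto\langle w,\varrho\rangle-\tfrac12\langle w,-\Delta_\Lambda w\rangle$ attains its supremum on $W_\varrho$; I set $a_\varrho:=\tfrac1\beta\bar a$ for a maximizer $\bar a$. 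Then \ref{enum1:aExistance}, \ref{enum2:aExistance}, \ref{enum3:aExistance} hold by construction of $W_\varrho$ (with a direct check in the degenerate case $d(\varrho)=1$, where $D(\varrho)$ must be read with a unit of slack), and the Euler--Lagrange identity for a concave quadratic gives
\[
E_\beta(a_\varrho,\varrho)=\tfrac1\beta\Big(\langle\bar a,\varrho\rangle-\tfrac12\langle\bar a,-\Delta_\Lambda\bar a\rangle\Big)=\tfrac1\beta\,\sup_{w\in W_\varrho}\Big(\langle w,\varrho\rangle-\tfrac12\langle w,-\Delta_\Lambda w\rangle\Big),
\]
so \ref{enum4:aExistance} reduces to exhibiting a single competitor $w\in W_\varrho$ with $\langle w,\varrho\rangle-\tfrac12\langle w,-\Delta_\Lambda w\rangle\ge\tfrac1{16}\norm{\varrho}_2^2+\DThr\sum_{k\ge1}\abs{\SqrsSetTagOf k\varrho}$.

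I would build that competitor scale by scale along the square-coverings $(\SqrsSet)_{0\le k\le n(\varrho)}$. Informally, one starts from the lattice Coulomb potential of $\varrho$ and, at each dyadic scale $k$, replaces it by a scale-$2^k$ smoothing; because $\varrho$ is neutral the partial sums become flat away from $\supp\varrho$, so that at scale $n(\varrho)=\lceil\log_2(M\,d(\varrho)^\alpha)\rceil$ the function is constant on $\Lambda\setminus D^-(\varrho)$ and may be cut off there, while near each island $D^{+}(\varrho')$---which by property \ref{enum21:b} of Theorem~\ref{thm:21} is far thinner than its distance to the rest of $\varrho$, since $\alpha>3/2$ and $M=2^{16}$---one flattens $w$ by spreading the required $O(1)$ change over a wide collar. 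Estimating the two inner products scale by scale, $\langle w,\varrho\rangle$ telescopes into positive contributions and $\langle w,-\Delta_\Lambda w\rangle$ is bounded by the same sum up to a controlled multiple, plus the cost of the flattenings and of the final cut-off. The near-diagonal scale $k=0$ yields the self-energy term $\tfrac1{16}\norm{\varrho}_2^2$ (the source of the threshold $\theta<1/16$ used downstream in \eqref{eqEloc}), and each separated scale-$k$ square $s\in\SqrsSetTagOf k\varrho$ yields an irreducible amount of energy because $s\cap\varrho$ is then non-neutral, by \eqref{label:eq33} and property \ref{enum21:c}, and its far field cannot be screened by the remote remainder of $\varrho$---this produces the term $\DThr\abs{\SqrsSetTagOf k\varrho}$.

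The main obstacle is the final accounting: one must show, uniformly in the shape of $\varrho$ and the configuration of $\ens_0(\varrho)$, that the total gradient cost of the flattenings and of the cut-off is dominated by these gains. Here the exponents enter quantitatively: the relevant ``circumference'' at scale $k$ is $O(2^k\abs{\SqrsSet})$ and each unit of it costs $O(2^{-k})$ of normalized Dirichlet energy, so the per-scale cost is $O(\abs{\SqrsSet})$; summing over $0\le k\le n(\varrho)$ and applying Proposition~\ref{prop:ABound} bounds the total by $O(A(\varrho))=O(\abs{\SqrsSetOf0\varrho}+\sum_{k\ge1}\abs{\SqrsSetTagOf k\varrho})$, after which one takes $\DThr$ small enough, once the implied constants are fixed, to absorb it and still keep the displayed lower bound. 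I would organize the write-up as: (1) the variational reduction above; (2) the definition of the scale increments and of the flattening/cut-off; (3) the scale-by-scale lower bound on $\langle w,\varrho\rangle$ and upper bound on $\langle w,-\Delta_\Lambda w\rangle$; (4) the circumference count closed via Proposition~\ref{prop:ABound}. Step~(4)---ruling out adversarial geometries of $\ens_0(\varrho)$ that would inflate the costs past this threshold---is the delicate point, and is exactly where the precise choice of the minimal covers $(\SqrsSet)$ (deferred to Section~\ref{sec:proofOfSW}) and the square-separation in \eqref{label:eq33} are needed.
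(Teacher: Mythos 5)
Your variational wrapper---define $a_\varrho$ as the maximizer of the concave quadratic over the constrained space $W_\varrho$, so that properties (i)--(iii) hold by fiat and (iv) reduces to exhibiting a single competitor---is a legitimate reformulation and genuinely different in presentation from the paper, which constructs $a_\varrho$ directly as an explicit sum $a_{0,\varrho}+\sum_{k\ge1}\sum_{s\in\SqrsSetTagOf{k}{\varrho}}a_{s,\varrho}$ (Lemma~\ref{lemma:a_0_construction} and Proposition~\ref{prop:a_s_construction}). Your intuition for why the lower bound should hold is also on track: each $s\in\SqrsSetTagOf{k}{\varrho}$ carries non-neutral charge $s\cap\varrho$ and therefore supplies an irreducible $O(1/\beta)$ of free energy, and $\alpha>3/2$, $M=2^{16}$ are what keep the islands $D^+(\varrho')$ thin relative to their separation.

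The gap is in the competitor construction and the closing accounting, and it is substantive. First, the exact coefficient $\tfrac1{16}$ comes, in the paper, from a bipartite-sublattice trick (Lemma~\ref{lemma:a_0_construction}): one sets $a_{0,\varrho}(j)=\varrho_1(j)/(d_j\beta)$ with $\varrho_1$ the restriction of $\varrho$ to whichever sublattice carries at least half of $\norm{\varrho}_2^2$, so every gradient edge has one endpoint where $a_{0,\varrho}$ vanishes and the Dirichlet cost is \emph{exactly} half the linear gain; a generic ``scale-$0$ smoothing of the Coulomb potential'' does not produce this cancellation (for a unit $\pm1$ dipole without the sublattice restriction, the cost can equal or exceed the gain). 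Second, the accounting ``per-scale cost $O(\abs{\SqrsSet})$, total $O(A(\varrho))$, absorb by shrinking $\DThr$'' does not close. By Proposition~\ref{prop:ABound}, $A(\varrho)\le\DOne\bigl(\abs{\SqrsSetOf{0}{\varrho}}+\sum_{k\ge1}\abs{\SqrsSetTag}\bigr)$, so your cost bound carries a term of order $C\DOne\abs{\supp\varrho}$ for some absolute $C$. Shrinking $\DThr$ helps only against the $\sum_k\abs{\SqrsSetTag}$ portion; it cannot offset a cost of order $C\DOne\abs{\supp\varrho}$ against a gain of $\tfrac1{16}\norm{\varrho}_2^2\ge\tfrac1{16}\abs{\supp\varrho}$ once $C\DOne>\tfrac1{16}$, which will hold since $\DOne$ is not small (and there may be no slack at all when $\sum_k\abs{\SqrsSetTag}=0$). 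The paper sidesteps this tradeoff entirely: $a_{0,\varrho}$ and the $a_{s,\varrho}$ are built so that their gradients live on mutually disjoint edge sets---the gradients of $a_{0,\varrho}$ sit on edges touching $\supp\varrho$, each $a_{s,\varrho}$ is flat on a collar around $s$ and supported in an annulus of width $\sim 2^{k-1}$, and the precise choice of minimal covers in Section~\ref{sec:proofOfSW} ensures these annuli for different $k,s$ do not overlap---so $E_\beta(a_\varrho,\varrho)$ is literally the sum of the individual $E_\beta(a_{t,\varrho},\varrho)$, each individually $\ge0$, with no cost-versus-gain crossover between scales. To make your plan rigorous you would need to either reproduce that disjoint-support decomposition inside your competitor (at which point the variational wrapper is a cosmetic layer over the paper's proof) or find a genuinely different closing argument; as written, step~(4) is a gap, not a detail.
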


We present the proof of Proposition \ref{prop:aExistance} in Section \ref{sec:chap4}. The construction requires $\alpha$ to be greater than $3/2$, and this is the only place where this condition is used.

\subsubsection{Step 3: Complex Translations by the Spin Waves}\label{sec:complex_translations_by_spin_waves}

%We saw in \eqref{eq:chargedRem} that \begin{equation} \label{eq:step1Res} \int\prod_{\varrho \in \ens}[1 + K(\varrho)\cos(\phiOf{\varrho} + \sigmaOf{\varrho})]\midaLambdaV = \int\prod_{\varrho \in \ens_0}[1 + K(\varrho)\cos(\phiOf{\varrho} + \sigmaOf{\varrho})]\midaLambdaV,\end{equation}
%where $\ens_0 := \ens \setminus \{\varrho_c\}$ if there is a

The goal of this step is to show \eqref{eq:renomEq} with the condition \eqref{eqEloc}, where the functions $a_\varrho$ are the spin waves constructed in the previous step.

For each $\varrho\in\ens_0$, let $a_\varrho:\Lambda\rightarrow\mathbb{R}$ be the function given by Proposition \ref{prop:aExistance}.
We established in step 1 that
$$
\int\prod_{\varrho \in \ens}[1 + K(\varrho)\cos(\phiOf{\varrho} + \sigmaOf{\varrho})]\midaLambdaV =
\int\prod_{\varrho \in \ens_0}[1 + K(\varrho)\cos(\phiOf{\varrho} + \sigmaOf{\varrho})]\midaLambdaV.
$$

Rewrite the right-hand side of the last equation as
\begin{multline} \label{eq:multAllRho}
\int\prod_{\varrho \in \ens_0}[1 + K(\varrho)\cos(\phiOf{\varrho} + \sigmaOf{\varrho})]\midaLambdaV \\
=
\sum_{n:\ens_0\to\{-1,0,1\}}
\int \prod_{\varrho\in\ens_0}
\left(\frac{K(\varrho)}{2}\right)^{\abs{n(\varrho)}}
e^{i\cdot n(\varrho)\cdot(\phiOf{\varrho} + \sigmaOf{\varrho})}\midaLambdaV \\
=
\sum_{n:\ens_0\to\{-1,0,1\}}
\left[\prod_{\varrho\in\ens_0}
\left(\frac{K(\varrho)}{2}\right)^{\abs{n(\varrho)}}\right]
\int
\exp\Big[ i\cdot \Big(\Big\langle\phi,\sum_{\varrho\in\ens_0}n(\varrho)\cdot\varrho\Big\rangle + \Big\langle\sigma,\sum_{\varrho\in\ens_0}n(\varrho)\cdot\varrho\Big\rangle\Big)\Big]\midaLambdaV.
\end{multline}
We apply the complex translation method of McBryan and Spencer \cite{McSp} on every term of the right-hand side of \eqref{eq:multAllRho}. The method in our case is based on the following equality, which is a consequence of a change of integration contour,
\begin{equation}\label{eq:ComplexTrans}
\int e^{i\phiOf{\uptau}}\midaLambdaV = e^{-E_\beta(a,\uptau)}  \int e^{i\phiOf{\uptau+\beta\Delta_\Lambda a}} \midaLambdaV,
\quad \uptau, a:\Lambda\to\mathbb{R},
\quad \sum_{j\in\Lambda} \uptau_j=0.
\end{equation}
For completeness, we detail the proof of \eqref{eq:ComplexTrans} in Section \ref{sec:complex}. We remark that Fr\"ohlich and Spencer \cite[Lemma 4.3, Section 6 and Section 7]{fs} apply complex translation also to certain non-Gaussian measures, which requires a more general version of \eqref{eq:ComplexTrans}. These non-Gaussian measures arise when analyzing the plane rotator model with XY interaction, $\mathbb{Z}_n$ clock models and the Solid-On-Solid models.

Let $n:\ens_0\to\{-1,0,1\}$ and
$$
a_n:=\sum_{\varrho\in\ens_0} n(\varrho)\cdot a_\varrho \quad
\uptau_n := \sum_{\varrho\in\ens_0}n(\varrho)\cdot\varrho
$$
Since, by the properties of Proposition~\ref{prop:aExistance}, it holds that
$$
\langle a_{\varrho}, \varrho'\rangle=0, \quad \langle a_{\varrho}, \Delta_\Lambda a_{\varrho'} \rangle=0, \quad \varrho, \varrho'\in\ens, \varrho \neq \varrho',
$$
we obtain,
$$
E_\beta(a_n, \uptau_n) = \sum_{\varrho\in\ens_0}\sum_{\varrho'\in\ens_0}n(\varrho)n(\varrho')\langle a_\varrho, \varrho' \rangle - \frac{\beta}{2}\sum_{\varrho\in\ens_0}\sum_{\varrho'\in\ens_0}n(\varrho)n(\varrho')\langle a_{\varrho}, -\Delta_\Lambda a_{\varrho'} \rangle = \sum_{\varrho\in\ens_0} \abs{n(\varrho)} E_\beta(a_\varrho, \varrho).
$$
Therefore, using \eqref{eq:ComplexTrans},
\begin{multline}
\left[\prod_{\varrho\in\ens_0}
\left(\frac{K(\varrho)}{2}\right)^{\abs{n(\varrho)}}\right]
\int
\exp\Big[ i\cdot \Big(\Big\langle\phi,\sum_{\varrho\in\ens_0}n(\varrho)\cdot\varrho\Big\rangle + \Big\langle\sigma,\sum_{\varrho\in\ens_0}n(\varrho)\cdot\varrho\Big\rangle\Big)\Big]\midaLambdaV \\
=
\int
\prod_{\varrho\in\ens_0}
\left(\frac{K(\varrho)}{2} e^{-E_\beta(a_\varrho, \varrho)} \right)^{\abs{n(\varrho)}}
\exp\Big[ i\cdot n(\varrho)\cdot \big(\langle\phi,\varrho + \beta\Delta_\Lambda a_\varrho\rangle + \langle\sigma,\varrho\rangle\big)\Big]\midaLambdaV,
\end{multline}
which finishes the proof of \eqref{eq:renomEq} with the condition \eqref{eqEloc}, for
\begin{equation*}
  \zed := K(\varrho)\cdot
\exp\Big(-E_\beta(a_{\varrho}, \varrho)\Big).
\end{equation*}

\subsubsection{Step 4: Bounding the Coefficients $\zed$}
In this step we finish the proof, by showing \eqref{label:eq53}. The proof uses \eqref{label:eq26} and \eqref{eqEloc}, together with the sub-Gaussian condition.

We start by using $\theta<\frac{1}{16}$, and rewriting \eqref{eq:ABounds} as
$$
A(\varrho) \leq \DOne \abs{\SqrsSetOf{0}{\varrho}} + \gamma \DOne \sum_{k=1}^{\infty} \abs{\SqrsSetTag},
$$
where $\gamma > 1$ is large enough so that
\begin{equation} \label{eq:D2Choice}
\frac{1}{16} - \frac{\DThr}{\gamma} - \theta > 0.
\end{equation}
Using \eqref{eqEloc}, \eqref{eqCondLambda} and \eqref{label:eq26} we obtain
$$
\abs{\zed} \leq
\exp\left[
\DTwo \cdot A(\varrho) + \left(\ln(\Gamma) + \eta - \left(\frac{1}{16}-\theta\right)\frac{1}{\beta}\right)\norm{\varrho}_2^2
 - \frac{\DThr}{\beta}\sum_{k=1}^\infty \abs{\SqrsSetTagOf{k}{\varrho}}\Big)
\right].
$$
Applying the bounds on $A(\varrho)$ (Proposition \ref{prop:ABound}), and assuming that $\beta$ small enough, yields
$$
\abs{\zed} \leq
\exp\left[
\left(\ln(\Gamma) + \eta - \left(\frac{1}{16}-\frac{\DThr}{\gamma}-\theta\right)\frac{1}{\beta}\right)\norm{\varrho}_2^2
- \left(\frac{\DThr}{\gamma \DOne \beta} - \DTwo\right)\log_2(d(\varrho)+1)
\right],
$$
which finishes the proof, if $\beta$ is small enough.

\section{Proof of Theorem~\ref{thm11}} \label{sec:proof11}
In this section we deduce Theorem~\ref{thm11} from Theorem~\ref{thm234}. To this end, let $\Lambda$ be a square domain with free or periodic b.c., let $v\in\Lambda$, let $\Gamma >0$, $\eta\in\mathbb{R}$, $0\leq\theta < 1/16$, let $\lambda_\Lambda=(\lambda_j)_{j\in\Lambda}$ be a collection of $\SG$ functions, let $\varepsilon > 0$ and let $f:\Lambda \rightarrow \mathbb{R}$ satisfy $\sum_{j \in \Lambda} f_j = 0$.

%We start by presenting an elementary estimate.
%Denote by $[x]_{2\pi}$ the unique $y\in[-\pi, \pi)$, such that $x-y$ is an integer multiple of $2\pi$.

Using \eqref{eq:sigmaDef}, \eqref{eq:varChangeInE}, we obtain
\begin{equation} \label{label:eq513}
\mathbb{E}_{\beta, \Lambda, \lambda_\Lambda, v}\big[ e^{\phiOf{f}} \big] = Z_{\beta, \Lambda, \lambda_\Lambda, v}^{-1}
\exp\left(\frac{1}{2\beta} \langle f, -\Delta_\Lambda^{-1}f\rangle\right)
Z_{\beta, \Lambda, \lambda_\Lambda, v}(\sigma),
\end{equation}
where
\begin{equation}\label{eq:ZOfSigmaDef}
Z_{\beta, \Lambda, \lambda_\Lambda, v}(\sigma):= \int\prod_{j\in\Lambda}\lambda_j(\phi_j + \sigma_j)\midaLambdaV.
\end{equation}
Note that since $Z_{\beta, \Lambda, \lambda_\Lambda, v}(0) = Z_{\beta, \Lambda, \lambda_\Lambda, v}$, our goal is to estimate
\begin{equation} \label{eq:z_ratio}
Z_{\beta, \Lambda, \lambda_\Lambda, v}(\sigma) / Z_{\beta, \Lambda, \lambda_\Lambda, v}(0).
\end{equation}

Let $\beta_0$ be the positive number given by Theorem~\ref{thm234}. The equality in that theorem states that
$$
Z_{\beta, \Lambda, \lambda_\Lambda, v}(\sigma) = \sum_{\ens \in \mathscr{F}} c_\ens Z_\ens (\sigma), \quad \beta < \beta_0
$$
where
\begin{equation}\label{eq:z_ens_def}
Z_\ens(\sigma) := \int \prod_{\varrho \in \ens} \big[ 1 + \zed\cos(\phiOf{\bar{\varrho}} + \sigmaOf{\varrho}) \big] \midaLambdaV.
\end{equation}

To estimate the ratio \eqref{eq:z_ratio}, we show that there exists $\beta_1>0$ such that for every $\ens\in\mathscr{F}$,
\begin{equation} \label{label:eq524}
\frac{Z_\ens(\sigma)}{Z_\ens(0)} \geq \exp\Big(-\frac{\varepsilon\beta}{2(1+\varepsilon)} \gradNorm{\sigma}\Big), \quad \beta<\beta_1.
\end{equation}
This allows us to continue \eqref{label:eq513} using the fact that
\begin{equation*} \label{label:eq525}
\gradNorm{\sigma} = \gradNormOld{\sigma} = (1/\beta^2) \langle f, -\Delta_\Lambda^{-1}f\rangle,
\end{equation*}
and that the right-hand side of \eqref{label:eq524} is independent of $\ens$, and to obtain for $\beta < \beta_1$
\begin{equation*}
\mathbb{E}_{\beta, \Lambda, \lambda_\Lambda, v}\big[ e^{\phiOf{f}} \big] \geq
\exp \Big[ \frac{1}{2\beta} \langle f, -\Delta_\Lambda^{-1}f\rangle -\frac{\varepsilon}{2(1+\varepsilon)\beta} \langle f, -\Delta_\Lambda^{-1}f\rangle\Big] =
\exp \Big[ \frac{1}{2(1+\varepsilon)\beta} \langle f, -\Delta_\Lambda^{-1}f\rangle \Big].
\end{equation*}
This is the statement of Theorem~\ref{thm11}, with $\beta_1$ playing the role of $\beta_0$. It remains to prove \eqref{label:eq524}, which we deduce from the following claims.
\begin{claim}  \label{lemma:doubleCos}
Let $x,y\in\mathbb{R}$ and $0 < \abs{z} < 1/8$. There exists an absolute constant $\DFr>0$ such that
\begin{equation} \label{newlemma521eq2pi}
1+z\cos(x+y) \geq \exp\left(-\frac{z\sin x\sin y}{1+z\cos x}-\DFr \abs{z} y^2\right) (1+z\cos x).
\end{equation}
\end{claim}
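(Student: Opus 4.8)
The plan is to take logarithms and reduce the claimed inequality to a pointwise estimate on the error term. Set $u := 1 + z\cos x$; since $|z| < 1/8$ we have $u \in (7/8, 9/8)$, so $u > 0$ and $\log u$ is well defined. Dividing both sides by $u$, the inequality \eqref{newlemma521eq2pi} is equivalent to
\begin{equation*}
\log\!\Big(1 + \frac{z\cos(x+y) - z\cos x}{1 + z\cos x}\Big) \geq -\frac{z\sin x\sin y}{1+z\cos x} - \DFr\,|z|\,y^2.
\end{equation*}
Write $w := \dfrac{z\cos(x+y) - z\cos x}{1+z\cos x}$, so the left side is $\log(1+w)$. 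Using $\cos(x+y) - \cos x = -\sin x\,\sin y - (1-\cos y)\cos x = -\sin x \sin y + (\cos y - 1)\cos x$, we see that the first-order (in $y$) part of $zw$-numerator is exactly $-z\sin x\sin y$, which is the term already appearing on the right-hand side. So after writing $w = w_1 + w_2$ with $w_1 = \dfrac{-z\sin x\sin y}{1+z\cos x}$ and $w_2 = \dfrac{z(\cos y - 1)\cos x}{1+z\cos x}$, the task becomes showing $\log(1+w) - w_1 \geq -\DFr|z|y^2$ for a universal $\DFr$.

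The key step is then a two-sided control of $w$. First, $|w| \le \dfrac{|z|\,|\cos(x+y)-\cos x|}{7/8} \le \dfrac{8}{7}|z|\cdot\min(2, |y|) \le \dfrac{16}{7}\cdot\frac18 = \dfrac27 < 1$, using $|\cos(x+y)-\cos x| \le \min(2,|y|)$ and $|z|<1/8$; this guarantees $\log(1+w)$ is safely in its convergent range and that $\log(1+w) \ge w - C_0 w^2$ for an absolute constant $C_0$ (valid since $|w|$ is bounded away from $1$; one may take the Taylor estimate $\log(1+w)\ge w - w^2$ for $|w|\le 1/2$, and $2/7 < 1/2$). Hence
\begin{equation*}
\log(1+w) - w_1 \;\ge\; w_2 - C_0 w^2.
\end{equation*}
It remains to bound $w_2$ and $w^2$ from below by $-\DFr|z|y^2$. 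For $w_2$: $|\cos y - 1| \le y^2/2$, so $|w_2| \le \dfrac{8}{7}\cdot\dfrac{|z|y^2}{2} = \dfrac{4}{7}|z|y^2$. For $w^2$: from $|w| \le \dfrac{8}{7}|z|\min(2,|y|)$ we get $w^2 \le \dfrac{64}{49}|z|^2\min(4, y^2) \le \dfrac{64}{49}\cdot\dfrac18\cdot|z|\,y^2$ (using one factor $|z| \le 1/8$ and $\min(4,y^2)\le y^2$). Combining, $w_2 - C_0 w^2 \ge -\big(\tfrac47 + \tfrac{8 C_0}{49}\big)|z|y^2$, which is of the required form with $\DFr := \tfrac47 + \tfrac{8C_0}{49}$ an absolute constant.

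I do not expect any serious obstacle here: the proof is elementary, consisting of a logarithm manipulation plus Taylor estimates for $\log(1+w)$ and $1-\cos y$, with all constants absolute because $|z|$ is bounded away from $1$ (indeed by $1/8$) so denominators like $1+z\cos x$ stay uniformly positive. The only mild care needed is to keep track that the first-order-in-$y$ term cancels exactly against $-z\sin x\sin y/(1+z\cos x)$, so that all remaining contributions are genuinely $O(|z|y^2)$ — this is the one algebraic identity ($\cos(x+y)-\cos x = -\sin x\sin y + (\cos y - 1)\cos x$) that drives the estimate and should be stated explicitly.
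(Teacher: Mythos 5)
Your proof is correct and follows essentially the same route as the paper's: factor out $(1+z\cos x)$, use the addition formula $\cos(x+y)-\cos x = -\sin x\sin y + (\cos y-1)\cos x$ to isolate the term that matches $-\frac{z\sin x\sin y}{1+z\cos x}$, verify that the remaining argument of $\log(1+\cdot)$ is bounded well inside $(-1,1)$, and then apply a quadratic Taylor lower bound for $\log(1+t)$ together with $|\sin y|\le |y|$ and $1-\cos y\le y^2/2$ to absorb the error into $\DFr|z|y^2$. The only cosmetic differences are that you split $w=w_1+w_2$ explicitly while the paper keeps the expression together and invokes $(s-t)^2\le 2(s^2+t^2)$, and the specific Taylor constant used ($\log(1+t)\ge t-t^2$ versus $\ge t-2t^2$); neither affects the argument.
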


\begin{claim} \label{newlemma55eq}
Let $D>0$. There exists $0 < \beta_2 \leq \beta_0$ such that
\begin{equation}
\sum_{\varrho \in \ens} \abs{\zed} \cdot \sigmaOf{\varrho}^2 \leq
\frac{\beta}{D} \gradNorm{\sigma},\quad\beta<\beta_2, \quad \ens\in\mathscr{F}.
\end{equation}
\end{claim}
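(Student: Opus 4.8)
\noindent The plan is to bound, for a fixed ensemble $\ens\in\mathscr{F}$ and a fixed $\sigma$, each summand $\abs{\zed}\,\sigmaOf{\varrho}^2$ by the Dirichlet energy of $\sigma$ on the set $D(\varrho)$ times a factor that is super-polynomially small in $1/\beta$, and then to add these up over $\varrho\in\ens$ by organizing the densities according to their dyadic diameter scale, so that the scale-wise disjointness of the sets $D(\varrho)$ from property~3 of Theorem~\ref{thm234} can be exploited.

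First I would establish the deterministic \emph{geometric estimate}
\[
\sigmaOf{\varrho}^2\ \le\ \norm{\varrho}_2^4\,d(\varrho)\,E_{D(\varrho)}(\sigma),
\qquad\text{where}\qquad
E_{D(\varrho)}(\sigma):=\!\!\sum_{\substack{\{a,b\}\in E(\Lambda)\\ a,b\in D(\varrho)}}\!\!(\sigma_a-\sigma_b)^2,
\]
valid for every neutral $\varrho\in\ens$. Let $j_0$ be the center of $D(\varrho)$, so $j_0\in\supp\varrho$ and $\dist(j_0,j')\le d(\varrho)$ for all $j'\in\supp\varrho$. Since $\varrho$ is neutral, $\sigmaOf{\varrho}=\sum_{j'\in\supp\varrho}(\sigma_{j'}-\sigma_{j_0})\varrho(j')$, whence by Cauchy--Schwarz $\sigmaOf{\varrho}^2\le\norm{\varrho}_2^2\sum_{j'\in\supp\varrho}(\sigma_{j'}-\sigma_{j_0})^2$. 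For each such $j'$ choose a geodesic $P_{j'}$ from $j_0$ to $j'$; each of its vertices is within $\dist(j_0,j')\le d(\varrho)<2d(\varrho)$ of $j_0$, hence lies in $D(\varrho)$, so every edge of $P_{j'}$ contributes to $E_{D(\varrho)}(\sigma)$. Writing $\sigma_{j'}-\sigma_{j_0}$ as the telescoping sum along $P_{j'}$ and applying Cauchy--Schwarz over its at most $d(\varrho)$ edges gives $(\sigma_{j'}-\sigma_{j_0})^2\le d(\varrho)\,E_{D(\varrho)}(\sigma)$; summing over $j'\in\supp\varrho$ and using $\abs{\supp\varrho}\le\norm{\varrho}_2^2$ (every nonzero value of $\varrho$ is a nonzero integer) yields the estimate. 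Note $d(\varrho)\ge1$ here, since neutral densities cannot be supported on a single site.

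Next I would insert the bound \eqref{eq:ZBoundThm234} on $\abs{\zed}$ and split the sum by scale. For $k\ge1$ put $\ens_k:=\{\varrho\in\ens:\ d(\varrho)\in[2^k-1,\,2^{k+1}-2]\}$; since every neutral density has $d(\varrho)\ge1$, the families $\ens_k$, $k\ge1$, partition $\ens$. For $\varrho\in\ens_k$ we have $d(\varrho)<2^{k+1}$ and $\log_2(d(\varrho)+1)\ge k$, so, combining \eqref{eq:ZBoundThm234} with the geometric estimate,
\[
\abs{\zed}\,\sigmaOf{\varrho}^2
\ \le\ \Big(\norm{\varrho}_2^4\,e^{-c_1\norm{\varrho}_2^2/\beta}\Big)\Big(d(\varrho)\,e^{-c_1\log_2(d(\varrho)+1)/\beta}\Big)E_{D(\varrho)}(\sigma)
\ \le\ e^{-c_1/\beta}\,2^{k+1}e^{-c_1k/\beta}\,E_{D(\varrho)}(\sigma),
\]
where the last step uses that for $\beta\le c_1/2$ the function $g(t)=t^2 e^{-c_1 t/\beta}$ is non-increasing on $[1,\infty)$, so $\norm{\varrho}_2^4\,e^{-c_1\norm{\varrho}_2^2/\beta}=g(\norm{\varrho}_2^2)\le g(1)=e^{-c_1/\beta}$. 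By property~3 of Theorem~\ref{thm234} the sets $D(\varrho)$, $\varrho\in\ens_k$, are pairwise disjoint, hence so are the corresponding edge sets and $\sum_{\varrho\in\ens_k}E_{D(\varrho)}(\sigma)\le\gradNorm{\sigma}$. Summing over $\varrho\in\ens_k$ and then over $k\ge1$,
\[
\sum_{\varrho\in\ens}\abs{\zed}\,\sigmaOf{\varrho}^2
\ \le\ e^{-c_1/\beta}\,\gradNorm{\sigma}\sum_{k\ge1}2^{k+1}e^{-c_1k/\beta}
\ \le\ 2e^{-c_1/\beta}\,\gradNorm{\sigma}
\]
provided $\beta$ is small enough that $2e^{-c_1/\beta}\le1/2$ (so the geometric series is at most $2$). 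Since $2e^{-c_1/\beta}<\beta/D$ for all sufficiently small $\beta$, taking $\beta_2$ to be the minimum of $\beta_0$, $c_1/2$, and this last threshold proves the claim; note $\beta_2$ depends only on $D$ and on the absolute constants $\beta_0,c_1$ of Theorem~\ref{thm234}, so the bound holds uniformly over $\ens\in\mathscr{F}$.

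The only non-routine ingredient is the geometric estimate, and in it the observation that the geodesics $P_{j'}$ stay inside $D(\varrho)$: this containment is exactly what lets the local energies $E_{D(\varrho)}(\sigma)$, over densities of a common dyadic scale, add up to at most the global energy $\gradNorm{\sigma}$ by means of property~3. Everything after that is the summation of a geometric series of ratio $2e^{-c_1/\beta}\to0$. It is worth stressing that property~3 provides disjointness only \emph{within} a single dyadic scale, which forces the split of the sum over $\varrho$ by $k$, and that the $\log_2(d(\varrho)+1)$ term in \eqref{eq:ZBoundThm234} is indispensable: it supplies the factor $e^{-c_1k/\beta}$ that overcomes the $2^{k+1}$ growth of $d(\varrho)$ across scales and makes the series in $k$ summable and small.
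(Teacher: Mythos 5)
Your proof is correct. It follows the same overall strategy as the paper — a Cauchy–Schwarz estimate localizing $\sigmaOf{\varrho}^2$ to the Dirichlet energy on $D(\varrho)$, followed by a split into dyadic diameter scales, the disjointness of the $D(\varrho)$'s within a scale from property~3, and absorbing the polynomial prefactors with the super-polynomially small $\zed$ bound — but the geometric estimate is established differently. The paper writes $\sigmaOf{\varrho}=\sum_{j\sim\ell,\,j,\ell\in D(\varrho)}c_{\{j,\ell\}}(\sigma_j-\sigma_\ell)$ with integer coefficients bounded by $\tfrac12\norm{\varrho}_2^2$ (proved by inductively peeling off neutral sub-densities), then applies Cauchy–Schwarz over the edges, obtaining $\sigmaOf{\varrho}^2\le\abs{D(\varrho)}\,\norm{\varrho}_2^4\,E_{D(\varrho)}(\sigma)$. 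You instead center everything at $j_0$ and telescope along geodesics contained in $D(\varrho)$, obtaining the slightly sharper factor $d(\varrho)$ in place of $\abs{D(\varrho)}\asymp d(\varrho)^2$; this has the added merit that the containment of the geodesics in $D(\varrho)$ is immediate (geodesics from $j_0$ stay at distance $\le d(\varrho)<2d(\varrho)$ from $j_0$), whereas the paper's inductive decomposition requires one to check that all the telescoping paths can be chosen to lie inside $D(\varrho)$. The bookkeeping in the small-$\beta$ step also differs cosmetically: the paper retains a factor $(d(\varrho)+1)^{-1}\le 2^{-k}$ so that the scale-sum telescopes with $\sum_k 2^{-k}=1$, while you push the entire cross-scale decay into the exponential and sum the resulting geometric series in $2e^{-c_1/\beta}$; both give the same conclusion with a $\beta$-threshold depending only on $D$, $c_1$ and $\beta_0$.
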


Let us first deduce \eqref{label:eq524} and then prove the claims. Fix $\ens\in\mathscr{F}$. Choose $\beta_3$ using \eqref{eq:ZBoundThm234} so that $|\zed|<\frac{1}{8}$ for all $\varrho\in\ens$ and $\beta < \beta_3$.
We therefore obtain by Claim~\ref{lemma:doubleCos} that
\begin{equation*}
\frac{Z_\ens(\sigma)}{Z_\ens (0)}
\geq
\exp \big[-\DFr\sum_{\varrho \in \ens} \abs{\zed}\sigmaOf{\varrho}^2 \big] \int e^{S(\ens, \phi)}Z_\ens(0)^{-1}[1 + \zed\cos\phiOf{\bar{\varrho}}]\midaLambdaV
\end{equation*}
for $\beta<\beta_3$, where $S(\ens, \phi)$ is defined to be the odd function (in $\phi$)
$$
S(\ens, \phi) := -\sum_{\varrho\in\ens} \frac{\zed \sin\phiOf{\bar{\varrho}} \sin\sigmaOf{\varrho}}{1+\zed \cos\phiOf{\bar{\varrho}}},
$$
and where we note that $Z_\ens(0)^{-1}[1 + \zed\cos\phiOf{\bar{\varrho}}]\midaLambdaV$ defines a probability measure, invariant under the mapping $\phi\mapsto-\phi$. Jensen's inequality and Claim~\ref{newlemma55eq} applied with $D:=\frac{2(1+\varepsilon)}{\varepsilon} \DFr$ now imply that
\begin{equation*}
\frac{Z_\ens(\sigma)}{Z_\ens (0)}
\geq \exp \big(-\DFr\sum_{\varrho \in \ens} \abs{\zed}\sigmaOf{\varrho}^2 \big)
\geq
\exp \left(-\frac{\varepsilon\beta}{2(1+\varepsilon)}\gradNorm{\sigma}\right),\quad \beta<\min(\beta_2, \beta_3),
\end{equation*}
establishing \eqref{label:eq524} with $\beta_1:=\min(\beta_2, \beta_3)$.

\begin{proof}[Proof of Claim~\ref{lemma:doubleCos}]

\begin{multline*}
1 + z\cos(x+y)
= 1 + z(\cos x \cos y - z\sin x \sin y)\\
= (1+z\cos x)\left( 1+ \frac{z}{1+z\cos x} (\cos x (\cos y - 1)-\sin x \sin y)\right).
\end{multline*}
Now, using that
\begin{gather*}
\left| \frac{z}{1+z\cos x} (\cos x (\cos y - 1)-\sin x \sin y) \right| \leq \frac{3}{7}, \quad x,y,z\in\mathbb{R}, \abs{z}<\frac{1}{8},\\
\ln(1+t) \geq t-2t^2, \quad \abs{t}\leq \frac{1}{2},\\
\abs{\sin t} \leq \abs{t}, \quad \cos t - 1 \geq -\frac{1}{2}t^2, \quad t \in \mathbb{R},\\
(s-t)^2 \leq 2(s^2+t^2), \quad s,t \in \mathbb{R},
\end{gather*}
we obtain
\begin{multline*}
\ln\left( 1+ \frac{z}{1+z\cos x} (\cos x (\cos y - 1)-\sin x \sin y)\right) \\
\geq
\frac{z}{1+z\cos x} (\cos x (\cos y - 1)-\sin x \sin y) -
\frac{4z^2}{(1+z\cos x)^2} (\cos^2 x (\cos y - 1)^2+\sin^2 x \sin^2 y)\\
\geq
-\frac{z\sin x\sin y}{1+z\cos x}-\DFr \abs{z} y^2. \qedhere
\end{multline*}
\end{proof}

\begin{proof}[Proof of Claim~\ref{newlemma55eq}]
Fix $\ens\in\mathscr{F}$ and let $\varrho\in\ens$. The fact that $\varrho$ takes integer values and is neutral allows us to express $\sigmaOf{\varrho}$ as
\begin{equation*}
\sigmaOf{\varrho} = \sum_{\substack{j,\ell\in D(\varrho)\\j\thicksim\ell}} c_{\{j,\ell\}}(\sigma_j - \sigma_\ell)
\end{equation*}
for integer $c_{\{j,\ell\}}$ satisfying $|c_{\{j,\ell\}}|\le \frac{1}{2} \sum_j\abs{\varrho(j)}\leq\frac{1}{2}\norm{\varrho}_2^2$ (this may be seen by induction on $\|\varrho\|_1:=\sum_j\abs{\varrho(j)}$. The case that $\|\varrho\|_1=2$ is clear and otherwise we can decompose $\varrho=\varrho_1+\varrho_2$ for neutral densities $\varrho_1, \varrho_2$ supported in $\supp(\varrho)$ and having $\|\varrho_1\|_1,\|\varrho_2\|_1<\|\varrho\|_1$). It follows by Cauchy-Schwarz that
\begin{equation*}
|\sigmaOf{\varrho}|^2 \le |D(\varrho)|\cdot\norm{\varrho}_2^4\sum_{\substack{j,\ell\in D(\varrho)\\j\thicksim\ell}}(\sigma_j - \sigma_\ell)^2.
\end{equation*}
Now, using \eqref{eq:ZBoundThm234} and setting $0<\beta_2<\beta_0$ to be small enough, we obtain for each $\beta < \beta_2$ that
\begin{equation*}
\abs{\zed} \cdot \sigmaOf{\varrho}^2 \leq
\frac{\beta}{D\cdot (d(\varrho)+1)}\sum_{\substack{j,\ell\in D(\varrho)\\j\thicksim\ell}}(\sigma_j - \sigma_\ell)^2.
\end{equation*}
We use now property \ref{enum:thm234PropD} of Theorem~\ref{thm234}, which states that $D(\varrho_1)$, $D(\varrho_2)$ are disjoint whenever $\varrho_1, \varrho_2 \in \ens$ are distinct densities satisfying that $d(\varrho_1)+1, d(\varrho_2)+1 \in [2^k,2^{k+1})$ for some $k\geq 1$, and conclude that
\begin{equation*}
\sum_{\varrho\in\ens}\abs{\zed} \cdot \sigmaOf{\varrho}^2 \leq \frac{\beta}{D}\sum_{k\ge 1} 2^{-k}\sum_{\substack{\varrho\in \ens\\2^k\le d(\varrho)+1<2^{k+1}}}\sum_{\substack{j,\ell\in D(\varrho)\\j\thicksim\ell}}(\sigma_j - \sigma_\ell)^2\le \frac{\beta}{D}\gradNorm{\sigma}.\qedhere
\end{equation*}
\end{proof}

\section{Completing the Proof of Theorem~\ref{thm234}}
In this section we prove Proposition~\ref{prop:ABound}, Theorem~\ref{thm:21}, Proposition~\ref{prop:aExistance} and \eqref{eq:ComplexTrans}. We fix $\Lambda$ to be a square domain with free or periodic b.c. and $\beta>0$ throughout the section.

%ens_0 ens_0

\subsection{Proof of Proposition~\ref{prop:ABound}}\label{sec:chap3}
%Complexity of the squares cover

%In this Section we prove Proposition~\ref{prop:ABound}, establishing the lower and upper bound on $A(\varrho)$ in \eqref{eq:ABounds}.

Fix a density $\varrho$. The proposition is clear when $d(\varrho)=0$ so we assume that $d(\varrho)\ge 1$. The lower bound follows simply from the fact that
\begin{equation*}
  A(\varrho) = \sum_{k=0}^{n(\varrho)} \abs{\SqrsSet} \ge n(\varrho) + 1 = \lceil \log_2(M\cdot d(\varrho)^\alpha)\rceil + 1 \ge \log_2(d(\varrho)+1).
\end{equation*}
The rest of the section is devoted to the proof of the upper bound. Set $b:=\alpha+3+\log_2(M)$ and define
\begin{equation*}
  \gamma(k):=\lfloor \alpha^{-1}(k-b)\rfloor\quad \text{for }k\ge b\text{ integer}.
\end{equation*}

The next lemma allows us to recursively bound $\abs{\SqrsSet}$.

\begin{lemma} \label{lemma:GammaReason}
Let $k\ge b$ be an integer. If $\abs{\SqrsSetGamma} \geq 2$ then
\begin{equation} \label{label:GammaRec}
\abs{\SqrsSet} \leq \frac{1}{2} \abs{\SqrsSetGamma} + \abs{\SqrsSetTagGamma}.
\end{equation}
\end{lemma}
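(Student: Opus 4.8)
The plan is to understand how a minimal cover at scale $k$ relates to a minimal cover at the coarser scale $\gamma(k)$. Since $k \ge b = \alpha + 3 + \log_2 M$ and $\gamma(k) = \lfloor \alpha^{-1}(k-b)\rfloor$, one checks that $\gamma(k) < k$ (so $2^{\gamma(k)}\times 2^{\gamma(k)}$ squares are genuinely smaller) and, more importantly, that the separation threshold $2M2^{\alpha(\gamma(k)+1)}$ defining $\SqrsSetTagGamma$ is comparable to (in fact, bounded by a constant times) $2^k$. This is the arithmetic reason the exponent $b$ is chosen as it is: with $\gamma(k)\le \alpha^{-1}(k-b)$ we get $\alpha(\gamma(k)+1) \le k - b + \alpha = k - 3 - \log_2 M$, hence $2M2^{\alpha(\gamma(k)+1)} \le 2M \cdot 2^{k-3-\log_2 M} = 2^{k-2}$. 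So every square in $\SqrsSetTagGamma$ is separated from the other $\gamma(k)$-scale squares by a distance at most $2^{k-2}$, comfortably less than the side $2^k$ of a $k$-scale square.

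The key step is then a pigeonhole/merging argument. Take the minimal cover $\SqrsSetGamma$ at scale $\gamma(k)$. Each of its squares has diameter $O(2^{\gamma(k)}) \ll 2^k$, so any single $2^{\gamma(k)}$-square is contained in some $2^k$-square, and in fact two $2^{\gamma(k)}$-squares that are within distance $\le 2^k - 2\cdot 2^{\gamma(k)}$ (roughly) of each other can be covered by one common $2^k$-square. Partition $\SqrsSetGamma$ into the \emph{separated} ones, $\SqrsSetTagGamma$, and the rest, $\SqrsSetGamma \setminus \SqrsSetTagGamma$. Each separated square will, in the worst case, need its own $2^k$-square, contributing at most $\abs{\SqrsSetTagGamma}$. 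For the remaining squares: each such square has another square of $\SqrsSetGamma$ within distance $< 2M2^{\alpha(\gamma(k)+1)} \le 2^{k-2}$; I want to show these non-separated squares can be grouped into pairs (or at least clusters), each pair fitting inside a single $2^k\times 2^k$ square, so that they are covered by at most $\tfrac12\abs{\SqrsSetGamma \setminus \SqrsSetTagGamma} \le \tfrac12 \abs{\SqrsSetGamma}$ squares at scale $k$. Combining, $\abs{\SqrsSet} \le \tfrac12\abs{\SqrsSetGamma} + \abs{\SqrsSetTagGamma}$, using the minimality of $\SqrsSet$ (it is no larger than any particular cover we exhibit). The hypothesis $\abs{\SqrsSetGamma} \ge 2$ is what makes the pairing well-posed — with a single square there is nothing to pair and no separated/non-separated dichotomy to exploit.

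I expect the main obstacle to be the clean pairing of the non-separated $2^{\gamma(k)}$-squares into $2^k$-squares: one must argue that "being within $2^{k-2}$ of \emph{some} other square" can be upgraded to a genuine matching (graph of proximity has a perfect or near-perfect matching, or at least a vertex cover bound), and handle boundary/periodicity effects of $\Lambda_L^{\per}$ when squares wrap around — here one uses that a $2^k$-square is defined to contain $2^{k+1}$ points unless $2^k > L$, in which case it is all of $\Lambda$ and the bound is trivial. A convenient device: consider the graph $H$ on $\SqrsSetGamma \setminus \SqrsSetTagGamma$ joining two squares if they lie in a common $2^k$-square; every vertex has degree $\ge 1$, so $H$ has no isolated vertices, and a maximal matching in $H$ covers all vertices up to a set that is itself dominated, giving the factor $\tfrac12$ after replacing each matched pair and each leftover by one $2^k$-square. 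Verifying the geometry that a proximity of $\le 2^{k-2}$ between two $2^{\gamma(k)}$-squares forces them into a common $2^k$-square is the routine-but-essential computation; it hinges precisely on $2^{\gamma(k)} \le 2^{k}$ with room to spare, which again is guaranteed by the choice of $b$.
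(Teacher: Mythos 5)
You have the right high-level strategy — keep the separated $\gamma(k)$-scale squares as is, and compress $\SqrsSetGamma\setminus\SqrsSetTagGamma$ into roughly half as many $2^k$-squares — but the clustering step does not close as stated. Your accounting ``replacing each matched pair and each leftover by one $2^k$-square'' yields $m + (n-2m) = n-m$ squares when a maximum matching of $H$ has size $m$, and this is at most $n/2$ only if the matching is perfect. It need not be: if one non-separated square is close to many others which are mutually far, the relevant graph is a star, the maximum matching has size one, and your count gives $n-1$ squares, not $\lfloor n/2\rfloor$. The paper handles this with Claim~\ref{cl:graph_cover} — a finite graph with no isolated vertices has its vertex set covered by $\lfloor n/2\rfloor$ \emph{connected} sets of size $2$ or $3$ — which is proved by an induction on spanning trees, not by a matching argument. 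The ``or $3$'' is exactly what rescues the star: one covers the center together with \emph{pairs} of leaves.

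The second, related slip is your edge relation defining $H$: ``lie in a common $2^k$-square.'' Once triples are unavoidable, this is too weak — if $s_1\sim s_2$ and $s_2\sim s_3$ in $H$, the diameter of $s_1\cup s_2\cup s_3$ can be close to $2\cdot 2^k$, so the triple need not fit in any single $2^k$-square. The paper's graph $G$ instead joins two squares of $\SqrsSetDtagGamma$ when their distance is below the separation threshold $2M2^{\alpha(\gamma(k)+1)}$, which (as you correctly compute in your first paragraph) is at most $2^{k-2}$; then for a connected triple one gets $d(s_1\cup s_2\cup s_3)\le 3\cdot 2^{\gamma(k)+1} + 2\cdot 2^{k-2} < 2^k$, so the triple does fit in one $2^k$-square. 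Your own bound $\le 2^{k-2}$ is the right quantity to carry into the graph; the ``convenient device'' then abandons it for a weaker relation, and together with the matching miscount this leaves a genuine gap. With the proximity graph, Claim~\ref{cl:graph_cover}, and the triple-diameter estimate, the argument closes exactly as in the paper.
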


We need a simple preparatory claim.
\begin{claim}\label{cl:graph_cover}
  Let $G = (V(G), E(G))$ be a finite, non-empty simple graph with no isolated vertices. Then $V(G)$ is the union of $\lfloor|V(G)|/2\rfloor$ sets, each of which has size $2$ or $3$ and is connected in $G$.
\end{claim}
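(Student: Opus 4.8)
The plan is to reduce the statement in stages and then prove the core case---trees---by a leaf-peeling induction.

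First I would reduce to connected graphs. If $G$ has connected components $C_1,\dots,C_r$, each has at least two vertices (no isolated vertices), and a cover of each $V(C_i)$ by $\lfloor \abs{V(C_i)}/2\rfloor$ connected sets of size $2$ or $3$ assembles into a cover of $V(G)$ using $\sum_i \lfloor \abs{V(C_i)}/2\rfloor \le \lfloor \abs{V(G)}/2\rfloor$ such sets; since a union is unchanged by repeating a set, one pads to exactly $\lfloor \abs{V(G)}/2\rfloor$. Next I would reduce connected graphs to trees: fix a spanning tree $T$ of a connected $G$; any set connected in $T$ is connected in $G$, so a cover of $V(T)$ by $\lfloor \abs{V(T)}/2\rfloor$ sets connected in $T$ suffices for $G$.

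It then remains to prove: every tree $T$ on $m\ge 2$ vertices admits a cover of $V(T)$ by $\lfloor m/2\rfloor$ sets of size $2$ or $3$, each inducing a connected subtree. I argue by strong induction on $m$. The base cases $m\in\{2,3\}$ are immediate (take $V(T)$). For $m\ge 4$, root $T$ at an arbitrary vertex $r$, let $v$ be a leaf of maximal depth, $u$ its parent; all children $c_1=v,\dots,c_t$ of $u$ are then leaves. I distinguish three cases. (i) If $u=r$, then $T$ is a star on $m=t+1\ge 4$ vertices, covered by the triples $\{r,c_1,c_2\},\{r,c_3,c_4\},\dots$, with a final pair $\{r,c_t\}$ if $t$ is odd, using $\lceil t/2\rceil=\lfloor m/2\rfloor$ sets. (ii) If $u\ne r$ and $t=1$, remove the pendant path $\{u,v\}$, apply the induction hypothesis to the tree $T-\{u,v\}$ on $m-2\ge 2$ vertices, and append $\{u,v\}$. (iii) If $u\ne r$ and $t\ge 2$, let $p$ be the parent of $u$ and $T''=T-\{u,c_1,\dots,c_t\}$: if $\abs{V(T'')}\ge 2$, apply the induction hypothesis to $T''$ and cover the star on $\{u,c_1,\dots,c_t\}$ with $\lceil t/2\rceil$ sets; if $\abs{V(T'')}=1$, so that $T''=\{p\}$ and $m=t+2$, use $\{p,u,c_1\}$ together with a $\lceil (t-1)/2\rceil$-set cover of the star on $\{u,c_2,\dots,c_t\}$.

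The only real content is the bookkeeping of set-counts in case (iii): one verifies the elementary inequality $\lceil t/2\rceil+\lfloor (m-t-1)/2\rfloor\le\lfloor m/2\rfloor$, the identity $1+\lceil (t-1)/2\rceil=\lfloor (t+2)/2\rfloor$, and that the degenerate possibility $\abs{V(T'')}=0$ cannot occur (it would force $u=r$). I expect this floor/ceiling accounting, together with correctly isolating the leftover-single-vertex and pure-star configurations---precisely the cases where no disjoint pairing exists, so that genuine overlaps among the covering sets are unavoidable---to be the main (if modest) obstacle; the rest is routine.
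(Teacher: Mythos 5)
Your proof is correct, and it follows essentially the same strategy as the paper's: reduce to trees (via spanning trees and components), then induct by peeling from a deepest leaf. The only real difference is in how much is removed per step. The paper always selects a triple $\{v_1,v_2,v_3\}$ — a leaf $v_1$, its neighbor $v_2$, and a third vertex $v_3$ that is either $v_2$'s unique other neighbor (if $\deg v_2 = 2$) or another leaf hanging off $v_2$ — uses that triple as one covering set, and deletes exactly two of the three vertices ($v_1,v_2$ or $v_1,v_3$, as appropriate), so the set count is exactly $1+\lfloor(m-2)/2\rfloor = \lfloor m/2\rfloor$ with no padding and no separate star case. You instead remove the whole fan of leaves under $u$ at once, which forces a standalone star case, a degenerate $|V(T'')|\in\{0,1\}$ analysis, and some floor/ceiling accounting plus a padding argument since your count comes out $\le\lfloor m/2\rfloor$ rather than exactly equal. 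Both routes are sound; the paper's two-vertices-at-a-time peeling is somewhat tighter, while yours makes the deepest-leaf structure (which the paper's terse "necessarily one has three distinct vertices $v_1,v_2,v_3$" leaves implicit) fully explicit.
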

\begin{proof}
  It suffices to prove the claim when $G$ is a tree. The proof then proceeds by induction on $|V(G)|$. The claim is clear if $|V(G)|\le 3$. Otherwise, necessarily one has three distinct vertices $v_1,v_2,v_3\in V(G)$ with $v_1$ of degree $1$, $\{v_1, v_2\}, \{v_2, v_3\}\in E(G)$ and either $v_2$ has degree $2$ or $v_3$ has degree $1$. In both cases we take $\{v_1, v_2, v_3\}$ as one of the covering sets of $V(G)$. In the first case we then proceed, using the induction step, after erasing $v_1, v_2$ from $G$ and in the second case after erasing $v_1, v_3$.
\end{proof}

\begin{proof}[Proof of Lemma~\ref{lemma:GammaReason}]
We will show that
\begin{equation} \label{label:lab38}
\abs{\SqrsSet} \leq
\frac{1}{2}\abs{\SqrsSetDtagGamma} + \abs{\SqrsSetTagGamma}
\end{equation}
from which the lemma follows as $\abs{\SqrsSetDtagGamma}\le \abs{\SqrsSetGamma}$ by definition. We first note that $\varrho$ is covered by $\SqrsSetTagGamma\cup\SqrsSetDtagGamma$. The relation \eqref{label:lab38} follows if $\SqrsSetDtagGamma=\emptyset$ and otherwise will follow by showing that $\SqrsSetDtagGamma$ can be covered by at most $\frac{1}{2}\abs{\SqrsSetDtagGamma}$ squares of side length $2^k$. To see the last assertion, consider a graph $G$ in which the vertex set is $\SqrsSetDtagGamma$ and two squares $s_1, s_2$ are adjacent if $\dist(s_1, s_2) < 2^{\alpha \gamma(k) + b} = 2M2^{\alpha(\gamma(k)+1)}$. The definitions of $\SqrsSetDtagGamma$, $\SqrsSetTagGamma$ and the assumption $\abs{\SqrsSetGamma} \geq 2$ imply that there are no isolated vertices in $G$. Applying Claim~\ref{cl:graph_cover} to $G$ reduces our task to showing that if $\{s_1, s_2, s_3\}\subseteq V(G)$ is connected then $s_1, s_2, s_3$ may be covered by a single square of side length $2^k$. Indeed, suppose $\dist(s_1, s_2), \dist(s_2, s_3)< 2^{\alpha \gamma(k) + b}$. Then
\begin{equation*}
  d(s_1\cup s_2\cup s_3)\le d(s_1) + d(s_2) + d(s_3) + \dist(s_1, s_2) + \dist(s_2, s_3)  \leq 3\cdot 2^{\gamma(k)+1} + 2\cdot 2^{\alpha \gamma(k) + b-2} < 2^k
\end{equation*}
finishing the proof of the lemma.
\end{proof}

In order to apply \eqref{label:GammaRec} we first check that $\abs{\SqrsSetGamma} \geq 2$ for $b\leq k\leq n(\varrho)$. Indeed,
\begin{equation*}
\gamma(k) \leq \alpha^{-1}(k-b) \leq
\alpha^{-1}(\alpha \log_2d(\varrho) -\alpha -2)\leq
\log_2(d(\varrho)+1)-1.
\end{equation*}

For each $k\geq b$, we now iterate inequality \eqref{label:GammaRec} $\ell(k)$ times, where $\ell(k)$ is the maximal number for which $\gamma^{\ell(k)}(k) \geq 0$ ($\gamma^m$ denotes the m-fold composition of $\gamma$ with itself). Thus,
\begin{equation} \label{label:lab310}
\abs{\SqrsSet} \leq
2^{-\ell(k)}
\abs{\SqrsSetOf{\gamma^{\ell(k)}(k)}{\varrho}} + \sum_{m=0}^{\ell(k)-1} 2^{-m} \abs{\SqrsSetTagOf{\gamma^{m+1}(k)}{\varrho}} \leq
2^{-\ell(k)}\abs{\SqrsSetOf{0}{\varrho}} + \sum_{m=0}^{\ell(k)-1} 2^{-m} \abs{\SqrsSetTagOf{\gamma^{m+1}(k)}{\varrho}}.
\end{equation}

We now estimate $\ell(k)$. One checks by a simple induction that for $k\geq b$,
\begin{equation} \label{eq:gamma_m_bounds}
\alpha^{-m}k - \frac{\alpha+b}{\alpha-1}
\leq
\alpha^{-m}k - b\sum_{j=1}^{m}\alpha^{-j}-\sum_{j=0}^{m-1}\alpha^{-j}
\leq
\gamma^m(k)
\leq
\alpha^{-m}k - b\sum_{j=1}^m\alpha^{-j}, \quad 0\leq m\leq \ell(k).
\end{equation}

Set $k_0:=\frac{\alpha+b}{\alpha-1}$. The lower bound in \eqref{eq:gamma_m_bounds} immediately implies that
\newcommand*{\alphadelta}{(\alpha-1)^{-1}(\alpha+b)}
\begin{equation} \label{label:eq314}
\ell(k) \geq \twopartdef { 0 } {0 \leq k < k_0} {{\frac{\log_2(k/k_0)}{\log_2(\alpha)}}} {\text{otherwise}}.
\end{equation}

Let $m, j \geq 0$. We estimate the cardinality $\abs{N_{m,j}}$ of the sets
$$
N_{m,j} := \{k: \gamma^m(k) = j\}.
$$
Let $k_-$ be the minimal and $k_+$ the maximal integer in $N_{j,m}$. Then by \eqref{eq:gamma_m_bounds}
%(the maximum exists for every $j,m$, since $\gamma^m(k) > \alpha^{-m}k - k_0 > j$ for every $k>k_1$, where $k_1$ is some positive integer).
%$$\alpha^{-m}k_+ - b\sum_{\ell=1}^m a^{-\ell}-\sum_{\ell=0}^{m-1}\alpha^{-\ell} \leq \gamma^m(k_+) = j = \gamma^m(k_-) \leq \alpha^{-m}k_- - b\sum_{\ell=1}^m\alpha^{-\ell},$$
\begin{equation} \label{label:lab316}
\abs{N_{m,j}} = k_+ - k_- + 1 \leq \alpha^m \sum_{\ell=0}^{m-1}\alpha^{-\ell} + 1 \leq \alpha^m \sum_{\ell=0}^{\infty}\alpha^{-\ell} + 1 = \alpha^m \frac{\alpha}{\alpha-1} + 1 \leq \alpha^m \frac{2\alpha}{\alpha-1}.
\end{equation}

Now, using \eqref{label:lab310} we have
\begin{equation} \label{eq:ABound0}
A(\varrho) = \sum_{k=0}^{n(\varrho)}\abs{\SqrsSet} \leq
\abs{\SqrsSetOf{0}{\varrho}}\sum_{k=0}^{n(\varrho)} 2^{-\ell(k)} +
\sum_{k=0}^{n(\varrho)}\sum_{m=0}^{\ell(k)-1} 2^{-m} \abs{\SqrsSetTagOf{\gamma^{m+1}(k)}{\varrho}} .
\end{equation}

We continue by bounding both terms. First note that by \eqref{eq:alphaMBounds} and \eqref{label:eq314} %\alpha<2
\begin{equation} \label{eq:ABound1}
\sum_{k=0}^{n(\varrho)} 2^{-\ell(k)} \leq \sum_{k=0}^{\infty} 2^{-\ell(k)} \leq k_0 + \sum_{k=k_0+1}^{\infty} 2^{-\ell(k)} \leq k_0 + \sum_{k=k_0+1}^{\infty}(k_0/k)^{1/\log_2(\alpha)} < \infty.
\end{equation}

Let $k(\varrho)$ be the largest k for which $\SqrsSetTag \neq \emptyset$, so using \eqref{eq:alphaMBounds} and \eqref{label:lab316} %\alpha<2
\begin{multline} \label{eq:ABound2}
\sum_{k=0}^{n(\varrho)}\sum_{m=0}^{\ell(k)-1} 2^{-m} \abs{\SqrsSetTagOf{\gamma^{m+1}(k)}{\varrho}}
=
\sum_{j=0}^{k(\varrho)} \bigg( \sum_{k=0}^{n(\varrho)}\sum_{m=0}^{\ell(k)-1} 2^{-m} \delta_{\gamma^{m+1}(k), j} \bigg) \abs{\SqrsSetTagOf{j}{\varrho}} \\
\leq
\sum_{j=0}^{k(\varrho)} \bigg( \sum_{m,k=0}^{\infty} 2^{-m} \delta_{\gamma^{m+1}(k), j} \bigg) \abs{\SqrsSetTagOf{j}{\varrho}}
=
\sum_{j=0}^{k(\varrho)} \bigg( \sum_{m=0}^{\infty} 2^{-m} \abs{N_{m+1,j}} \bigg) \abs{\SqrsSetTagOf{j}{\varrho}} \\
\leq
\sum_{j=0}^{k(\varrho)} \bigg( \alpha \frac{2\alpha}{\alpha-1} \sum_{m=0}^{\infty} \big( \frac{\alpha}{2} \big) ^m \bigg) \abs{\SqrsSetTagOf{j}{\varrho}}
\leq
\frac{4\alpha^2}{(\alpha-1)(2-\alpha)}\sum_{j=0}^{\infty} \abs{\SqrsSetTagOf{j}{\varrho}}.
\end{multline}
The upper bound in \eqref{eq:ABounds} follows by substituting \eqref{eq:ABound1} and \eqref{eq:ABound2} in \eqref{eq:ABound0}.

\subsection{Decomposition of the Weights as a Mixture of Ensembles}\label{sec:chap2}

In this section we prove Theorem~\ref{thm:21}.

\subsubsection{Additional Notation}
A charge density $\varrho_1$ is said to be \textit{compatible} with an ensemble $\ensE$ if
$$
\varrho_1 = \sum_{\varrho \in \ensE} \varepsilon(\varrho_1, \varrho)\varrho, \ \ \ \text{with} \ \ \ \varepsilon(\varrho_1, \varrho) \in \{-1,0,1\}.
$$
Note that the $\varepsilon(\varrho_1, \varrho)$ are uniquely determined, as the densities in an ensemble have disjoint supports.

We say that an ensemble $\ensE_1$ is a \textit{parent} of an ensemble $\ensE_2$, and write $\ensE_1 \rightarrow \ensE_2$, when every charge density $\varrho \in \ensE_2$ is compatible with $\ensE_1$. We say that a density $\varrho$ is a \textit{constituent} of a density $\varrho_1$, and write $\varrho \subset \varrho_1$, when $\supp\varrho \subset \supp\varrho_1$ and $\varrho(j) = \varrho_1(j)$ for all $j \in \supp\varrho$.

For an integer $k\ge-1$, an ensemble $\ensE$ is said to be an \textit{$k$-ensemble} if
$$
\dist(\varrho_1, \varrho_2) > 2^k \quad \text{$\varrho_1, \varrho_2 \in \ensE$, $\varrho_1\neq \varrho_2$}.
$$

Denote also $A_k(\varrho):=\abs{\SqrsSetOf{k}{\varrho}}$ for $k\ge0$ and set $A_{-1}(\varrho) := A_0(\varrho) = \abs{\supp\varrho}$.

\subsubsection{The Basic Lemma}
\begin{lemma} \label{lemma21}
Let $k\geq0$ and let $\ensE$ be an ensemble.
There exists a positive absolute constant $C_1$, a finite family of $k$-ensembles $\mathscr{F}'$, with $\ensE\to\ensE'$ for every $\ensE'\in\mathscr{F}'$, positive $(c_{\ensE'})$, $\ensE'\in\mathscr{F}'$, summing to $1$, and real $(K'(\varrho))$, $\varrho\in\ensE'\in\mathscr{F}'$, such that for every $\psi:\Lambda\to\mathbb{R}$,
\begin{equation} \label{label:eq211}
\prod_{\varrho \in \ensE} (1+K(\varrho)\cos\psiOf{\varrho}) = \sum_{\ensE'\in\mathscr{F}'} c_{\ensE'} \prod_{\varrho \in \ensE'} (1+K^{\prime}(\varrho)\cos\psiOf{\varrho}).
\end{equation}
For every $\ensE'\in\mathscr{F}'$, the following properties are satisfied for each $\varrho \in \ensE'$:
\begin{enumerate}
\item \label{lemma22:propertyB}
For any two distinct densities $\varrho_1, \varrho_2 \subset \varrho$, compatible with $\ensE$,
$$
\dist(\varrho_1, \varrho_2) \leq 2^k.
$$
\item \label{lemma22:propertyC}
Let $\varepsilon(\varrho, \varrho')\in\{-1,0,1\}$ be such that $\varrho = \sum_{\varrho' \in \ensE} \varepsilon(\varrho, \varrho')\varrho' $. Then
\begin{equation*}
\abs{K^{\prime}(\varrho)} \leq
3^{\abs{\{ \varrho{\text{\tiny{*}}} \in \ensE: \dist(\varrho, \varrho{\text{\tiny{*}}}) \leq 2^k\}}}
\prod_{\varrho' \in \ensE} \abs{K(\varrho')}^{\abs{\varepsilon(\varrho, \varrho')}}.
\end{equation*}
Moreover, if $\ensE$ is a $(k-1)$-ensemble, then
\begin{equation} \label{label:eq212}
\abs{K^{\prime}(\varrho)} \leq e^{C_1 A_{k-1}(\varrho)} \prod_{\varrho' \in \ensE} \abs{K(\varrho')}^{\abs{\varepsilon(\varrho, \varrho')}}.
\end{equation}
\end{enumerate}
\end{lemma}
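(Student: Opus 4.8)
The plan is to build $\mathscr{F}'$ by an iterative merging procedure: starting from the ensemble $\ensE$ (which may fail to be a $k$-ensemble), repeatedly pick a pair of densities $\varrho_1, \varrho_2 \in \ensE$ with $\dist(\varrho_1, \varrho_2)\le 2^k$ and ``split'' the corresponding factor of the product. First I would record the elementary trigonometric identity that drives the whole argument: for $a,b\in\mathbb R$,
\begin{equation*}
(1+a\cos x)(1+b\cos y) = \tfrac12\,(1+a\cos x + b\cos y + ab\cos(x-y)) + \tfrac12\,(1+a\cos x + b\cos y - ab\cos(x-y)),
\end{equation*}
and more usefully, since we want each term to again be a product of factors of the form $1 + K\cos\langle\psi,\varrho\rangle$, the expansion
\begin{equation*}
(1+a\cos\psiOf{\varrho_1})(1+b\cos\psiOf{\varrho_2}) = \sum_{\varepsilon_1,\varepsilon_2\in\{-1,0,1\}} w_{\varepsilon_1,\varepsilon_2}\,\bigl(1 + K_{\varepsilon_1,\varepsilon_2}\cos\psiOf{\varepsilon_1\varrho_1 + \varepsilon_2\varrho_2}\bigr),
\end{equation*}
where the weights $w_{\varepsilon_1,\varepsilon_2}$ are nonnegative, sum to $1$, and $|K_{\varepsilon_1,\varepsilon_2}| \le 3|a|^{|\varepsilon_1|}|b|^{|\varepsilon_2|}$ (the factor $3$ absorbing the combinatorics of collecting like cosine terms, using $\cos$ even). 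One obtains this by writing $\cos = \tfrac12(e^{i\cdot}+e^{-i\cdot})$, multiplying out the four exponential terms in each factor, grouping by the resulting frequency vector in $\{-1,0,1\}\varrho_1 + \{-1,0,1\}\varrho_2$, and then normalizing.

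Next I would set up the iteration. Call a pair $\varrho_1,\varrho_2$ in a current ensemble \emph{close} if $\dist(\varrho_1,\varrho_2)\le 2^k$. As long as a current ensemble $\ensE''$ (on some branch of the construction) contains a close pair, apply the above two-density expansion to the two corresponding factors, replacing that branch by finitely many new branches; on each new branch the pair $\{\varrho_1,\varrho_2\}$ has been replaced by the single density $\varepsilon_1\varrho_1 + \varepsilon_2\varrho_2$ (dropping it entirely when $\varepsilon_1=\varepsilon_2=0$), while all other densities are untouched, and the new coefficient is carried along with the multiplicative bound above. I would argue termination by a monotonicity/potential argument: each merge step strictly decreases $\sum_{\varrho}|\supp\varrho|$ summed over the ensemble is wrong (merging can't increase support), so instead use that the number of densities strictly decreases, or that the total number of close pairs decreases — one has to be slightly careful because a merged density can be close to things neither constituent was close to, so the cleanest bookkeeping is: the multiset of supports only ever loses elements or has two of its members replaced by their (disjoint) union, hence the number of densities is non-increasing and strictly decreases at each genuine merge, giving termination after at most $|\ensE|$ rounds on each branch; the branching is finite ($9$-fold per step) so $\mathscr{F}'$ is a finite family. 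When no branch has a close pair, every resulting ensemble is a $k$-ensemble, and $\ensE\to\ensE'$ holds because every density produced is, by construction, a $\{-1,0,1\}$-combination of densities of $\ensE$ — this is preserved under the merge step since $\varepsilon_1\varrho_1+\varepsilon_2\varrho_2$ with $\varepsilon_i\in\{-1,0,1\}$ stays in the $\{-1,0,1\}$-span and the constituents had disjoint supports.

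Then I would verify the two stated properties. Property~\ref{lemma22:propertyB}: if $\varrho_1,\varrho_2\subset\varrho$ are distinct and compatible with $\ensE$, then since $\varrho$ was assembled by merging a connected (in the ``close'' graph) cluster of original densities, any two constituents are linked by a chain of close pairs, hence $\dist(\varrho_1,\varrho_2)$ is bounded — but to get the clean bound $2^k$ one should observe that a density $\varrho$ of the final ensemble is supported on a set of diameter $\le 2^k\cdot(\text{number of constituents}) + \sum d(\text{constituents})$, which is not obviously $\le 2^k$; the correct reading is that \ref{lemma22:propertyB} asserts $\dist(\varrho_1,\varrho_2)\le 2^k$ for the constituents of $\varrho$, which holds because at the moment two sub-clusters were merged they were within $2^k$, and $\dist$ between constituents only refers to the minimal pairwise distance between their supports, so one tracks that the ``close'' relation used to merge witnesses exactly this. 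Property~\ref{lemma22:propertyC}, first bound: iterate the $|K_{\varepsilon_1,\varepsilon_2}|\le 3\,(\cdots)$ estimate along the merge history of $\varrho$; each original density $\varrho'$ with $\varepsilon(\varrho,\varrho')\ne 0$ contributes one factor $|K(\varrho')|$, and the number of factors of $3$ is at most the number of merge steps that went into $\varrho$, which is at most the number of original densities within distance $2^k$ of $\varrho$ (since only those can be ``close'' to the growing cluster) — this gives the stated $3^{|\{\varrho_{*}\in\ensE:\,\dist(\varrho,\varrho_{*})\le 2^k\}|}$. For the refined bound \eqref{label:eq212} when $\ensE$ is a $(k-1)$-ensemble: here I would use that the original densities being $2^{k-1}$-separated forces the ``close at scale $2^k$'' graph to be sparse — specifically, the number of original densities meeting a fixed $2^k\times 2^k$ square is $O(1)$, and a cluster of diameter comparable to that of $\varrho$ can be covered by $O(A_{k-1}(\varrho))$ such squares, so the count of close original densities is $O(A_{k-1}(\varrho))$; absorbing the $O(1)$ into $C_1$ gives $3^{\cdots}\le e^{C_1 A_{k-1}(\varrho)}$.

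\textbf{Main obstacle.} The genuinely delicate point is the combinatorial estimate behind \eqref{label:eq212}: translating the $(k-1)$-separation hypothesis into a linear-in-$A_{k-1}(\varrho)$ bound on the number of original densities within distance $2^k$ of the assembled density $\varrho$, uniformly over the merge history. One must show that the union of the supports of the constituents of $\varrho$, together with a $2^k$-neighbourhood, is covered by $O(A_{k-1}(\varrho))$ squares of side $2^{k-1}$, and that each such square contains $O(1)$ original densities because they are pairwise $2^{k-1}$-separated; a packing argument in $\mathbb Z^2$ does this, but making the constants genuinely absolute (independent of $\Lambda$, $k$, and the branch) requires care, especially near the boundary and for the periodic domain. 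The trigonometric bookkeeping and the termination argument, by contrast, I expect to be routine.
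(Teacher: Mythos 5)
Your proposal follows essentially the same route as the paper, whose version of your two-density splitting identity is the four-term convex combination
\[
(1+K_1\cos\langle\psi,\varrho_1\rangle)(1+K_2\cos\langle\psi,\varrho_2\rangle)=\tfrac13(1+3K_1\cos\langle\psi,\varrho_1\rangle)+\tfrac13(1+3K_2\cos\langle\psi,\varrho_2\rangle)+\tfrac16(1+3K_1K_2\cos\langle\psi,\varrho_1-\varrho_2\rangle)+\tfrac16(1+3K_1K_2\cos\langle\psi,\varrho_1+\varrho_2\rangle),
\]
iterated over close pairs with the factor-of-$3$ accumulation tracked exactly as you outline. The estimate you flag as the main obstacle is handled by precisely your packing argument: any $2^{k-1}\times 2^{k-1}$ square meets at most four densities of a $(k-1)$-ensemble (by quartering), and the $2^{k+1}$-neighbourhood of $\supp\varrho$ is covered by $\le 25\,A_{k-1}(\varrho)$ such squares, giving $n_{\ensE}(\varrho)\le 100\,A_{k-1}(\varrho)$ with constants absolute and independent of the branch, the domain, and $k$.
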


\begin{proof}[Proof of Lemma \ref{lemma21}]
The lemma follows from iterated application of the trigonometric identity
\begin{multline} \label{label:eq213}
(1+K_1\cos\psiOf{\varrho_1}) (1+K_2\cos\psiOf{\varrho_2}) =
\frac{1}{3}(1+3K_1\cos\psiOf{\varrho_1}) + \frac{1}{3} (1+3K_2\cos\psiOf{\varrho_2}) \\
+ \frac{1}{6} (1+3K_1 K_2\cos\psiOf{\varrho_1 - \varrho_2})
+ \frac{1}{6} (1+3K_1 K_2\cos\psiOf{\varrho_1 + \varrho_2}),
\end{multline}
which expresses the left-hand side as a convex combination of expressions of the form $1+K\cos\psiOf{\varrho}$.
Note that all densities on the right-hand side of \eqref{label:eq213} are compatible with $\{\varrho_1, \varrho_2\}$.

We start by applying \eqref{label:eq213} to two arbitrary factors on the left-hand side of \eqref{label:eq211} corresponding to two densities $\varrho_1, \varrho_2 \in \ensE$, for which
\begin{equation}\label{label:eq214}
\dist(\varrho_1, \varrho_2) \leq 2^{k}
\end{equation}
holds.
(If there are no such factors, the lemma holds trivially).
The right-hand side of identity \eqref{label:eq213} is the inserted on the left-hand side of \eqref{label:eq211}, replacing these two factors by a sum of four terms, and by expanding, we obtain a sum of four products.
If one of the resulting products contains two factors corresponding to two charge densities, $\varrho_1^{\prime}, \varrho_2^{\prime}$, satisfying \eqref{label:eq214}, we apply identity \eqref{label:eq213} again and expand the resulting expression into a sum of products.
We repeat this operation until we obtain a sum over products indexed by ensembles $\ensE'$ with the property that, for arbitrary $\varrho_1, \varrho_2 \in \ensE'$, $\dist(\varrho_1, \varrho_2) > 2^{k}$.
Note that after every application of \eqref{label:eq213}, every product still corresponds to an ensemble, i.e. all the densities in the product are with disjoint supports.

Clearly, the process yields $c_{\ensE'} > 0$, $\sum_\gamma c_{\ensE'} = 1$, and $\ensE \rightarrow \ensE'$ for each $\ensE'$. Property \eqref{lemma22:propertyB} follows directly from \eqref{label:eq214}. We are left with proving property \eqref{lemma22:propertyC}.

At intermediate stages of the above procedure, we have
$$
\prod_{\varrho \in \ensE} (1+K(\varrho)\cos\psiOf{\varrho}) = \sum_{\ensI} c_{\ensI} \prod_{\varrho \in \ensI} (1+K_{\ensI}(\varrho)\cos\psiOf{\varrho}),
$$
with $\ensE \rightarrow \ensI$.
Let $\varrho\in\ensE'$, $\ensE'\in\mathscr{F}'$, and consider an intermediate ensemble $\ensI$ such that $\varrho$ is compatible with $\ensI$ (if $\varrho$ is not compatible with $\ensI$, further operations on the factors corresponding to densities in $\ensI$ can never produce $\varrho$). We consider all possible applications of identity \eqref{label:eq213} to pairs $\{ \varrho_1, \varrho_2\} \subset \ensI$:
\begin{enumerate}[label=\roman*]
  \item \label{proof22:case1}
  $\pm\varrho_1 \subset \varrho, \varrho_2 \cap \varrho=\emptyset$, or
  \item \label{proof22:case2}
  $\pm\varrho_2 \subset \varrho, \varrho_1 \cap \varrho=\emptyset$, or
  \item \label{proof22:case3}
  $\pm\varrho_1 \subset \varrho, \pm\varrho_2 \subset \varrho$, or
  \item \label{proof22:case4}
  $\varrho_1 \cap \varrho=\emptyset, \varrho_2 \cap \varrho=\emptyset$.
\end{enumerate}

Clearly, in case \eqref{proof22:case4} the application of \eqref{label:eq213} have no effect on $K^{\tp}(\varrho)$.

In case \eqref{proof22:case1}, the term on the right-hand side of identity \eqref{label:eq213}, for which $\varrho$ is still compatible with its ensemble,
is the one in which $\varrho_2$ is eliminated
($\varrho$ is not compatible with the ensemble $\ensI_2$, of the term for which $\varrho_1$ is eliminated, as $(\supp\varrho_1) \cap (\cup_{\varrho \in \ensI_1} \supp\varrho) = \emptyset$, and $\varrho$ is also not compatible with the ensembles $\ensI_{\pm}$, for which $\varrho_1\pm\varrho_2 \in \ensI_{\pm}$, since $\varrho_1\pm\varrho_2$ cannot be separated any more).
Thus, the factor $3K_1$ will be a factor of $K^{\prime}(\varrho)$.
Note that in every application of \eqref{label:eq213} of type \eqref{proof22:case1}, $\dist(\varrho, \varrho^{\prime}) \leq 2^{k}$ for some $\varrho' \subset \varrho_2, \varrho' \in \ensE$. Therefore, the total number of applications of identity \eqref{label:eq213}, of type \eqref{proof22:case1}, which affects $K^{\prime}(\varrho)$, is at most
$$
\abs{\{\varrho' \in \ensE: \varrho \cap \varrho^{\prime} = \emptyset, \dist(\varrho, \varrho^{\prime}) \leq 2^{k} \}}.
$$
Case \eqref{proof22:case2} is the same as case \eqref{proof22:case1}, with $\varrho_1$ and $\varrho_2$ interchanged.

In case \eqref{proof22:case3}, the term on the right-hand side of identity \eqref{label:eq213}, for which $\varrho$ is still compatible with its ensemble,
is either the one with a term of the density $\varrho_1 + \varrho_2$, or the one with a term of the density $\varrho_1 - \varrho_2$.
In any case, the factor $3K_1 K_2$ will be a factor of $K^{\prime}(\varrho)$.
The number of such possible applications of \eqref{label:eq213} is exactly $\abs{\{\varrho'\in\ensE: \varrho' \subset \varrho\}}$.

From the discussion of these cases, we now conclude that
$$
\abs{K^{\prime}(\varrho)} \leq 3^{n_{\ensE}(\varrho)} \prod_{\varrho' \in \ensE} \abs{K(\varrho')}^{\abs{\varepsilon(\varrho, \varrho')}},
$$
where
$$
n_{\ensE}(\varrho):= \abs{\{ \varrho^\prime \in \ensE: \dist(\varrho, \varrho^{\prime}) \leq 2^{k}\}}.
$$
To finish the proof, we show that if $\ensE$ is a $(k-1)$-ensemble then $n_{\ensE}(\varrho) \leq C_1 A_{k-1}(\varrho)$, for some positive absolute constant $C_1$. First note that the case of $k=0$ holds trivially.

For $\ell\geq 1$, let $\bar{A}_\ell(\varrho)$ be the minimal number of $2^\ell \times 2^\ell$ squares needed to cover $\{ j \in \Lambda: \dist(j, \supp\varrho) \leq 2^{\ell+1} \}$. Then $\bar{A}_\ell(\varrho) \leq 25 A_\ell(\varrho)$.

Note that for any two distinct densities $\varrho_1, \varrho_2 \in \ensE$, it holds that $\dist(\varrho_1, \varrho_2) > 2^{k-1}$, and therefore, any $2^{k-1} \times 2^{k-1}$ square cannot intersect more than four different charges in $\ensE$ (if more than four densities intersect such a square, divide the square into four $2^{k-2} \times 2^{k-2}$ squares, and so at least one of these parts intersect two different densities. Therefore the distance between these two densities is less or equal to $2^{k-1}$, which is a contradiction). Therefore we obtain
\begin{equation*}
n_{\ensE}(\varrho) \leq 4\bar{A}_{k-1}(\varrho) \leq 100 A_{k-1}(\varrho). \qedhere
\end{equation*}

\end{proof}

We remark that the final $\mathscr{F}'$ of Lemma~\ref{lemma21} may depend on the order in which we apply \eqref{label:eq213}.
\subsubsection{Proof of Theorem \ref{thm:21}}

Let $C(N_j) = \sum_{q=1}^{N_j} e^{-q^2}$. We write
$$
\lambda_j(\psi) = \sum_{q=1}^{N_j} \frac{e^{-q^2}}{C(N_j)}\left(1 + 2\cdot C(N_j)e^{q^2}\hat{\lambda_j}(q)\cos(q\psi)\right),
$$
and hence,
\begin{equation} \label{label:eq52}
\prod_{j \in \Lambda} \lambda_j(\psi(j)) = \sum_{\vec{q}} \xi(\vec{q}) \prod_{j \in \Lambda} \big[ 1 + z_j(q_j)\cos(q_j\psi(j)) \big],
\end{equation}
with
\begin{equation} \label{eq:z21Def}
\vec{q} = (q_j)_{j \in \Lambda}\in\{1,2,\ldots\}^\Lambda, \quad \xi(\vec{q}) = \prod_{j\in\Lambda} \frac{e^{-q_j^2}}{C(N_j)}, \quad z_j(q_j) = 2C(N_j)e^{q_j^2}\hat{\lambda_j}(q_j).
\end{equation}
Note that $\sum_{\vec{q}} \xi(\vec{q}) = 1$.

Therefore, to prove Theorem \ref{thm:21}, it is enough to show that there exists a positive absolute constant $\DTwo$, and a family $\mathscr{F}_{\vec{q}}$ of ensembles, satisfying properties \ref{enum21:a}-\ref{enum21:d} of Theorem \ref{thm:21}, such that
\begin{equation} \label{label:eq21}
\prod_{j \in \Lambda} \big[ 1 + z(q_j)\cos(q_j\psi(j)) \big] =
\sum_{\ens \in \mathscr{F}_{\vec{q}}} c_\ens^{\prime} \prod_{\varrho \in \ens} [1+K(\varrho)\cos\psiOf{\varrho}],
\end{equation}
where $c_\ens^{\prime} > 0$, $\ens\in\mathscr{F}_{\vec{q}}$ and $\sum_{\ens\in\mathscr{F}_{\vec{q}}} c_\ens^{\prime} = 1$.

To prove this, we apply Lemma~\ref{lemma21} iteratively, to an appropriately chosen $k$-ensembles, $\ensE_k$. Roughly speaking, we shall apply the lemma to ensembles which do not satisfy the assertions of Theorem \ref{thm:21}.
We define the iterative process by induction. Let $\ensQ_{-1} := \{\varrho^j\}_{j \in \Lambda}$ where $\varrho^j:=q_j\cdot\delta_j$, and let $\ensN_{-1}:=\emptyset$. Note that $\ensQ_{-1}$ is a $(-1)$-ensemble.
Given $\ensQ_{k-1}$, $\ensN_{k-1}$, $k\geq0$, define $\ensQ_{k}$ to be a $k$-ensemble in the following way.
First, if $\abs{\ensQ_{k-1} \setminus \ensN_{k-1}}\leq 1$, we set $\ensQ_{k} := \ensQ_{k-1}$, $\ensN_{k}:=\ensN_{k-1}$.
Otherwise, apply Lemma~\ref{lemma21} on $\ensE=\ensQ_{k-1}\setminus\ensN_{k-1}$, which is a $(k-1)$-ensemble, to obtain
\begin{align*}
\prod_{\varrho \in \ensQ_{k-1}} (1+K(\varrho)\cos\psiOf{\varrho})
&=
\prod_{\varrho \in \ensN_{k-1}} (1+K(\varrho)\cos\psiOf{\varrho})
\prod_{\varrho \in \ensQ_{k-1}\setminus\ensN_{k-1}} (1+K(\varrho)\cos\psiOf{\varrho}) \\
&=
\prod_{\varrho \in \ensN_{k-1}} (1+K(\varrho)\cos\psiOf{\varrho})
\sum_{\ensE'\in\mathscr{F}'} c_{\ensE'} \prod_{\varrho \in \ensE'} (1+K^{\prime}(\varrho)\cos\psiOf{\varrho}) \\
&=
\sum_{\ensE'\in\mathscr{F}'} c_{\ensE'} \prod_{\varrho \in \ensE'\cup\ensN_{k-1}} (1+K^{\prime}(\varrho)\cos\psiOf{\varrho}).
\end{align*}
For each $\ensE'\in\mathscr{F}'$ we set $\ensQ_k=\ensQ_{k,\ensE'}:= \ensE' \cup \ensN_{k-1}$, and continue the iterations with each $\ensQ_k = \ensQ_{k,\ensE'}$ separately.
Given $\ensQ_k$, we define $\ensN_k$ by first setting $\ensN_k = \ensN_{k-1}$, and then going over the neutral densities $\varrho \in \ensQ_k \setminus \ensN_{k-1}$, sorted in an ascending order of $d(\varrho)$, and setting $\ensN_k=\ensN_k\cup\{\varrho\}$ if

\begin{equation} \label{label:eq217}
\dist(\varrho_1, \varrho) \geq M[\min(d(\varrho_1), d(\varrho))]^\alpha, \quad \text{$\varrho_1\in\ensQ_k$, $Q(\varrho_1)=0$},
\end{equation}
and
\begin{equation} \label{label:eq218}
\dist(\varrho_1, \varrho) \geq Md(\varrho)^\alpha, \quad \varrho_1 \in \ensQ_k \setminus \ensN_k
\end{equation}
($\ensN_k$ might change during this iteration, and we assume that in \eqref{label:eq218} it is the most updated one).

\begin{remark}
One may show that $\ensN_{k}$ is the maximal (with respect to inclusion) ensemble of neutral densities with $\ensN_{k-1}\subset\ensN_{k}$, such that \eqref{label:eq217} and \eqref{label:eq218} hold for every $\varrho\in\ensN_{k}$.
\end{remark}

Since $\Lambda$ is a finite graph, for $k$ large enough every $\ensQ_k\setminus \ensN_{k}$ is either empty or is an ensemble consisting of a single non-neutral density.
We then set $\mathscr{F}_{\vec{q}}$ to be the collection of the ensembles $\ensQ_k$.
It is straightforward to see that \eqref{label:eq21} holds with $c_\ens^{\prime} > 0$, $\ens\in\mathscr{F}_{\vec{q}}$ and $\sum_{\ens\in\mathscr{F}_{\vec{q}}} c_\ens^{\prime} = 1$, and that properties \ref{enum21:a} and \ref{enum21:b} of Theorem \ref{thm:21} are satisfied by each $\ens\in\mathscr{F}_{\vec{q}}$.

We continue to verify properties \ref{enum21:c} and \ref{enum21:d} of Theorem \ref{thm:21}.

\begin{claim} \label{lemmaHelp21}
Let $k\geq0$ and let $\varrho \in \ensQ_k$ be a neutral density with $Md(\varrho)^\alpha \leq 2^k$. Then $\varrho \in \ensN_k$.
\end{claim}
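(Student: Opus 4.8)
I would prove the claim by following the greedy construction of $\ensN_k$ from the proof of Theorem~\ref{thm:21} and showing that, as soon as $Md(\varrho)^\alpha\le 2^k$, a neutral density $\varrho\in\ensQ_k$ not already in $\ensN_{k-1}$ is examined at step $k$ and satisfies both selection tests \eqref{label:eq217} and \eqref{label:eq218}, so that it is placed into $\ensN_k$. It is convenient to isolate one auxiliary fact first, which does not use the diameter hypothesis and disposes of the only degenerate configuration: \emph{for every $k\ge 0$, if $\ensQ_k\setminus\ensN_k$ consists of a single density, then that density is non-neutral.}

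To prove this Observation I would suppose $\ensQ_k\setminus\ensN_k=\{\varrho\}$ with $\varrho$ neutral, so every density of $\ensQ_k$ is neutral, and pass to the last step $k_1\le k$ taking the non-trivial branch of the construction (such a step exists, since $|\ensQ_{-1}\setminus\ensN_{-1}|=|V(\Lambda)|>1$ makes step $0$ non-trivial), so that $\ensQ_k=\ensQ_{k_1}$, $\ensN_k=\ensN_{k_1}$ and $\ensE':=\ensQ_{k_1}\setminus\ensN_{k_1-1}$ is the fresh $k_1$-ensemble produced by Lemma~\ref{lemma21} at step $k_1$. Then $\ensE'$ consists of neutral densities, $\varrho\in\ensE'$, and $\varrho$ is the unique density of $\ensE'$ rejected by the greedy step (all others being accepted). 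A rejection of $\varrho$ comes with a density $\varrho_1\ne\varrho$ witnessing the failure of \eqref{label:eq217} or \eqref{label:eq218}. If $\varrho_1\in\ensE'$ then $\varrho_1$ is accepted, has $d(\varrho_1)\ge d(\varrho)$ (it is processed after $\varrho$), and the instance of the same test that $\varrho_1$ satisfies when accepted — using that $\varrho$, never accepted, persists in $\ensQ_{k_1}$ (resp. in $\ensQ_{k_1}\setminus\ensN_{k_1}$ throughout the step) — contradicts the failure. If $\varrho_1\in\ensN_{k_1-1}$ (possible only from a \eqref{label:eq217}-failure, since \eqref{label:eq218} for $\varrho$ only involves densities of $\ensE'$), let $j\le k_1-1$ be the step at which $\varrho_1$ was added to $\ensN_j$; since $\varrho$ is compatible with $\ensQ_j$, and since the densities of $\ensQ_j$ appearing in that decomposition lie in $\ensQ_j\setminus\ensN_j$ (a density that has entered $\ensN_j$ is never modified in later ensembles, hence cannot be a proper constituent of the still-unlocked $\varrho$) and their supports cover $\supp\varrho$, condition \eqref{label:eq218} for $\varrho_1$ at step $j$ forces $\dist(\varrho_1,\varrho)\ge Md(\varrho_1)^\alpha\ge M[\min(d(\varrho_1),d(\varrho))]^\alpha$, again contradicting the failure. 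Hence $\varrho$ is not rejected, a contradiction, which proves the Observation.

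Given the Observation, let $\varrho\in\ensQ_k$ be neutral with $Md(\varrho)^\alpha\le 2^k$, and set $\ensN_{-1}:=\emptyset$. If $\varrho\in\ensN_{k-1}$ then $\varrho\in\ensN_k$ since $(\ensN_j)_j$ is increasing. Otherwise $\varrho\in\ensQ_k\setminus\ensN_{k-1}$; were step $k$ to take the trivial branch, $\ensQ_k\setminus\ensN_k=\ensQ_{k-1}\setminus\ensN_{k-1}$ would be the single neutral density $\{\varrho\}$, contradicting the Observation, so step $k$ is non-trivial and $\ensQ_k=\ensE'\cup\ensN_{k-1}$ with $\ensE':=\ensQ_k\setminus\ensN_{k-1}$ a $k$-ensemble containing $\varrho$; thus $\varrho$ is processed by the greedy step. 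All competitors of $\varrho$ in \eqref{label:eq218} lie in $\ensE'$, hence are at distance $>2^k\ge Md(\varrho)^\alpha$ from $\varrho$; a competing neutral density $\varrho_1\ne\varrho$ in \eqref{label:eq217} either lies in $\ensE'$, so $\dist(\varrho_1,\varrho)>2^k\ge Md(\varrho)^\alpha\ge M[\min(d(\varrho_1),d(\varrho))]^\alpha$, or lies in $\ensN_{k-1}$, so the constituent argument from the Observation, applied to the step at which $\varrho_1$ was added, gives $\dist(\varrho_1,\varrho)\ge Md(\varrho_1)^\alpha\ge M[\min(d(\varrho_1),d(\varrho))]^\alpha$. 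Therefore $\varrho$ passes both tests and $\varrho\in\ensN_k$.

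The main obstacle I anticipate is the interaction with densities already frozen into $\ensN_{k-1}$: these need not be $2^k$-separated from $\varrho$, so the argument must recover the required separation from the condition \eqref{label:eq218} that each such density satisfied at the moment it was added, together with the elementary but somewhat delicate bookkeeping that $\supp\varrho$ is covered by the supports of densities that were present, and not yet frozen, at that earlier step. The secondary point needing care is the degenerate single-leftover configuration, which is precisely what the Observation rules out and which must be treated via the internal consistency of the greedy rule rather than via the magnitude of $d(\varrho)$.
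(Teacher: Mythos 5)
Your proof is correct and follows essentially the same route as the paper's: show that $\varrho$ passes both greedy tests \eqref{label:eq217} and \eqref{label:eq218} when processed at step $k$, using the $k$-ensemble separation for competitors in $\ensQ_k\setminus\ensN_{k-1}$, and the frozen density's own acceptance test (propagated to $\varrho$ via the constituent decomposition of $\varrho$ at the step where the competitor was frozen) for competitors in $\ensN_{k-1}$. The paper's proof is terser on two points that you fill in. It states only that \eqref{label:eq217} ``holds trivially'' for $\varrho_1\in\ensN_{k-1}$; your constituent decomposition, using that the densities of $\ensQ_j$ appearing in the decomposition of $\varrho$ must lie in $\ensQ_j\setminus\ensN_j$, is the content behind that word. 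More substantively, the paper concludes $\varrho\in\ensN_k$ as soon as the two tests are verified, which silently presupposes that the greedy step is actually executed at step $k$; in the trivial branch ($\abs{\ensQ_{k-1}\setminus\ensN_{k-1}}\leq 1$) one has $\ensN_k:=\ensN_{k-1}$ and nothing is added. Your Observation --- that the sole survivor of a greedy step, if any, must be non-neutral --- closes exactly this gap, and it also amounts to a proof of the paper's unproved assertion that for $k$ large enough $\ensQ_k\setminus\ensN_k$ is empty or a single non-neutral density. One small presentational slip in the Observation: when $\varrho_1\in\ensE'$ witnesses a \eqref{label:eq217}-failure it need not have $d(\varrho_1)\ge d(\varrho)$ nor be processed after $\varrho$; but the argument survives unchanged, since condition \eqref{label:eq218} satisfied by $\varrho_1$ at its own acceptance (with $\varrho$, never frozen, a valid competitor) still yields $\dist(\varrho,\varrho_1)\ge Md(\varrho_1)^\alpha\ge M[\min(d(\varrho_1),d(\varrho))]^\alpha$.
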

\begin{proof}
If $\varrho \in \ensN_{k-1}$ the claim is trivial.
Otherwise, note that \eqref{label:eq217} and \eqref{label:eq218} hold for every $\varrho_1 \in \ensQ_k\setminus \ensN_{k-1}$, since $\ensQ_k\setminus \ensN_{k-1}$ is a $k$-ensemble and thus
$$
\dist(\varrho, \varrho_1) > 2^k \geq Md(\varrho)^\alpha \geq M[\min(d(\varrho_1), d(\varrho))]^\alpha.
$$
Also, \eqref{label:eq217} holds trivially for $\varrho_1 \in \ensN_{k-1}$, and therefore $\varrho$ must be in $\ensN_k$.
\end{proof}

In order to verify part \ref{enum21:c} of Theorem \ref{thm:21}, let $\varrho$ be a neutral density in an ensemble $\ens$, and suppose $\varrho_1, \varrho_2$ are two distinct densities with disjoint supports, satisfying $\varrho = \varrho_1 + \varrho_2$, and
$$
\dist(\varrho_1, \varrho_2) \geq 2M[\min(d(\varrho_1), d(\varrho_2))]^\alpha := R.
$$
Without loss of generality, we assume $d(\varrho_1) \leq d(\varrho_2)$, and thus $R=2Md(\varrho_1)^\alpha$.
Let $\ensQ_k$ be the first ensemble (i.e., with with minimal $k$) and $\varrho{\text{\tiny{*}}}\in\ensQ_k$ such that $\ensQ_k\to\ens$, and $\supp\varrho_1 \cap \supp\varrho{\text{\tiny{*}}}\neq\emptyset$, $\supp\varrho_2 \cap \supp\varrho{\text{\tiny{*}}} \neq \emptyset$.
Then there exist $\varrho_\mu, \varrho_\nu \in \ensQ_{k-1}$ with $\varrho_\mu, \varrho_\nu \subset \varrho*$ and $\varrho_\mu\subset\varrho_1$, $\varrho_\nu\subset\varrho_2$, and thus $\dist(\varrho_\mu , \varrho_\nu) \geq R$. Combining this with property \eqref{lemma22:propertyB} of Lemma~\ref{lemma21} yields $R \leq 2^{k}$.
Since $R = 2Md(\varrho_1)^\alpha \geq 2Md(\varrho_\mu)^\alpha$, it holds that $2^{k-1} \geq Md(\varrho_\mu)^\alpha$, and thus $\varrho_\mu$ must be non-neutral (otherwise $\varrho_\mu \in \ensN_{k-1}$ by Claim \ref{lemmaHelp21}).
Now, since $d(\varrho_1) <  Md(\varrho_1)^\alpha \leq 2^{k-1}$ it must be that $\varrho_1 = \varrho_\mu$, as there is no $\varrho_\lambda\in\ensQ_{k-1}$ with $\dist(\varrho_\mu, \varrho_\lambda) \leq 2^{k-1}$. Therefore, $\varrho_1$ is non-neutral, and by $Q(\varrho)=0$, $\varrho_2$ is also non-neutral.

Now we turn to the verification of the bound \eqref{label:eq26}. We choose some $\ens\in\mathscr{F}$. Let $\varrho$ be a neutral density in $\ens$. We start by claiming that $\varrho \in \ensN_{n(\varrho)}$, where $n(\varrho) = \lceil \log_2(Md(\varrho)^\alpha) \rceil$ as in \eqref{label:eqNvarrho}.
First note that $\varrho \in \ensQ_{n(\varrho)}$, since by definition $d(\varrho) \leq 2^{n(\varrho)}$, and so there are no distinct $\varrho_1, \varrho_2 \in \ensQ_{n(\varrho)}\setminus\ensN_{n(\varrho)}$, with $\varrho_1, \varrho_2 \subset \varrho$ (as $\dist(\varrho_1, \varrho_2) > 2^{n(\varrho)} \geq d(\varrho)$), since $\ensQ_{n(\varrho)}\setminus\ensN_{n(\varrho)}$ is an $n(\varrho)$-ensemble.
Claim \ref{lemmaHelp21} yields $\varrho \in \ensN_{n(\varrho)}$.
Denote by $m \leq n(\varrho)$ the minimal $m$ such that $\varrho \in \ensN_m$.
By \eqref{label:eq212},
$$
\abs{K(\varrho)} \leq e^{C_1 A_{m-1}(\varrho)} \prod_{\varrho_\gamma} \abs{K(\varrho_\gamma)},
$$
where $\varrho = \sum \varepsilon(\varrho, \varrho_\gamma)\varrho_\gamma$ with $\varepsilon(\varrho, \varrho_\gamma)\in{\pm1}$, and all densities $\varrho_\gamma$ belong to some $(m-1)$-ensemble. If $m \geq 2$, we apply \eqref{label:eq212} again, which yields
\begin{equation} \label{label:eq221}
\abs{K(\varrho)} \leq \exp \Big\{ C_1 \Big[ A_{m-1}(\varrho) + \sum_{\varrho_\gamma \subset \varrho} A_{m-2}(\varrho_\gamma)\Big] \Big\} \prod_{\varrho_\delta} \abs{K(\varrho_\delta)},
\end{equation}
for densities $\varrho_\delta$ in some $(m-2)$-ensemble. Note that
\begin{equation} \label{label:eq222}
\sum_{\varrho_\gamma \subset \varrho} A_{m-2}(\varrho_\gamma) \leq 4 A_{m-2}(\varrho),
\end{equation}
since every square $s$ of size $2^{m-2} \times 2^{m-2}$ might intersect with at most four densities $\varrho_\gamma$, and since a cover of $\varrho$ with squares of size $2^{m-2} \times 2^{m-2}$, is also a cover of $(\varrho_\gamma)$ for all $\gamma$.

Applying \eqref{label:eq221} and \eqref{label:eq222} iteratively shows that
$$
\abs{K(\varrho)} \leq e^{ \DTwo A(\varrho)} \prod_{j\in\supp\varrho} \abs{z(\abs{\varrho_j})},
$$
and using \eqref{eq:z21Def} and the fact that $2C(N_j)<1$ verifies \eqref{label:eq26}.

\subsection{Spin Wave Construction}\label{sec:chap4}

In this section we prove Proposition~\ref{prop:aExistance}, constructing the spin waves $a_{\varrho}=a_{\varrho, \ens_0}$, $\varrho\in\ens_0$, where $\ens_0$ is a sub-ensemble of neutral densities, of an ensemble $\ens$ satisfying properties \ref{enum21:a}-\ref{enum21:c} of Theorem~\ref{thm:21}.
We fix such ensemble and $\varrho{\text{\tiny{*}}}\in\ens_0$ for the rest of the section. The spin wave $a_{\varrho{\text{\tiny{*}}}}$ is constructed as a sum of spin waves which are defined later in the section. In this section all densities are neutral, and recall that for a neutral density it holds that $d(\varrho)\geq1$.
%$$a_{\varrho^*} = a_{0, \varrho^*} + \sum_{k=1}^{\infty} \sum_{s\in\SqrsSetTagOf{k}{\varrho^*}} a_{s,\varrho^*}.$$

\subsubsection{Initial Spin Wave}

The next lemma establishes the basic properties of $a_{0, \varrho{\text{\tiny{*}}}}$, an intermediate spin wave which is used to construct $a_{\varrho{\text{\tiny{*}}}}$, which are very similar to the properties of Proposition~\ref{prop:aExistance} (recall the definitions of $D^{+}(\varrho')$, $\varrho\in\ens_0$ and $\ens_0(\varrho{\text{\tiny{*}}})$, defined in \eqref{eq:D_plus_def} and \eqref{eq:NRhoDef} respectively, which appear in this context).

\begin{lemma} \label{lemma:a_0_construction}
There exists $a_{0,\varrho{\text{\tiny{*}}}}:\Lambda\to\mathbb{R}$ such that:
\begin{enumerate}
\item \label{enum0:a0effect} $\supp a_{0,\varrho{\text{\tiny{*}}}} \subset \{j\in\Lambda: \dist(j,\varrho{\text{\tiny{*}}})\leq 1 \}$,
\item \label{enum2:a0effect}
$\supp(a_{0, \varrho{\text{\tiny{*}}}}), \supp(\Delta_\Lambda a_{0, \varrho{\text{\tiny{*}}}}) \subset D(\varrho{\text{\tiny{*}}})$.
\item \label{enum1:a0effect}
For every $\varrho^\prime\in\ens_0(\varrho{\text{\tiny{*}}})$, $a_{0,\varrho{\text{\tiny{*}}}}$ is constant on $D^{+}(\varrho^\prime)$.
\item \label{enum3:a0effect}
$
E_\beta(a_{0, \varrho{\text{\tiny{*}}}}, \varrho{\text{\tiny{*}}}) =
\langle \varrho{\text{\tiny{*}}}, a_{0, \varrho{\text{\tiny{*}}}}\rangle - \frac{\beta}{2}\gradNormUp{a_{0, \varrho{\text{\tiny{*}}}}} \geq
\frac{1}{16\beta}\norm{\varrho{\text{\tiny{*}}}}_2^2.
$
\end{enumerate}
\end{lemma}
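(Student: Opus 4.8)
The plan is to define $a_{0,\varrho{\text{\tiny{*}}}}$ as a scalar multiple of $\varrho{\text{\tiny{*}}}$ itself when $d(\varrho{\text{\tiny{*}}})\ge 2$, and to concentrate it on a single vertex when $d(\varrho{\text{\tiny{*}}})=1$ (these exhaust the cases, as $\varrho{\text{\tiny{*}}}$ is neutral and so $d(\varrho{\text{\tiny{*}}})\ge1$). In each case assertion~\ref{enum3:a0effect} reduces to optimizing a scalar prefactor, while~\ref{enum0:a0effect}--\ref{enum1:a0effect} follow from elementary distance bookkeeping together with property~\ref{enum21:b} of Theorem~\ref{thm:21} and the choices $\alpha>1$, $M=2^{16}$. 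Throughout I use that every vertex of a square domain with free or periodic b.c. has at most $4$ incident edges, so that $\langle g,-\Delta_\Lambda g\rangle\le 8\norm{g}_2^2$ for every $g:\Lambda\to\mathbb{R}$.

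When $d(\varrho{\text{\tiny{*}}})\ge2$, set $a_{0,\varrho{\text{\tiny{*}}}}:=\tfrac{1}{8\beta}\varrho{\text{\tiny{*}}}$. Then $\supp a_{0,\varrho{\text{\tiny{*}}}}=\supp\varrho{\text{\tiny{*}}}\subset D(\varrho{\text{\tiny{*}}})$, giving~\ref{enum0:a0effect} and the first inclusion of~\ref{enum2:a0effect}; for the second, $\supp(\Delta_\Lambda a_{0,\varrho{\text{\tiny{*}}}})$ lies in the closed $1$-neighbourhood of $\supp\varrho{\text{\tiny{*}}}$, and every vertex of that neighbourhood is within distance $d(\varrho{\text{\tiny{*}}})+1<2d(\varrho{\text{\tiny{*}}})$ of the centre $j$ of $D(\varrho{\text{\tiny{*}}})$ (using $\dist(x,j)\le d(\varrho{\text{\tiny{*}}})$ for all $x\in\supp\varrho{\text{\tiny{*}}}$ and $d(\varrho{\text{\tiny{*}}})\ge2$), hence belongs to $D(\varrho{\text{\tiny{*}}})$. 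For~\ref{enum1:a0effect}: given $\varrho'\in\ens_0(\varrho{\text{\tiny{*}}})$, the set $D^{+}(\varrho')$ lies within distance $2d(\varrho')$ of a support point of $\varrho'$, so property~\ref{enum21:b} of Theorem~\ref{thm:21} and $d(\varrho')\le 2d(\varrho{\text{\tiny{*}}})$ yield
\[
\dist\big(D^{+}(\varrho'),\supp\varrho{\text{\tiny{*}}}\big)\ \ge\ M\big[\min(d(\varrho'),d(\varrho{\text{\tiny{*}}}))\big]^{\alpha}-2d(\varrho')\ >\ 0 ,
\]
so $a_{0,\varrho{\text{\tiny{*}}}}$ vanishes --- a fortiori is constant --- on $D^{+}(\varrho')$. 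Finally,
\[
E_\beta(a_{0,\varrho{\text{\tiny{*}}}},\varrho{\text{\tiny{*}}})=\tfrac{1}{8\beta}\norm{\varrho{\text{\tiny{*}}}}_2^2-\tfrac{1}{128\beta}\langle\varrho{\text{\tiny{*}}},-\Delta_\Lambda\varrho{\text{\tiny{*}}}\rangle\ \ge\ \tfrac{1}{16\beta}\norm{\varrho{\text{\tiny{*}}}}_2^2 ,
\]
which is~\ref{enum3:a0effect}.

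When $d(\varrho{\text{\tiny{*}}})=1$, bipartiteness of $\Lambda$ forces $\supp\varrho{\text{\tiny{*}}}$ to be a single edge $\{u,w\}$, and neutrality gives $\varrho{\text{\tiny{*}}}(w)=-\varrho{\text{\tiny{*}}}(u)$; here $D(\varrho{\text{\tiny{*}}})$ is merely the closed unit ball about its centre $j\in\{u,w\}$. The choice $\tfrac{1}{8\beta}\varrho{\text{\tiny{*}}}$ would now violate~\ref{enum2:a0effect} (a neighbour of the endpoint other than $j$ sits at distance $2$ from $j$), so instead set $a_{0,\varrho{\text{\tiny{*}}}}:=\tfrac{1}{4\beta}\varrho{\text{\tiny{*}}}(j)\,\mathds{1}_{\{j\}}$. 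Then $\supp a_{0,\varrho{\text{\tiny{*}}}}=\{j\}$ and $\supp(\Delta_\Lambda a_{0,\varrho{\text{\tiny{*}}}})\subseteq\{i:\dist(i,j)\le1\}=D(\varrho{\text{\tiny{*}}})$, so~\ref{enum0:a0effect} and~\ref{enum2:a0effect} hold, and~\ref{enum1:a0effect} follows exactly as above (again $\supp a_{0,\varrho{\text{\tiny{*}}}}\subseteq\supp\varrho{\text{\tiny{*}}}$). Using $\norm{\varrho{\text{\tiny{*}}}}_2^2=2\varrho{\text{\tiny{*}}}(j)^2$ and that $j$ has at most $4$ incident edges,
\[
E_\beta(a_{0,\varrho{\text{\tiny{*}}}},\varrho{\text{\tiny{*}}})\ \ge\ \tfrac{1}{4\beta}\varrho{\text{\tiny{*}}}(j)^2-\tfrac{1}{8\beta}\varrho{\text{\tiny{*}}}(j)^2\ =\ \tfrac{1}{16\beta}\norm{\varrho{\text{\tiny{*}}}}_2^2 ,
\]
giving~\ref{enum3:a0effect}.

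I expect the one genuine obstacle to be the tension between the localization forced by~\ref{enum0:a0effect}--\ref{enum2:a0effect} and the sharp constant $\tfrac{1}{16}$ in~\ref{enum3:a0effect}: for $d(\varrho{\text{\tiny{*}}})\ge2$ the candidate $a\propto\varrho{\text{\tiny{*}}}$ is both local enough and, with prefactor $\tfrac{1}{8\beta}$, energetically sharp; but at $d(\varrho{\text{\tiny{*}}})=1$ that candidate escapes the then-tiny set $D(\varrho{\text{\tiny{*}}})$, forcing the single-vertex construction, where the bound is recovered --- again with no slack --- only after re-optimizing the prefactor to $\tfrac{1}{4\beta}$. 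Everything else is routine distance accounting.
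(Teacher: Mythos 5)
Your construction is correct and verifies all four properties, but it takes a genuinely different route from the paper's. You set $a_{0,\varrho{\text{\tiny{*}}}}$ equal to a uniformly scaled copy of $\varrho{\text{\tiny{*}}}$ itself (prefactor $\tfrac{1}{8\beta}$) when $d(\varrho{\text{\tiny{*}}})\ge 2$, and to a single-vertex bump at the centre of $D(\varrho{\text{\tiny{*}}})$ when $d(\varrho{\text{\tiny{*}}})=1$, and then control the Dirichlet energy with the generic $\ell^2$ operator bound $\langle g,-\Delta_\Lambda g\rangle\le 8\norm{g}_2^2$. The paper instead exploits bipartiteness uniformly: it writes $\Lambda=\Omega_1\cup\Omega_2$, keeps only the restriction $\varrho_1^*$ of $\varrho{\text{\tiny{*}}}$ to the heavier side $\Omega_1$, and sets $a_{0,\varrho{\text{\tiny{*}}}}(j)=\varrho_1^*(j)/(d_j\beta)$ with a \emph{degree-weighted} prefactor. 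Because $a_{0,\varrho{\text{\tiny{*}}}}$ then vanishes on the entire other colour class, every edge has exactly one nonzero endpoint and the gradient sum is computed \emph{exactly} as $\sum_j d_j a_{0,\varrho{\text{\tiny{*}}}}(j)^2$, rather than bounded. This makes the paper's energy bound an equality-based computation (followed by $d_j\le 4$ and the mass-splitting $\norm{\varrho_1^*}_2^2\ge\tfrac12\norm{\varrho{\text{\tiny{*}}}}_2^2$), and also handles the $d(\varrho{\text{\tiny{*}}})=1$ case within the same formula (after choosing $\Omega_1$ to contain the centre of $D(\varrho{\text{\tiny{*}}})$), so no case split is needed in the definition. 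Your version buys simplicity of the formula at the cost of a separate $d(\varrho{\text{\tiny{*}}})=1$ case and reliance on the crude Laplacian norm bound; both happen to land on the same constant $\tfrac{1}{16}$ because the extremal configuration for your crude bound (alternating signs on a degree-$4$ graph) is precisely what the paper's bipartite decomposition targets. One small point worth stating explicitly in your $d(\varrho{\text{\tiny{*}}})\ge 2$ case is that the estimate $\dist(D^+(\varrho'),\supp\varrho{\text{\tiny{*}}})>0$ needs $\min(d(\varrho'),d(\varrho{\text{\tiny{*}}}))\ge d(\varrho')/2$ (which follows from $d(\varrho')\le 2d(\varrho{\text{\tiny{*}}})$) together with $M/2^\alpha>2$; you assert positivity but the intermediate reduction to $d(\varrho')$ is worth spelling out.
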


\begin{proof}
We start by defining $a_{0, \varrho{\text{\tiny{*}}}}$. Recall that $\Lambda$ is bipartite. Let $(\Omega_1, \Omega_2)$ be a bipartition of $\Lambda$, chosen without loss generality so that
\begin{equation} \label{eqOmega1Cond}
\sum_{j \in \Omega_1}\varrho{\text{\tiny{*}}}(j)^2 \geq \frac{1}{2} \sum_{j \in \Lambda}\varrho{\text{\tiny{*}}}(j)^2
\end{equation}
%(for example choose it to be sites of $\supp\varrho$ on two complementary grids which cover $\Lambda$).
In the case that $d(\varrho{\text{\tiny{*}}})=1$, we assume also that $\Omega_1$ contains the center of $D(\varrho{\text{\tiny{*}}})$, which is safe since $\varrho{\text{\tiny{*}}}$ is neutral (recall the definition of $D(\varrho)$ for a density $\varrho$ in \eqref{eq:DRhoDef}).
%the last remark is an annoying case, in which we don't want supp(a_\varrho)\subset D(\varrho)
Let $\varrho_\ell^*(j) = \varrho{\text{\tiny{*}}}(j)$ for $j \in \Omega_l$, and $\varrho_\ell^*(j)=0$ otherwise, $\ell=1,2$. Clearly $\varrho_1^* + \varrho_2^* = \varrho{\text{\tiny{*}}}$.
Note that \eqref{eqOmega1Cond} implies $\norm{\varrho_1^*}_2^2 \geq \frac{1}{2} \norm{\varrho{\text{\tiny{*}}}}_2^2$.

We denote by $d_j$ the graph degree of $j\in\Lambda$, and define $a_{0, \varrho{\text{\tiny{*}}}}:\Lambda\to\mathbb{R}$ by
\begin{equation} \label{eq:a0Def}
a_{0, \varrho{\text{\tiny{*}}}}(j):= \frac{\varrho_1^*(j)}{d_j\beta}.
\end{equation}

Next, we verify the properties of the lemma. Property \ref{enum0:a0effect} holds by definition.
Property \ref{enum2:a0effect} follows by property \ref{enum0:a0effect} and $\supp(\Delta_\Lambda a_{0, \varrho{\text{\tiny{*}}}}) \subset \{j:\dist(j,\varrho_1^*)\leq 1 \}$, where for the case $d(\varrho{\text{\tiny{*}}}) = 1$ we used the assumption that $\Omega_1$ contains the center of $D(\varrho{\text{\tiny{*}}})$.
For property \ref{enum1:a0effect}, for every $\varrho'\in\ens_0(\varrho{\text{\tiny{*}}})$ it follows by property \ref{enum21:b} of Theorem~\ref{thm:21} that
$$
\dist(D(\varrho'), \varrho{\text{\tiny{*}}}) \geq \dist(\varrho', \varrho{\text{\tiny{*}}}) - 4d(\varrho') \geq \frac{M}{4}d(\varrho') > 2,
$$
which implies that $a_{0,\varrho{\text{\tiny{*}}}}$ is zero on $D(\varrho')$, $\varrho'\in\ens_0(\varrho{\text{\tiny{*}}})$.

Part \ref{enum3:a0effect} follows by the observation that the gradient is $\varrho_1^*(j)/(d_j\beta)$ for every edge touching $j$ and $0$ otherwise, and thus by \eqref{eqOmega1Cond}
\begin{equation*}
E_\beta(a_{0, \varrho{\text{\tiny{*}}}}, \varrho{\text{\tiny{*}}})
=
\langle \varrho{\text{\tiny{*}}}, a_{0, \varrho{\text{\tiny{*}}}}\rangle - \frac{\beta}{2}\gradNormUp{a_{0, \varrho{\text{\tiny{*}}}}}
=
\frac{1}{\beta}\sum_{j\in\Lambda} \frac{\varrho_1^*(j)^2}{d_j} -
\frac{1}{2\beta}\sum_{j\in\Lambda} \frac{\varrho_1^*(j)^2}{d_j}
\geq
\frac{1}{16\beta}\norm{\varrho{\text{\tiny{*}}}}_2^2.\qedhere
\end{equation*}

\end{proof}

\subsubsection{Connected Components of $\bigcup_{\varrho \in \ens_0(\varrho{\text{\tiny{*}}})}D^+(\varrho)$}

Recall requirement \ref{enum1:aExistance} of Proposition~\ref{prop:aExistance}, which states that $a_{\varrho{\text{\tiny{*}}}}$ is constant on every $D^{+}(\varrho^\prime)$ for $\varrho^\prime\in\ens_0(\varrho{\text{\tiny{*}}})$, where $\ens_0(\varrho{\text{\tiny{*}}}), D^{+}(\varrho^\prime)$ are defined in \eqref{eq:NRhoDef} and \eqref{eq:D_plus_def} respectively.
It may happen that $D^{+}(\varrho_1)$ and $D^{+}(\varrho_2)$ overlap for some $\varrho_1, \varrho_2 \in \ens_0(\varrho{\text{\tiny{*}}})$, whence the requirement becomes that $a_{\varrho{\text{\tiny{*}}}}$ is constant on the union $D^{+}(\varrho_1)\cup D^{+}(\varrho_2)$.
Therefore, the connected components of the $\bigcup_{\varrho \in \ens_0(\varrho{\text{\tiny{*}}})}D^+(\varrho)$ are important. When we say a connected component, we think of $\bigcup_{\varrho \in \ens_0(\varrho{\text{\tiny{*}}})}D^+(\varrho)$ as a sub-graph of $\Lambda$.
The next lemma provides bounds we shall use later.

Denote the external vertex boundary of a set $E\subset\Lambda$ by
$$
\partial^{\ext} E := \{j\in\Lambda: \dist(j,E)=1\}.
$$
For each $E\subset\Lambda$ denote its diameter by
$$
d(E) := \max_{j,\ell \in E} \dist(j,\ell).
$$
Note that for $E$, a connected component of $\bigcup_{\varrho \in \ens_0(\varrho{\text{\tiny{*}}})}D^+(\varrho)$, it holds that $d(E)\geq4$, since every $\varrho\in\ens_0(\varrho{\text{\tiny{*}}})$ is neutral, and thus has $d(\varrho)\geq1$.

\begin{lemma} \label{lemma:appendixE}
%M_Large
For every connected component $E$ of $\ \bigcup_{\varrho \in \ens_0(\varrho{\text{\tiny{*}}})} D^+(\varrho)$ it holds that
\begin{enumerate}
\item \label{item:appendixE_1} For any connected component $E'$ of $\ \bigcup_{\varrho \in \ens_0(\varrho{\text{\tiny{*}}})} D^+(\varrho)$ with $E\neq E'$,
$$
\frac{M}{25}\left[\min(d(E), d(E'))\right]^\alpha \leq \dist(E, E') + d(E)+d(E'),
$$
\item \label{item:appendixE_2}$
\frac{M}{128}d(E)^\alpha \leq \dist(E, \varrho{\text{\tiny{*}}}),
$
\item \label{item:appendixE_3} $\abs{\partial^{\ext} E} \leq 64 \cdot d(E)$.
\end{enumerate}
%For M large enough, $\diam(E_i) \leq 5 d(\varrho_i)$, and also $\varrho_i$ is a unique density with maximal diameter.
\end{lemma}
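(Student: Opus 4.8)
The plan is to show that each connected component $E$ is only a mild perturbation of a single set $D^{+}(\varrho_E)$, where $\varrho_E$ is chosen to be a density of maximal diameter among those $\varrho\in\ens_0(\varrho{\text{\tiny{*}}})$ with $D^{+}(\varrho)\subseteq E$; the remaining participating densities all live at a vastly smaller scale and change $E$ only by lower‑order amounts. First I would record the elementary geometric facts: $D(\varrho)$ is the graph‑ball of radius $2d(\varrho)-1$ about its center $j_\varrho\in\supp\varrho$, so $D^{+}(\varrho)$ is the ball of radius $2d(\varrho)$; hence $\diam D^{+}(\varrho)\le 4d(\varrho)$, $\lvert\partial^{\ext}D^{+}(\varrho)\rvert\le 4(2d(\varrho)+1)\le 12\,d(\varrho)$ (the $\ell^1$‑sphere bound, valid on a square domain with free or periodic b.c.), and $\supp\varrho\subseteq D(\varrho)$ with $\diam\supp\varrho=d(\varrho)$. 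In particular $d(E)\ge\diam\supp\varrho_E=d(\varrho_E)$ since $\supp\varrho_E\subseteq E$.

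The key tool is a scale‑separation estimate. If $\varrho_1\neq\varrho_2$ are densities of $\ens_0$ with $\dist(D^{+}(\varrho_1),D^{+}(\varrho_2))\le 1$ and $d(\varrho_1)\le d(\varrho_2)$, then by property~\ref{enum21:b} of Theorem~\ref{thm:21},
\[ M\,d(\varrho_1)^\alpha\le\dist(\varrho_1,\varrho_2)\le\diam D^{+}(\varrho_1)+1+\diam D^{+}(\varrho_2)\le 9\,d(\varrho_2), \]
so $d(\varrho_1)\le(9\,d(\varrho_2)/M)^{1/\alpha}$; since $M=2^{16}$ and $d(\varrho_1)\ge 1$, this forces $d(\varrho_1)$ to be much smaller than $d(\varrho_2)$. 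Consequently, in the graph on $S_E:=\{\varrho\in\ens_0(\varrho{\text{\tiny{*}}}):D^{+}(\varrho)\subseteq E\}$ with an edge between $\varrho,\varrho'$ whenever $\dist(D^{+}(\varrho),D^{+}(\varrho'))\le 1$ — which is connected because $E$ is — every edge joins densities of very different scale, and every $\varrho\in S_E$ is joined to $\varrho_E$ along a path on which the scales decrease super‑geometrically.

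Fixing a spanning tree of this graph rooted at $\varrho_E$, the crucial point is that a density $\varrho\in S_E$ whose $D^{+}(\varrho)$ is \emph{not} entirely contained in $D^{+}(\varrho_E)$ — equivalently, one that enlarges $E$ beyond $D^{+}(\varrho_E)$ or exposes a point of $\partial^{\ext}E$ — must be joined through the tree to a strictly larger density whose region it touches, and hence must have its support within distance $O(d(\varrho))$ of the boundary of that larger region. At a fixed dyadic scale $2^k$ such ``outward'' densities are pairwise at distance $\ge M\,2^{k\alpha}$, yet are confined to an annulus of width $O(2^k)$ and circumference $O(d(\varrho_E))$; since $M$ is enormous this annulus is effectively one‑dimensional, so there are at most $O\!\big(d(\varrho_E)/(M\,2^{k\alpha})\big)$ of them, and summing the dyadic series (convergent as $\alpha>1$) their total diameter — hence the total amount by which they enlarge $E$ and the total boundary they expose — is $O(d(\varrho_E)/M)$. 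Iterating over the levels of the tree, each level contributing a further factor $O(1/M)$, yields
\[ d(E)\le 4d(\varrho_E)+O(d(\varrho_E)/M)\le 5\,d(\varrho_E),\qquad \lvert\partial^{\ext}E\rvert\le\lvert\partial^{\ext}D^{+}(\varrho_E)\rvert+O(d(\varrho_E)/M)\le 13\,d(\varrho_E); \]
densities with $D^{+}(\varrho)\subseteq D^{+}(\varrho_E)$ are harmless here, as they neither enlarge $E$ nor touch $\partial^{\ext}E$, even though there may be very many of them.

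All three assertions now follow. Part~\ref{item:appendixE_3} is immediate: $\lvert\partial^{\ext}E\rvert\le 13\,d(\varrho_E)\le 13\,d(E)\le 64\,d(E)$. For part~\ref{item:appendixE_2}, $\varrho_E\in\ens_0(\varrho{\text{\tiny{*}}})$ is distinct from $\varrho{\text{\tiny{*}}}$ and $d(\varrho_E)\le 2d(\varrho{\text{\tiny{*}}})$, so property~\ref{enum21:b} gives $\dist(\supp\varrho_E,\varrho{\text{\tiny{*}}})\ge M[\min(d(\varrho_E),d(\varrho{\text{\tiny{*}}}))]^\alpha\ge 2^{-\alpha}M\,d(\varrho_E)^\alpha$, whence, using $\supp\varrho_E\subseteq E$ and the connectedness of $E$,
\[ \dist(E,\varrho{\text{\tiny{*}}})\ge\dist(\supp\varrho_E,\varrho{\text{\tiny{*}}})-d(E)\ge 2^{-\alpha}M\,d(\varrho_E)^\alpha-5\,d(\varrho_E)\ge\tfrac{M}{128}(5d(\varrho_E))^\alpha\ge\tfrac{M}{128}d(E)^\alpha, \]
the third inequality being a direct computation from $2^{-\alpha}-5^\alpha/128\ge 7/128$ (valid for $\alpha\le 2$), $d(\varrho_E)^{\alpha-1}\ge1$ and $M=2^{16}$. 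Part~\ref{item:appendixE_1} is the same computation: for distinct components $E,E'$ the roots $\varrho_E,\varrho_{E'}$ are distinct densities of $\ens_0$ (their supports being disjoint), so, assuming $d(\varrho_E)\le d(\varrho_{E'})$,
\[ \dist(E,E')+d(E)+d(E')\ge\dist(\supp\varrho_E,\supp\varrho_{E'})\ge M\,d(\varrho_E)^\alpha\ge\tfrac{M}{25}(5d(\varrho_E))^\alpha\ge\tfrac{M}{25}[\min(d(E),d(E'))]^\alpha, \]
using $5^\alpha<25$ and $d(E)\le 5d(\varrho_E)$. The step I expect to be the main obstacle is the geometric accounting of the third paragraph — confining the ``outward'' densities at each scale and each tree level to thin annuli, noticing that interior densities are irrelevant despite possibly being numerous, and summing the resulting dyadic series with a leading constant small enough that, together with the crude bound $d(\varrho_E)\le 2d(\varrho{\text{\tiny{*}}})$, the final estimates fall below the prescribed $25$, $128$ and $64$; the largeness of $M=2^{16}$ and the range $3/2<\alpha<2$ are exactly what make this accounting close.
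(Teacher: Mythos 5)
Your overall strategy is the same as the paper's: identify a unique dominant density $\varrho_E$ in each component $E$, show $d(E)\le 5\,d(\varrho_E)$ (so $E$ is a mild perturbation of $D^+(\varrho_E)$), and then deduce all three assertions from this together with property~\ref{enum21:b} of Theorem~\ref{thm:21} and elementary $\ell^1$-ball geometry. Your derivations of parts~\ref{item:appendixE_1} and~\ref{item:appendixE_2} from the bound $d(E)\le 5\,d(\varrho_E)$ are essentially identical to the paper's and the arithmetic checks out.

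The gap is exactly the one you flag: the middle paragraph, which is supposed to establish $d(E)\le 5\,d(\varrho_E)$ and $\abs{\partial^{\ext}E}\le 64\,d(E)$, is not a proof. Two specific problems. First, the claim that along any path in your tree ``the scales decrease super-geometrically'' toward $\varrho_E$ is not justified by the separation estimate you derive; two touching $D^+$'s must have very different scales, but this does not by itself order the scales monotonically along a path, and nothing prevents a path from going down and back up in scale. Second, the annulus count is misattributed: a density $\varrho$ at dyadic scale $2^k$ that pokes out is confined to an annulus around whatever \emph{local parent} $\varrho'$ it touches, whose circumference is $O(d(\varrho'))$, not $O(d(\varrho_E))$. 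Turning ``iterating over the levels of the tree, each level contributing a factor $O(1/M)$'' into a real argument requires controlling all of this simultaneously, which is precisely what the paper's induction on the sub-ensemble $\ensI$ does: it proves, as a single strengthened inductive claim, the uniqueness of $\varrho_E$, the bound $d(E)\le 5\,d(\varrho_E)$, and the sharp boundary estimate $\abs{\partial^{\ext}E}\le 16\,d(\varrho_E)$, and then at the inductive step it applies the hypothesis to $\ensS_E\setminus\mathscr{M}$ (densities of diameter $<\tfrac{1}{10}d(\varrho_E)$), whose connected components are already controlled by the sharp constants. That induction, together with a short path argument through $D^+(\varrho_E)$ for the diameter bound and a dyadic decomposition of the $E'$ meeting $\partial^{\ext}D^+(\varrho_E)$ for the boundary bound, is what your sketch is missing. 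Without it the proposal does not constitute a proof of the lemma.
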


\begin{proof}
We prove the following claim, which is a stronger variant of the lemma.
Let $\ensI\subseteq\ens_0(\varrho{\text{\tiny{*}}})$ be a sub-ensemble.
For every connected component (c.c.) $E$ of $\bigcup_{\varrho \in \ensI} D^+(\varrho)$, writing $\ensS_E:=\{\varrho\in\ensI: D^+(\varrho)\subseteq E\}$, there exists a unique density $\varrho_E$ for which $d(\varrho_E) = \max\{d(\varrho):\varrho\in \ensS_E\}$. Also,
\begin{equation} \label{eq:diamEBound}
d(E) \leq 5 d(\varrho_E),
\end{equation}
and properties \ref{item:appendixE_2} and \ref{item:appendixE_3} of the lemma hold for such $E$, and property \ref{item:appendixE_1} holds for $E$ and any other connected component $E'$ of $\bigcup_{\varrho \in \ensI} D^+(\varrho)$, $E'\neq E$.

We prove this claim by induction on the size of $\ensI \subset \ens_0(\varrho{\text{\tiny{*}}})$. Note that the claim trivially holds for $\ensI=\emptyset$. For the induction step, let $\ensI \subset \ens_0(\varrho{\text{\tiny{*}}})$ with $\abs{\ensI} > 0$, let $E$ be a connected component of $\bigcup_{\varrho \in \ensI} D^+(\varrho)$, let $\varrho_{\mathop{m}}$ be some arbitrary density in $\ensS_E$ with maximal diameter and let
$$
\mathscr{M} = \left\{ \varrho \in \ensS_E: d(\varrho) \geq \frac{1}{10}d(\varrho_{\mathop{m}}) \right\}.
$$

Assume that there are two distinct $\varrho_1, \varrho_2 \in \mathscr{M}$, then by \eqref{eq:alphaMBounds} and property \ref{enum21:b} of Theorem~\ref{thm:21}, $D^+(\varrho_1),D^+(\varrho_2)$ are disjoint, since otherwise
$$
M\Big( \frac{d(\varrho_m)}{10} \Big)^\alpha \leq \dist(\varrho_1, \varrho_2) \leq d(D^+(\varrho_1)) + d(D^+(\varrho_2))\leq 4d(\varrho_1) + 4d(\varrho_2) \leq 8d(\varrho_m).
$$
Since $E$ is connected, there exist two distinct $\varrho_1, \varrho_2 \in \mathscr{M}$ which are both intersecting some $E'$, a connected components of $\bigcup_{\varrho \in \ensS_E \setminus \mathscr{M}} D^+(\varrho)$, which is impossible by \eqref{eq:alphaMBounds}, property \ref{enum21:b} of Theorem~\ref{thm:21} and the induction hypothesis applied to $\ensS_E\setminus\mathscr{M}$, as
$$
M\Big( \frac{d(\varrho_m)}{10} \Big)^\alpha \leq \dist(\varrho_1, \varrho_2) \leq d(D^+(\varrho_1)) + d(D^+(\varrho_2)) + d(E')\leq 8d(\varrho_m) + 5d(\varrho_{E'}) \leq 9d(\varrho_m).
$$

Therefore $\abs{\mathscr{M}} = 1$, which implies that there is a unique density $\varrho_E \in \ensS_E$ with maximal diameter. Let $P$ be a path in $E$ satisfying that the distance between its endpoints equals $d(E)$. Let $x,y\in P$ be the first and last points along $P$ which lie in $D^+(\varrho_E)$. Each of the portions of $P$ before $x$ and after $y$ must lie in a connected component of $\bigcup_{\ensS_E\setminus\mathscr{M}} D^+(\varrho)$. Thus, applying the induction hypothesis to the sub-ensemble $\ensS_E\setminus\mathscr{M}$,
\begin{equation*}
  d(E) \le 2\cdot 5\cdot\frac{1}{10} d(\varrho_E) + \dist(x,y)\le d(\varrho_E) + 4d(\varrho_E) \le 5d(\varrho_E),
\end{equation*}
which proves \eqref{eq:diamEBound}.

To prove properties \ref{item:appendixE_1} and \ref{item:appendixE_2}, let $E$, $E'$ be two distinct connected components of $\bigcup_{\varrho \in \ensI} D^+(\varrho)$. By property \ref{enum21:b} of Theorem~\ref{thm:21} and \eqref{eq:diamEBound} we obtain
$$
M\left[\frac{\min(d(E), d(E'))}{5}\right]^\alpha \leq
M\left[\min(d(\varrho_E), d(\varrho_{E'}))\right]^\alpha \leq
\dist(\varrho_{E},\varrho_{E'}) \leq
\dist(E, E') + d(E)+d(E'),
$$
which proves property \ref{item:appendixE_1}. Property \ref{item:appendixE_2} follows similarly by
$$
M\left[\frac{d(E)}{10}\right]^\alpha \leq
M\left[\frac{d(\varrho_E)}{2}\right]^\alpha \leq
\dist(\varrho_{E},\varrho{\text{\tiny{*}}}) \leq
\dist(E, \varrho{\text{\tiny{*}}}) + d(E).
$$

To prove property \ref{item:appendixE_3}, note that by the induction hypothesis,
\begin{equation} \label{eq:circle_size}
\abs{\partial^{\ext} E} \leq
\abs{\partial^{\ext} D^+(\varrho_E)} +
\sum_{\substack{
E' \text{ c.c. of } \ensS_E\setminus\mathscr{M} \\
E'\bigcap\partial^{\ext} D^+(\varrho_E)\neq\emptyset}}
\abs{\partial^{\ext} E'} \leq
12 d(\varrho_E) +
16 \cdot \sum_{\substack{
E' \text{ c.c. of } \ensS_E\setminus\mathscr{M} \\
E'\bigcap\partial^{\ext} D^+(\varrho_E)\neq\emptyset}}
d(\varrho_{E'}).
\end{equation}
By property 1 and the induction hypothesis, we immediately obtain that $\dist(E_1, E_2) \geq \frac{M}{32} 2^{\alpha\ell}$ for any two distinct connected components $E_1, E_2$ of $\ensS_E\setminus\mathscr{M}$ with $2^{\ell} \leq d(E_1),d(E_2) \leq 2^{\ell+1}$, $\ell\geq 1$. Therefore, the number of possible $E'$ in this scale, which intersect $\partial^{\ext} D^+(\varrho_E)$, is bounded by
$$
\frac{2\abs{\partial^{\ext} D^+(\varrho_E)}}{\frac{M}{32}2^{\alpha \ell}} \leq \frac{1024}{M} 2^{-\alpha \ell} d(\varrho_E).
$$
Substituting this in \eqref{eq:circle_size} and using \eqref{eq:alphaMBounds} and the induction hypothesis yields
\begin{equation*}
\abs{\partial^{\ext} E} \leq 12d(\varrho_E) + 16d(\varrho_E)\sum_{\ell=1}^\infty \frac{1024}{M} 2^{-\alpha \ell} \cdot 2^{\ell+1} \leq 16 d(\varrho_E) \leq 64 d(E). \qedhere
\end{equation*}
%\alpha>3/2
\end{proof}

\newcommand*{\ROne}{2^{k-1}+2^{k-3}}
\newcommand*{\RTwo}{2^{k-1}+2^{k-2}}
\newcommand*{\Rratio}{6/5}
\newcommand*{\RratioFrac}{\frac{6}{5}}

\subsubsection{A Spin Wave for Every Square in $\SqrsSetTagOf{k}{\varrho{\text{\tiny{*}}}}$}

We remind that for each $k\geq 1$ and each $s\in\SqrsSetTagOf{k}{\varrho{\text{\tiny{*}}}}$ the density $s\cap\varrho{\text{\tiny{*}}}$, defined in \eqref{eq:sCapVarrho}, is non-neutral by the definition of $\SqrsSetTagOf{k}{\varrho{\text{\tiny{*}}}}$ and property \ref{enum21:c} of Theorem~\ref{thm:21}.
This fact allow us to construct an intermediate spin wave $a_{s,\varrho{\text{\tiny{*}}}}$, such that $E_\beta(a_{s,\varrho{\text{\tiny{*}}}}, \varrho{\text{\tiny{*}}}) > C/\beta$, for some positive absolute constant $C$.

\begin{prop} \label{prop:a_s_construction}
There exists a positive absolute constant $\DThr$ such that the following holds.
Let $k\geq 1$ be an integer and let $s\in\SqrsSetTagOf{k}{\varrho{\text{\tiny{*}}}}$.
Then there exists $a_{s,\varrho{\text{\tiny{*}}}}:\Lambda\to\mathbb{R}$ such that:
\begin{enumerate}
\item \label{item:a_s_construction1} $\supp a_{s,\varrho{\text{\tiny{*}}}} \subset \{j\in\Lambda: \dist(j,s)\leq 2^{k-1} \}$,
%\item \label{item:a_s_construction2} if $k < 10$ then $a_{s,\varrho{\text{\tiny{*}}}}$ is constant on $\{j\in\Lambda: \dist(j,s)\leq 2^{k-1} \}$,
\item \label{item:a_s_construction3} $a_{s,\varrho{\text{\tiny{*}}}}$ is constant on $\{j\in\Lambda: \dist(j,s)\leq \lceil2^{k-3}\rceil \}$,
\item \label{item:a_s_construction4} $a_{s,\varrho{\text{\tiny{*}}}}$ is constant on $D^+(\varrho')$ for every $\varrho'\in\ens_0(\varrho{\text{\tiny{*}}})$,
\item \label{item:a_s_construction5} it holds that
\begin{equation}%\label{eq:boundOnA}
E_\beta(a_{s,\varrho{\text{\tiny{*}}}}, \varrho{\text{\tiny{*}}}) = \langle a_{s,\varrho{\text{\tiny{*}}}}, \varrho{\text{\tiny{*}}}\rangle - \frac{\beta}{2}\gradNormUp{a_{s,\varrho{\text{\tiny{*}}}}} \geq
\frac{\DThr}{\beta}.
\end{equation}
\end{enumerate}
\end{prop}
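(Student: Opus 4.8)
\emph{Reduction to a Dirichlet-energy estimate.} I would first reduce the proposition to the following: produce $h:\Lambda\to\mathbb{R}$ satisfying properties~\ref{item:a_s_construction1}, \ref{item:a_s_construction3}, \ref{item:a_s_construction4} (with $a_{s,\varrho{\text{\tiny{*}}}}$ replaced by $h$), with $\langle h,\varrho{\text{\tiny{*}}}\rangle=Q$ and $\gradNormUp{h}\le C_0$ for an absolute constant $C_0$, where $Q:=Q(s\cap\varrho{\text{\tiny{*}}})$ is a nonzero integer by the observation preceding the proposition. Then $a_{s,\varrho{\text{\tiny{*}}}}:=(Q/(\beta C_0))\,h$ has $\langle a_{s,\varrho{\text{\tiny{*}}}},\varrho{\text{\tiny{*}}}\rangle=Q^2/(\beta C_0)$ and $\gradNormUp{a_{s,\varrho{\text{\tiny{*}}}}}\le Q^2/(\beta^2 C_0)$, so
\[
E_\beta(a_{s,\varrho{\text{\tiny{*}}}},\varrho{\text{\tiny{*}}})=\langle a_{s,\varrho{\text{\tiny{*}}}},\varrho{\text{\tiny{*}}}\rangle-\tfrac{\beta}{2}\gradNormUp{a_{s,\varrho{\text{\tiny{*}}}}}\ge\frac{Q^2}{\beta C_0}-\frac{Q^2}{2\beta C_0}=\frac{Q^2}{2\beta C_0}\ge\frac{1}{2\beta C_0},
\]
giving property~\ref{item:a_s_construction5} with $\DThr:=1/(2C_0)$.

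\emph{A variational definition of $h$.} Set $N:=\{j:\dist(j,s)\le\lceil 2^{k-3}\rceil\}$ and let $h$ be a minimizer of $\gradNormUp{\cdot}$ over all $g:\Lambda\to\mathbb{R}$ that are $\equiv 1$ on $N$, $\equiv 0$ on $\{j:\dist(j,s)>2^{k-1}\}$, and constant on $D^+(\varrho')$ for every $\varrho'\in\ens_0(\varrho{\text{\tiny{*}}})$. This affine constraint set is non-empty (the competitor $g_0$ built below lies in it), so the minimizer of the non-negative quadratic form $\gradNormUp{\cdot}$ over it exists; properties~\ref{item:a_s_construction1}, \ref{item:a_s_construction3}, \ref{item:a_s_construction4} then hold by construction. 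Every $g$ in the constraint set — in particular $h$ — satisfies $\langle g,\varrho{\text{\tiny{*}}}\rangle=Q$: indeed $s\cap\supp\varrho{\text{\tiny{*}}}\subseteq s\subseteq N$, while each point of $\supp\varrho{\text{\tiny{*}}}\setminus s$ lies in a square of $\SqrsSetOf{k}{\varrho{\text{\tiny{*}}}}$ other than $s$, hence (as $s$ is a separated square and $M=2^{16}$) at distance $\ge 2M2^{\alpha(k+1)}>2^{k-1}$ from $s$, so $g$ vanishes there and $\langle g,\varrho{\text{\tiny{*}}}\rangle=\sum_{j\in s}\varrho{\text{\tiny{*}}}(j)=Q$.

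\emph{The energy bound} — the existence of a competitor $g_0$ in the constraint set with $\gradNormUp{g_0}=O(1)$ — is the main point. For $k$ below an absolute threshold (so that, by part~\ref{item:appendixE_2} of Lemma~\ref{lemma:appendixE}, no set $D^+(\varrho')$ meets $N$) take $g_0:=\mathds{1}_N$, whose energy is an absolute constant; so assume $k$ large. Let $h_0(j):=\psi(\dist(j,s))$ with $\psi$ affine, $\equiv 1$ on $[0,2^{k-3}]$ and $\equiv 0$ on $[2^{k-2},\infty)$; counting the $O(2^{2k})$ edges meeting the annulus $\{2^{k-3}\le\dist(\cdot,s)\le 2^{k-2}\}$, each contributing $O(2^{-2k})$, gives $\gradNormUp{h_0}=O(1)$ — the two-dimensional order-one capacity at every dyadic scale. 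Only constancy on the $D^+(\varrho')$ can fail. Let $\mathcal C$ be the connected components $E$ of $\bigcup_{\varrho'\in\ens_0(\varrho{\text{\tiny{*}}})}D^+(\varrho')$ meeting $\{\dist(\cdot,s)<2^{k-2}\}$; for these $\dist(E,\varrho{\text{\tiny{*}}})\le\dist(E,s)+2^{k+1}<2^{k+2}$, so part~\ref{item:appendixE_2} of Lemma~\ref{lemma:appendixE} and $M=2^{16}$ force $d(E)<2^{(k-7)/\alpha}$, hence, as $\alpha>3/2$, $d(E)<2^{k-8}$ — tiny next to the annulus. I would then process the $E\in\mathcal C$ in non-decreasing order of diameter, each time redefining the current function on the thickened neighbourhood $\widehat E:=\{j:\dist(j,E)\le d(E)\}$ to be constant on $E$ and on any already-treated (hence strictly smaller) component inside $\widehat E$, interpolating affinely across the width-$d(E)$ shell $\widehat E\setminus E$ to match on $\partial^{\ext}\widehat E$. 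Since the current function oscillates by $O(2^{-k}d(E))$ over $\widehat E$, every new gradient is $O(2^{-k})$; inductively the final $g_0$ has gradients $O(2^{-k})$ on $\bigcup_{E\in\mathcal C}\widehat E$, equals $h_0$ elsewhere, is constant on every $D^+(\varrho')$, and (using $d(E)<2^{k-8}$) is supported in $\{\dist(\cdot,s)\le 2^{k-1}\}$. Finally $|\widehat E|=O(d(E)^2)=O(|E|)$ (by \eqref{eq:diamEBound}, $E$ contains $D(\varrho_E)$ with $d(\varrho_E)\ge d(E)/5$), and the $E$'s are disjoint subsets of $\{\dist(\cdot,s)=O(2^k)\}$, so $\sum_{E\in\mathcal C}|\widehat E|=O(2^{2k})$ and the added Dirichlet energy is $O(2^{-2k}\cdot 2^{2k})=O(1)$.

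\emph{Main obstacle.} The delicate step is the last one — producing a single competitor that is simultaneously constant on \emph{all} the $D^+(\varrho')$ while keeping its Dirichlet energy an absolute constant, given that the shells $\widehat E$ may overlap and even nest. What makes the greedy-by-diameter procedure close with telescoping $O(1)$ cost is the separation estimate in part~\ref{item:appendixE_1} of Lemma~\ref{lemma:appendixE}: because $M=2^{16}$ and $\alpha>3/2$, any component whose shell swallows another must have far larger diameter, and the shell geometry is further controlled by $|\partial^{\ext}E|\le 64\,d(E)$ (part~\ref{item:appendixE_3}). A secondary, bookkeeping-level point is the calibration of the plateau radii so that $g_0$ genuinely lands in the constraint set, which again uses the smallness $d(E)<2^{k-8}$.
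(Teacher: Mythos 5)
Your variational framing (minimize $\gradNormUp{\cdot}$ over the affine constraint set, then rescale) and the basic piecewise‑affine profile $h_0$ with $O(1)$ energy are fine, and they mirror what the paper does in spirit: the paper's case~1 ($k<10$) is a rescaled indicator, and its case~2 ($k\geq 10$) uses a logarithmic profile $b$ over an annulus of bounded ratio $R_2/R_1=6/5$, which likewise has $O(1)$ Dirichlet energy. The genuine gap is in your energy bound for the competitor $g_0$, specifically the claim ``every new gradient is $O(2^{-k})$; inductively the final $g_0$ has gradients $O(2^{-k})$.''

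When you interpolate across a width-$d(E)$ shell to flatten the current function on $E$, the Lipschitz constant is multiplied by an absolute factor $>1$ (from the term $\nabla\chi\cdot(\text{osc over }\widehat E)$). By part~\ref{item:appendixE_1} of Lemma~\ref{lemma:appendixE}, if two shells overlap then the smaller diameter is $\lesssim (\text{larger})^{1/\alpha}$, so diameters drop super-geometrically along a nested chain — but since $d(E)$ can be as small as $4$ and as large as $\sim 2^{k}$, a single point can lie in $\Theta(\log k)$ nested shells. Your induction therefore yields gradients $\lesssim C^{\log k}\cdot 2^{-k}=k^{\Theta(1)}\cdot 2^{-k}$ in the deepest regions, and the resulting Dirichlet bound is $k^{\Theta(1)}$, not $O(1)$. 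You do not address this compounding; the separation estimate is used only to assert that ``any component whose shell swallows another must have far larger diameter,'' which bounds the nesting depth by $O(\log k)$, not by an absolute constant. The paper sidesteps the issue entirely: it sets $a_{s,\varrho{\text{\tiny{*}}}}$ equal to $\gamma b(x_E)$ \emph{abruptly} on each $E$ and pays only for the jump edges across $\partial^{\ext}E$. Because distinct connected components are at graph distance $\geq 2$, the jump energies of different $E$'s do not interact, and the total is $\gamma^2\sum_E\sum_{j\in\partial^{\ext}E}(b(x_E)-b(j))^2 \lesssim \gamma^2\sum_E d(E)^3 |x_E|_s^{-2}$, which the paper sums via the radial–dyadic partition $\mathcal{E}_{\ell,r}$ and properties~\ref{item:appendixE_1}, \ref{item:appendixE_2}, \ref{item:appendixE_3} of Lemma~\ref{lemma:appendixE}; the series $\sum_\ell 2^{(3-2\alpha)\ell}$ converges precisely because $\alpha>3/2$. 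If you replace your smooth interpolation by the same abrupt flattening (take $g_0=h_0$ outside $\bigcup E$ and $g_0\equiv h_0(x_E)$ on $E$) and bound the jump energy per component, the compounding disappears and you recover the paper's argument; the variational layer you added on top is harmless but not load-bearing.
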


\begin{proof}

{\bf Case 1: $k < 10$.} Fix $k < 10$, and $s\in\SqrsSetTagOf{k}{\varrho{\text{\tiny{*}}}}$.
Define
$$
a_{s,\varrho{\text{\tiny{*}}}}(j) :=
\twopartdef{\frac{q}{2^{k+3}\beta},} {\dist(j,s) \leq 1}
{0,}{\text{otherwise}}
$$
where $q$ is the charge $Q(s \cap \varrho{\text{\tiny{*}}})$ (emphasizing again that $\abs{q} \geq 1$).

It is straightforward to see that properties \ref{item:a_s_construction1}, \ref{item:a_s_construction3} of the proposition are satisfied by this definition. To verify property \ref{item:a_s_construction4}, let $\varrho' \in \ens_0(\varrho{\text{\tiny{*}}})$, then by property \ref{enum21:b} of Theorem~\ref{thm:21}
$$
\dist(D^+(\varrho'), s) \geq \dist(\varrho', \varrho{\text{\tiny{*}}}) - 4d(\varrho') - d(s) \geq M\left( \frac{d(\varrho')}{2} \right)^\alpha - 4d(\varrho') - 2^{k+1} \geq \frac{M}{4} -2^{k+1}  > 2^{k-1},
$$
which shows that $a_{s, \varrho{\text{\tiny{*}}}}$ is zero on every $D^+(\varrho')$, $\varrho'\in\ens_0(\varrho{\text{\tiny{*}}})$. Finally, to verify property \ref{item:a_s_construction5}, note that
\begin{equation*}
E_\beta(a_{s,\varrho{\text{\tiny{*}}}}, \varrho{\text{\tiny{*}}}) = \langle a_{s,\varrho{\text{\tiny{*}}}}, \varrho{\text{\tiny{*}}}\rangle - \frac{\beta}{2}\gradNormUp{a_{s,\varrho{\text{\tiny{*}}}}} =
\frac{q^2}{2^{k+3}\beta} - \frac{\beta}{2} \cdot 2^{k+3} \frac{q^2}{2^{2(k+3)}\beta^2} \geq \frac{1}{2^{k+4}\beta}.
\end{equation*}

{\bf Case 2: $k \geq 10$.}
\newcommand*{\abss}[1]{\abs{#1}_s}
Fix $k\geq 10$ and $s\in\SqrsSetTagOf{k}{\varrho{\text{\tiny{*}}}}$. We define $a_{s,\varrho{\text{\tiny{*}}}}$ using a parameter $\gamma > 0$ which is chosen later to minimize the lower bound on $E_\beta(a_{s, \varrho{\text{\tiny{*}}}}, \varrho{\text{\tiny{*}}})$.

Define a function $\abss{\cdot}:\Lambda\to\mathbb{R}$ that roughly measures the distance between points of $\Lambda$ and the center of $s$ by
\begin{equation*}
\abss{j}:= \dist(j,s) + 2^{k-1}.
\end{equation*}
For notational simplicity, denote
\begin{align*}
R_1 &:= \ROne, \\
R_2 &:= \RTwo = \RratioFrac R_1.
\end{align*}
Define a function $b:\Lambda\to\mathbb{R}$ by
$$
b(j) :=
\threepartdef{\ln(\Rratio),} {\abss{j} \leq R_1}
{\ln[R_2\abss{j}^{-1}],}{R_1 \leq \abss{j} \leq R_2}
{0,}{\abss{j} \geq R_2}.
$$
Let $\mathcal{E}$ be the set of connected components of $\bigcup_{\varrho \in \ens_0(\varrho{\text{\tiny{*}}})}D^+(\varrho)$. We define $a_{s,\varrho{\text{\tiny{*}}}}$ as follows:
\begin{equation} \label{label:eq437}
a_{s,\varrho{\text{\tiny{*}}}}(j) :=
\twopartdef{\gamma b(j),} {j \notin E \text{ for any } E\in\mathcal{E}}
{\gamma b(x_E),}{j \in E\in\mathcal{E}},
\end{equation}
where $x_E\in E$ is chosen so that if $E$ meets $\abss{j} = R_1$ then $\abss{x_E} = R_1$, if $E$ meets $\abss{j} = R_2$ then $\abss{x_E} = R_2$, and otherwise it is arbitrarily chosen point in $E$. Note that $E\in\mathcal{E}$ cannot meet both $\abss{j} \leq R_1$ and $\abss{j} \geq R_2$, since $\dist(E, \varrho{\text{\tiny{*}}})\leq 2R_1$ and thus by \eqref{eq:alphaMBounds} and property \ref{item:appendixE_2} of Lemma~\ref{lemma:appendixE} it follows that
$$
d(E) \leq  \frac{256}{M} R_1 < \frac{R_2-R_1}{2}.
$$

%M_Large
It is straightforward to see that properties \ref{item:a_s_construction1}, \ref{item:a_s_construction3}, \ref{item:a_s_construction4} of the proposition are satisfied by this definition.
We verify property \ref{item:a_s_construction5} next.
First, note that since $a_{s,\varrho{\text{\tiny{*}}}}$ is valued $\gamma\ln(\Rratio)$ on $s\cap\varrho{\text{\tiny{*}}}$, and is valued 0 on $\varrho{\text{\tiny{*}}} - (s\cap\varrho{\text{\tiny{*}}})$ (since by property \ref{enum21:c} of Theorem~\ref{thm:21} $\dist(s,\varrho{\text{\tiny{*}}} - (s\cap\varrho{\text{\tiny{*}}})) \geq 2M2^{\alpha (k+1)} > R_2$), and thus
\begin{equation}\label{eq:a_rho_bound}
\langle a_{s,\varrho{\text{\tiny{*}}}}, \varrho{\text{\tiny{*}}}\rangle = \gamma \ln(\Rratio) Q(s\cap \varrho{\text{\tiny{*}}}).
\end{equation}
In the rest of the proof we show that there exists a positive absolute constant $C$ such that
\begin{equation} \label{eq:grad_a_bound}
\gradNormUp{a_{s,\varrho{\text{\tiny{*}}}}} \leq
\gamma^2 C,
\end{equation}
which yields (using \eqref{eq:a_rho_bound})
$$
E_\beta(a_{s,\varrho{\text{\tiny{*}}}}, \varrho{\text{\tiny{*}}}) \geq \gamma \ln(\Rratio) Q(s\cap \varrho{\text{\tiny{*}}}) - \frac{\beta}{2}\gamma^2 C \ln(2/\upsilon),
$$
and optimization on $\gamma$ (using $\abs{Q(s\cap\varrho{\text{\tiny{*}}})} \geq 1$) verifies property \ref{item:a_s_construction5} of the proposition.

To show \eqref{eq:grad_a_bound}, note that by the definition of $a_{s,\varrho{\text{\tiny{*}}}}$,
\begin{equation} \label{label:eq442}
\gradNormUp{a_{s,\varrho{\text{\tiny{*}}}}} =
\gamma^2
\sum_{
\substack{j\thicksim \ell \\ j,\ell \notin \bigcup_{E\in\mathcal{E}} E}
}(b(j)-b(\ell))^2
+ \gamma^2\sum_{E\in\mathcal{E}}\sum_{j\in\partial^{\ext} E} (b(x_E) - b(j))^2.
\end{equation}

To bound the first sum on the right-hand side of \eqref{label:eq442}, note that there are at most $2^{2(k+1)}$ edges involved in the sum, and for each such edge $\{j,\ell\}\in E(\Lambda)$,
$$
(b(j)-b(\ell))^2 \leq \left(\ln\left( 1 + \frac{1}{\min(\abss{j}, \abss{\ell})}\right)\right)^2 \leq \frac{1}{\min(\abss{j}, \abss{\ell})^2} \leq 2^{-2(k-1)},
$$
and therefore,
\begin{equation} \label{eq:first_a_bound}
\sum_{
\substack{j\thicksim \ell \\ j,\ell \notin \bigcup_{E\in\mathcal{E}} E}
}(b(j)-b(\ell))^2 \leq 2^{2(k+1) - 2(k-1)} = 16.
\end{equation}

We continue with bounding the second sum on the right-hand side of \eqref{label:eq442}. Denote by $\mathcal{E}_0\subseteq\mathcal{E}$ the set of connected components $E\in\mathcal{E}$ such that $R_1 \leq \abss{x_E} \leq R_2$, and note that $\sum_{j\in\partial^{\ext} E} (b(x_E) - b(j))^2 = 0$ for $E\in\mathcal{E}\setminus\mathcal{E}_0$.

Let $E\in\mathcal{E}_0$. We show next that for some absolute constant $C_2$ it holds that
\begin{equation} \label{eq:bSecSumBound}
\sum_{j\in\partial^{\ext} E} (b(x_E) - b(j))^2 \leq C_2 d(E)^3 \abss{x_E}^{-2}.
\end{equation}
Indeed, let $j\in\partial^{\ext}E$.
Using
\begin{equation} \label{eq:dE_xE_rel}
\dist(E,\varrho{\text{\tiny{*}}}) \leq \dist(x_E, s) + 2R_1 \leq 3\abss{x_E},
\end{equation}
and property \ref{item:appendixE_2} of Lemma~\ref{lemma:appendixE}, we obtain that
$$
\big|\abss{j}-\abss{x_E}\big| \leq 2d(E) \leq \frac{256}{M} \dist(E,\varrho{\text{\tiny{*}}}) \leq \frac{1}{2}\abss{x_E},
$$
and by $(\ln(1+t))^2 \leq 10 t^2$ for $\abs{t} < \frac{1}{2}$,
$$
(b(j) - b(x_E))^2 \leq \big( \ln\abss{j} - \ln\abss{x_E} \big) ^ 2
= \left( \ln\left(1 + \frac{\abss{j}-\abss{x_E}}{\abss{x_E}} \right) \right)^2
\leq \frac{(\abss{j}-\abss{x_E})^2}{\abss{x_E}^2} \leq \frac{4d(E)^2}{\abss{x_E}^2}.
$$
Thus, by property \ref{item:appendixE_3} of Lemma~\ref{lemma:appendixE},
\begin{equation*}
\sum_{j\in\partial^{\ext} E} (b(x_E) - b(j))^2 \leq
\abs{\partial^{\ext}E} \frac{4d(E)^2}{\abss{x_E}^2} \leq 256 {d(E)^3}{\abss{x_E}^2},
\end{equation*}
which proves \eqref{eq:bSecSumBound}.

To bound the right-hand side of \eqref{eq:bSecSumBound}, we split $\mathcal{E}_0$ into a disjoint union of $\mathcal{E}_{\ell, r}$, indexed by positive integers $r,\ell$, and defined by
$$
\mathcal{E}_{\ell, r} := \{ E\in\mathcal{E}_0: 2^{\ell}\leq d(E) < 2^{\ell+1}, \quad r c_{\ell}\leq\abss{x_E}<(r+1)c_{\ell}\}, \quad\quad c_{\ell} := \frac{M}{512}2^{\alpha \ell}.
$$
This is a partition of $\mathcal{E}_0$, since $d(E)\geq 4$, and since for every $E\in\mathcal{E}_0$ with $2^{\ell}\leq d(E) < 2^{\ell+1}$ it holds that $\abss{x_E}\geq c_\ell$, by property \ref{item:appendixE_2} of Lemma~\ref{lemma:appendixE} and \eqref{eq:bSecSumBound}.

We continue by bounding $\abs{\mathcal{E}_{\ell, r}}$. Fix $\ell,r\geq1$. By property \ref{item:appendixE_1} of Lemma~\ref{lemma:appendixE} it holds that
$$
\dist(x_{E_1}, x_{E_2}) \geq \dist(E_1, E_2) \geq \frac{M}{25} 2^{\alpha \ell} - 2\cdot 2^{\ell+1} \geq \frac{M}{32} 2^{\alpha \ell} \geq 3 c_\ell,
\quad E_1, E_2 \in \mathcal{E}_{\ell, r},
$$
and therefore the sets $\{j\in\Lambda: \dist(j,x_E) \leq c_\ell \}$, $E\in\mathcal{E}_{\ell, r}$ are disjoint, which implies
\begin{equation} \label{eq:E_0_size_bound}
\abs{\mathcal{E}_{\ell, r}} \leq
\frac
{\abs{\{j: r c_\ell\leq\abss{j}<(r+1)c_\ell\}}}
{\abs{\{j\in\Lambda: \dist(j,x_E) \leq c_\ell \}}} \leq C_3 r,
\end{equation}
for some positive absolute constant $C_3$.
Also, if $r c_\ell > R_2$ or $(r+1)c_\ell <  R_1$ then $\abs{\mathcal{E}_{\ell, r}}=0$.

Thus, by \eqref{eq:alphaMBounds}, \eqref{eq:bSecSumBound} and \eqref{eq:E_0_size_bound},
\begin{multline} \label{eq:SumOnEBound}
\sum_{E\in\mathcal{E}_0}\sum_{j\in\partial^{\ext} E} (b(x_E) - b(j))^2
\leq
C_2 \sum_{E\in\mathcal{E}_0} d(E)^3 \abss{x_E}^{-2}
\leq
C_2 \sum_{\ell=1}^{\infty}
\sum_{
\substack{r\geq 1 \\ \lceil R_1/c_\ell\rceil-1\leq r \leq \lfloor R_2/c_\ell\rfloor}
}
\abs{\mathcal{E}_{\ell, r}} \cdot 2^{3(\ell+1)} \cdot \frac{1}{(r c_\ell)^2} \\
\leq
C_2 C_3 2^3\sum_{\ell=1}^{\infty} 2^{(3-2\alpha)\ell}
\sum_{
\substack{r\geq 1 \\ \lceil R_1/c_\ell\rceil-1\leq r \leq \lfloor R_2/c_\ell\rfloor}
}
\frac{1}{r}
\leq
C_4 \sum_{\ell=1}^{\infty} 2^{(3-2\alpha)\ell} \ln\frac{R_2}{R_1} \leq C_5.
\end{multline}

Combining \eqref{label:eq442}, \eqref{eq:first_a_bound} and \eqref{eq:SumOnEBound} yields \eqref{eq:grad_a_bound}, finishing the proof of Proposition~\ref{prop:a_s_construction}.
\end{proof}

%\alpha > 3/2

\subsubsection{Proof of Proposition \ref{prop:aExistance} - Spin Wave Existence} \label{sec:proofOfSW}
Let $a_{0, \varrho{\text{\tiny{*}}}}$ be be such that the properties of Lemma~\ref{lemma:a_0_construction} hold, and let $a_{s,\varrho{\text{\tiny{*}}}}$, $s\in\SqrsSetTagOf{k}{\varrho{\text{\tiny{*}}}}$, $k\geq 1$, be such that the properties of Proposition~\ref{prop:a_s_construction} hold, and define
$$
a_{\varrho{\text{\tiny{*}}}} := a_{0, \varrho{\text{\tiny{*}}}} + \sum_{k=1}^\infty \sum_{s\in\SqrsSetTagOf{k}{\varrho{\text{\tiny{*}}}}} a_{s, \varrho{\text{\tiny{*}}}}.
$$
We verify next the properties of Proposition~\ref{prop:aExistance} for $a_{\varrho{\text{\tiny{*}}}}$. Property \ref{enum1:aExistance} is immediate by Lemma~\ref{lemma:a_0_construction} and by Proposition~\ref{prop:a_s_construction}. Properties \ref{enum2:aExistance} and \ref{enum3:aExistance} hold
for $a_{0,\varrho{\text{\tiny{*}}}}$ by Lemma~\ref{lemma:a_0_construction}. They hold also for every $a_{s,\varrho{\text{\tiny{*}}}}$, $s\in\SqrsSetTagOf{k}{\varrho{\text{\tiny{*}}}}$, $k\geq 1$, since given such $s$, by the definition of $\SqrsSetTagOf{k}{\varrho{\text{\tiny{*}}}}$ it holds that
$$
d(\varrho{\text{\tiny{*}}}) \geq \dist(s, \varrho{\text{\tiny{*}}} - \varrho{\text{\tiny{*}}} \cap s) \geq 2M 2^{\alpha(k+1)},
$$
which implies using Proposition~\ref{prop:a_s_construction} that $\supp a_{s,\varrho{\text{\tiny{*}}}} \subset \{j\in\Lambda: \dist(j,s)\leq 2^{k-1} \} \subset D(\varrho{\text{\tiny{*}}})$.

We are thus left with verifying property \ref{enum4:aExistance}. For that we show that for every edge of $\Lambda$, the gradient on this edge is non zero in at most one $a_{t, \varrho{\text{\tiny{*}}}}$, and thus it follows by Lemma~\ref{lemma:a_0_construction} and by Proposition~\ref{prop:a_s_construction} that
\begin{multline} \label{eq:aTotBound}
E_\beta(a_{\varrho{\text{\tiny{*}}}}, \varrho{\text{\tiny{*}}}) =
\langle a_{\varrho{\text{\tiny{*}}}}, \varrho{\text{\tiny{*}}}\rangle - \frac{\beta}{2}\gradNormUp{a_{\varrho{\text{\tiny{*}}}}}
=
\sum_t \langle a_{t,\varrho{\text{\tiny{*}}}}, \varrho{\text{\tiny{*}}}\rangle
-\frac{\beta}{2}\sum_t\gradNormUp{a_{t,\varrho{\text{\tiny{*}}}}}
\\
=
E_\beta(a_{0,\varrho{\text{\tiny{*}}}}, \varrho{\text{\tiny{*}}}) + \sum_{k=1}^\infty\sum_{s\in\SqrsSetTagOf{k}{\varrho{\text{\tiny{*}}}}} E_\beta(a_{s,\varrho{\text{\tiny{*}}}}, \varrho{\text{\tiny{*}}})
\geq
\frac{1}{\beta}\left( \frac{1}{16} \norm{\varrho{\text{\tiny{*}}}}_2^2 + \DThr \sum_{k=1}^\infty \abs{\SqrsSetTagOf{k}{\varrho{\text{\tiny{*}}}}} \right).
\end{multline}

First note that for any $s\in\SqrsSetTagOf{k}{\varrho{\text{\tiny{*}}}}$, $k\geq1$, all of the non-zero gradients of $a_{0, \varrho{\text{\tiny{*}}}}$ are on edges inside the square $\{j:\dist(j,s)\leq 1 \}$, in which the gradients of $a_{s, \varrho{\text{\tiny{*}}}}$ are zero by Proposition~\ref{prop:a_s_construction}.

Second, let $s\in\SqrsSetTagOf{k}{\varrho{\text{\tiny{*}}}}$, $k\geq1$. By the definition of $\SqrsSetTagOf{k}{\varrho{\text{\tiny{*}}}}$ the only gradients which might interfere with the non-zero gradients of $a_{s, \varrho{\text{\tiny{*}}}}$, are the non-zero gradients of squares $s'\in\SqrsSetTagOf{k'}{\varrho{\text{\tiny{*}}}}$, $k'\geq1$, $k'\neq k$, for which $s \cap s' \neq \emptyset$.

For the case in which there is exactly one such square $s'$, we need an assumption on the choice of $\SqrsSetOf{\ell}{\varrho{\text{\tiny{*}}}}$ for all $\ell > 1$.
We assume that they are chosen such that if $s \in \SqrsSetTagOf{\ell}{\varrho{\text{\tiny{*}}}}$, $\ell \geq 1$, then $2s$, the square with edge of size $2^{\ell+1}$ sites, which is centred at the center of $s$, is in $\SqrsSetOf{\ell+1}{\varrho{\text{\tiny{*}}}}$. It is safe to assume it, since $s \in \SqrsSetTagOf{\ell}{\varrho{\text{\tiny{*}}}}$, and thus $\dist(s, (\varrho{\text{\tiny{*}}} - s\cap\varrho{\text{\tiny{*}}})) > 2^{\ell+1}$.
It is now straightforward to see that by Proposition~\ref{prop:a_s_construction}, the set of edges with non-zero gradients of $a_{s, \varrho{\text{\tiny{*}}}}$ and the set of edges with non-zero gradients of $a_{s', \varrho{\text{\tiny{*}}}}$ are disjoint.

For the case in which there are more than two distinct squares, let $s_i \in \SqrsSetTagOf{k_i}{\varrho{\text{\tiny{*}}}}$, $i=1,2$, be such distinct squares. We assume without loss of generality that $k_1\leq k_2 < k$.
Then by definition, $\dist(s_1, s_2\cap\varrho{\text{\tiny{*}}}) \geq 2M2^{\alpha (k_1+1)}$, but also $\dist(s_1, s_2\cap\varrho{\text{\tiny{*}}}) \leq 2^{k+1}$, and thus, by \eqref{eq:alphaMBounds}, $k > k_1 + 16$.
It is straightforward to see by Proposition~\ref{prop:a_s_construction} that the sets of edges with non-zero gradients are disjoint. This finishes the proof of Proposition \ref{prop:aExistance}.

\subsection{Complex Translation} \label{sec:complex}

We show in this section a proof to \eqref{eq:ComplexTrans} by a change of integration contour.
Let $\uptau, a:\Lambda\to\mathbb{R}$ with $\sum_{j\in\Lambda}\uptau_j=0$.
Note that $\sum_{j\in\Lambda}\uptau_j=0$ implies that \eqref{eq:ComplexTrans} depends only on gradients of $a$, and thus it is safe to assume that $a_v=0$, where $v$ is the normalization point of $\midaLambdaV$.

We start with the following one-variable equality.

\begin{lemma} \label{lemma:contour}
Let $a,q\in \mathbb{R}$, $\beta>0$, and let $(x_j)$ be a finite real sequence. Then
\begin{equation}
\int_{-\infty}^{\infty} e^{iqx} \exp[-(\beta/2)\sum_j (x-x_j)^2]dx = \int_{-\infty}^{\infty} e^{iq(x+ia)} \exp[-(\beta/2)\sum_j (x+ia-x_j)^2]dx.
\end{equation}
\end{lemma}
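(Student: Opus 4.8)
The plan is to regard the integrand as the restriction to a horizontal line of an entire function of a complex variable, and to shift the contour from $\mathbb{R}$ to $\mathbb{R}+ia$ by applying Cauchy's theorem on a tall thin rectangle; the only real work is to check that the two vertical sides of the rectangle contribute negligibly in the limit.

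Concretely, I would set $F(z):=e^{iqz}\exp\!\big[-\tfrac{\beta}{2}\sum_j(z-x_j)^2\big]$ for $z\in\mathbb{C}$; as a product of entire functions it is entire. The left-hand side of the claimed identity is $\int_{\mathbb R}F(x)\,dx$ and the right-hand side is $\int_{\mathbb R}F(x+ia)\,dx$. For $R>0$, let $\Gamma_R$ be the positively oriented boundary of the rectangle with vertices $-R,\,R,\,R+ia,\,-R+ia$ (if $a<0$, interchange the top and bottom edges; the argument is unchanged). Since $F$ is entire, $\oint_{\Gamma_R}F=0$, which rearranges to
\[
\int_{-R}^{R}F(x)\,dx-\int_{-R}^{R}F(x+ia)\,dx=-\int_{0}^{a}iF(R+it)\,dt+\int_{0}^{a}iF(-R+it)\,dt .
\]
Both integrands $F(x)$ and $F(x+ia)$ are absolutely integrable over $x\in\mathbb R$, since for any fixed real $c$ one has $|F(x+ic)|=e^{-qc}\exp\!\big[-\tfrac{\beta}{2}\sum_j\big((x-x_j)^2-c^2\big)\big]$, a Gaussian in $x$. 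Hence it suffices to show that the two terms on the right-hand side vanish as $R\to\infty$.

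On the vertical edges write $z=\pm R+it$ with $t$ ranging over the segment between $0$ and $a$. Then $\operatorname{Re}\big[(z-x_j)^2\big]=(\pm R-x_j)^2-t^2$, so, with $n$ the number of terms in the sum,
\[
\big|F(\pm R+it)\big|=e^{-qt}\exp\!\Big[-\tfrac{\beta}{2}\sum_j\big((\pm R-x_j)^2-t^2\big)\Big]\le C\exp\!\big[-\tfrac{\beta n}{2}R^2+C'R\big],
\]
where $C,C'$ depend only on $\beta,q,a$ and the finitely many $x_j$, uniformly for $t$ in the compact segment between $0$ and $a$ (the factors $e^{-qt}$ and $e^{\beta t^2/2}$ are bounded there, the cross term $\pm\beta R\sum_j x_j$ is absorbed into $C'R$). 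Therefore each vertical integral is bounded by $|a|\,C\exp[-\tfrac{\beta n}{2}R^2+C'R]\to 0$ as $R\to\infty$, and letting $R\to\infty$ in the displayed identity yields the lemma. The only point demanding attention is this uniform decay of $F$ on the vertical edges, and it is immediate: the imaginary part there stays in the bounded interval between $0$ and $a$ while the real part $\pm R$ makes the Gaussian factor dominate, so no genuine obstacle arises.
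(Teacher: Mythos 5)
Your proof is correct and follows essentially the same route as the paper: both regard the integrand as an entire function of $z$, apply Cauchy's theorem to a rectangular contour with vertices $\pm R$, $\pm R + ia$, and show the two vertical sides vanish as $R\to\infty$ because the Gaussian factor decays while the imaginary part stays bounded. Your presentation just spells out the vertical-edge estimate in slightly more explicit form than the paper does.
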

\begin{proof}
Let $b \in (0,\infty)$. As the function $I:\mathbb{C}\to\mathbb{C}$, $I(z):= e^{iqz}\exp[-(\beta/2)\sum_j (z-x_j)^2]$ is entire, its integral along the rectangular contour with corners at $-b, b, b+ia, -b+ia$ is zero. Therefore,
\begin{equation}\label{label:contour}
\int_{-b \rightarrow b} I(z)dz =
\int_{-b+ia \rightarrow b+ia} I(z)dz
+\int_{-b \rightarrow -b+ia} I(z)dz +
\int_{b+ia \rightarrow b} I(z)dz.
\end{equation}
Since
$$
\abs{I(z)} = e^{-q \text{Im}(z) - \frac{\beta}{2} \sum_j \text{Re}(z-x_j)^2},
$$
which tends to $0$ as $z\to\infty$ with $\text{Im}(z)$ bounded, it holds that that the last two terms on the right-hand side of \eqref{label:contour} tend to $0$ as $b\to\infty$, which finishes the proof.
\end{proof}

Iterative application of Lemma~\ref{lemma:contour} for each $\phi_k$, $k\in\Lambda\setminus\{v\}$ (recall that we assumed $a_v=0$), yields
\begin{multline*}
\int e^{i\phiOf{\uptau}}\midaLambdaV =
\int e^{i\langle \phi + ia, \uptau\rangle - \frac{\beta}{2}\langle \phi + ia, -\Delta_\Lambda (\phi+ia)\rangle}\mathds{1}_{[-\pi, \pi)}\big(\phi_v\big)\prod_{k \in \Lambda}d\phi_k \\
=
e^{-E_\beta(a,\uptau)}  \int e^{i\phiOf{\uptau+\beta\Delta_\Lambda a}} \midaLambdaV,
\end{multline*}
which finishes the proof of \eqref{eq:ComplexTrans}.

\section{Integer-Valued Discrete Gaussian Free Field}

In this section we deduce Theorem~\ref{thm:IV} from Theorem \ref{thm11} and prove Proposition~\ref{prop:IVUpperBound}.

\subsection{Proof of Theorem~\ref{thm:IV}} \label{sec:proofIV}

Let $\varepsilon > 0$ and $\Lambda$ be a square domain with free or periodic b.c.. Let $(F_N)$, $N\ge 1$, be the sequence of Fej\'er kernels, where $F_N:\mathbb{R}\to\mathbb{R}$ is given by
$$
F_N(x) := 1 + \sum_{q=1}^N 2\left( 1- \frac{q}{N}\right) \cos(qx).
$$
With a slight abuse of notation we set, for each $N\ge 1$, $\mathbb{E}_{\beta, \Lambda, F_N, v}:=\mathbb{E}_{\beta, \Lambda, \lambda_\Lambda, v}$ where $\lambda_j = F_N$ for all $j\in\Lambda$. Note that $F_N$ is $(1,0,0)$-sub-Gaussian for each $N\ge 1$. Thus, for each $N\ge 1$, we may apply Theorem \ref{thm11} to obtain, with $\beta_0$ uniform in $N$ and $\Lambda$, that for every $f:\Lambda\to\mathbb{R}$ with $\sum_{j\in\Lambda} f_j=0$,
\begin{equation*}
  \mathbb{E}_{\beta/(2\pi)^2, \Lambda, F_N, v}\left[ e^{\phiOf{\frac{1}{2\pi} f}} \right] \ge \exp\left[ \frac{1}{2(1+\varepsilon)\beta}\langle f, -\Delta_\Lambda^{-1}f\rangle \right],\quad \beta<(2\pi)^2\beta_0.
\end{equation*}
Thus, to prove the first part of Theorem~\ref{thm:IV} it suffices to show that
\begin{equation}\label{eq:IV_convergence}
  \mathbb{E}_{\beta/(2\pi)^2, \Lambda, F_N, v}\left[ e^{\phiOf{\frac{1}{2\pi} f}} \right]\xrightarrow{N\rightarrow\infty}
\mathbb{E}_{\beta, \Lambda, v}^{\IV}\left[ e^{\langle m, f\rangle} \right].
\end{equation}
Define for every $m\in \mathbb{Z}^{\Lambda}$ with $m_v = 0$,
$$
\Omega_m := \big\{ \phi\in \mathbb{R}^{\Lambda} : \phi_j \in [-\pi + 2\pi m_j, \pi + 2\pi m_j)\big\}.
$$
As $F_N$ is a summability kernel (see \cite[Chapter 1]{lambdaLim}) it follows that for every $g:\Lambda\to\mathbb{R}$ and $m\in \mathbb{Z}^{\Lambda}$ with $m_v=0$,
$$
\int_{\Omega_m}
e^{\phiOf{g}-\frac{\beta}{2}\langle\phi, (-\Delta_\Lambda)\phi\rangle}
\prod_{j\in\Lambda} F_N(\phi_j)d\phi_j
\xrightarrow{N\rightarrow\infty}
e^{\langle 2\pi m, g\rangle-\frac{\beta}{2}\langle2\pi m, (-\Delta_\Lambda)2\pi m\rangle}.
$$
Therefore, using dominated convergence,
%Dominated convergence: if $\lim_{N\rightarrow\infty} a_{m,N}=a_m \ \forall m$, and $\abs{a_{m,N}} \leq b_m \ \forall N$, and $\sum_m b_m < \infty$, then $\lim_{N\rightarrow\infty} \sum_m a_{m,N} = \sum_m a_m$.
\begin{align*}
\lim_{N\to\infty}\int e^{\phiOf{g}}\prod_{j \in \Lambda} F_N(\phi_j)\midaLambdaV
&=
\lim_{N\to\infty}\sum_{m\in \mathbb{Z}^{\Lambda},\,m_v = 0} \int_{\Omega_m}e^{\phiOf{g}-\frac{\beta}{2}\langle\phi, (-\Delta_\Lambda)\phi\rangle}
\prod_{j\in\Lambda} F_N(\phi_j)d\phi_j\\
&=
\sum_{m\in \mathbb{Z}^{\Lambda},\,m_v = 0}e^{\langle 2\pi m, g\rangle-\frac{\beta}{2}\langle2\pi m, (-\Delta_\Lambda)2\pi m\rangle},
\end{align*}
which implies \eqref{eq:IV_convergence}.

The second part of Theorem~\ref{thm:IV} now follows from the first as follows. Let $\delta>0$, let $g:\Lambda \rightarrow \mathbb{R}$ with $\sum_{j \in \Lambda} g_j = 0$ and set $f:=\delta\cdot g$. On the one hand we have from \eqref{eq:IVThm1} that
\begin{equation*}
  \mathbb{E}_{\beta, \Lambda, v}^{\IV}\big[ e^{\delta \mOf{g}} \big]\ge \exp\Big[ \frac{\delta^2}{2(1+\varepsilon)\beta}\langle g, -\Delta_\Lambda^{-1}g\rangle \Big]\ge 1 + \frac{\delta^2}{2(1+\varepsilon)\beta}\langle g, -\Delta_\Lambda^{-1}g\rangle.
\end{equation*}
On the other hand, a Taylor expansion in $\delta$ and the fact that $\mu_{\beta, \Lambda, v}^{\IV}$ is invariant under the mapping $m\mapsto-m$ show that
\begin{equation*}
  \mathbb{E}_{\beta, \Lambda, v}^{\IV}\big[ e^{\delta \mOf{g}} \big]\le \mathbb{E}_{\beta, \Lambda, v}^{\IV}\big[ 1 + \delta \mOf{g} + \frac{1}{2}\delta^2 \mOf{g}^2 + \frac{1}{6}\delta^3 |\mOf{g}|^3e^{\delta |\mOf{g}|}\big] \le 1+\frac{\delta^2}{2} \mathbb{E}_{\beta, \Lambda, v}^{\IV}\big[ \mOf{g}^2 \big] + D \delta^3,
\end{equation*}
for some $D=D(g)<\infty$. The last two inequalities imply the second part of Theorem~\ref{thm:IV}.

%, and due to $\mathbb{E}_{\beta, \Lambda, \lambda_\Lambda, v}[\exp(\varepsilon\phiOf{f}] < \infty$, where we assume $\delta$ is small enough.
%Dividing by $\delta^2$ and taking $\delta \rightarrow 0$, yields the theorem. \hfill\qedsymbol

\subsection{Upper Bound on the Integer-Valued Discrete Gaussian Free Field}\label{sec:upper_bound}
In this section we prove Proposition~\ref{prop:IVUpperBound}.

Let $\Lambda$ be a finite, connected graph, $\beta>0$, $v\in\Lambda$ and $f:\Lambda\to\mathbb{R}$ satisfy $\sum_{j\in\Lambda} f_j=0$. For each $\eta\ge 0$ define the `Sine-Gordon' measure
$$
d\mu_{\beta, \Lambda, \eta, v}^{\UP}(\phi) := \frac{1}{Z_{\beta, \Lambda, \eta, v}^{\UP}} e^{\eta\sum_{j\in\Lambda}\cos\phi_j}\midaLambdaV,
$$
where the normalization constant $Z_{\beta, \Lambda, \eta, v}^{\UP}$ normalizes $\mu_{\beta, \Lambda, \eta, v}^{\UP}$ to be a probability measure.
Denote by $\mathbb{E}_{\beta, \Lambda, \eta, v}^{\UP}$ the corresponding expectation.
As
$$
k_{\eta_n}(t) := \frac{1}{2\pi} \frac{e^{\eta_n\cos t}}{\int_{-\pi}^{\pi}e^{\eta_n\cos t}dt}
$$
is a summability kernel for every sequence of non-negative real numbers $(\eta_n)_{n=1}^{\infty}$ increasing to infinity (see \cite[Chapter 1]{lambdaLim}) then, as in the proof of \eqref{eq:IV_convergence},
\begin{equation} \label{eq:UpMeasureLimit}
\lim_{\eta\to\infty} \mathbb{E}_{\beta/(2\pi)^2, \Lambda, \eta, v}^{\UP} \left[ e^{\phiOf{\frac{1}{2\pi}f}} \right] = \mathbb{E}_{\beta, \Lambda, v}^{\IV} \left[ e^{\langle m, f\rangle} \right].
\end{equation}
The key fact is that
\begin{equation}\label{eq:decreasing_function}
  \text{$\mathbb{E}_{\beta, \Lambda, \eta, v}^{\UP}\left[ e^{\phiOf{f}} \right]$ is a non-increasing function in $\eta\in[0,\infty)$}.
\end{equation}
Proposition~\ref{prop:IVUpperBound} is an immediate corollary of \eqref{eq:decreasing_function} and \eqref{eq:UpMeasureLimit}, by \eqref{eq:DGFFRes} and the fact that
$$
\mathbb{E}_{\beta, \Lambda, 0, v}^{\UP}\left[ e^{\phiOf{f}} \right] =
\mathbb{E}_{\beta, \Lambda, v}^{\GFF}\left[ e^{\langle f, \phi\rangle} \right].
$$
We proceed to prove \eqref{eq:decreasing_function}. It is straightforward to check that
$$
\frac{d}{d\eta} \mathbb{E}_{\beta, \Lambda, \eta, v}^{\UP}\left[ e^{\phiOf{f}} \right] =
\sum_{j\in\Lambda} \int \int e^{\phiOf{f}} \left( \cos\phi_j - \cos\phi'_j \right) d\mu_{\beta, \Lambda, \eta, v}^{\UP}(\phi) d\mu_{\beta, \Lambda, \eta, v}^{\UP}(\phi').
$$
Now, the change of variables
$$
\psi_j := \frac{\phi_j - \phi'_j}{\sqrt{2}}, \quad \chi_j := \frac{\phi_j + \phi'_j}{\sqrt{2}},
$$
yields, using the identities $\cos x - \cos y = -2 \sin \frac{x+y}{2} \sin\frac{x-y}{2}$, $\cos x + \cos y = 2 \cos \frac{x+y}{2} \cos\frac{x-y}{2}$, the fact that the Jacobian equals 1, the Taylor expansion of the exponential function and dominated convergence,
\begin{multline*}
\frac{d}{d\eta} \mathbb{E}_{\beta, \Lambda, \eta, v}^{\UP}\left[ e^{\phiOf{f}} \right]\\
=
-2 \left(\frac{1}{Z_{\beta, \Lambda, \eta, v}^{\UP}}\right)^2
\sum_{j\in\Lambda} \int \int
e^{\frac{1}{\sqrt{2}}(\psiOf{f} + \langle \chi, f\rangle)}
\sin\frac{\psi_j}{\sqrt{2}} \sin\frac{\chi_j}{\sqrt{2}}
e^{2\eta\sum_{\ell\in\Lambda}\cos\frac{\psi_\ell}{\sqrt{2}} \cos\frac{\chi_\ell}{\sqrt{2}}}
\midaLambdaVof{\psi} \midaLambdaVof{\chi}\\
=
-2
\sum_{j\in\Lambda}\sum_{n:\Lambda\to\{0,1,2,\ldots\}} \int \int
F_{j,n}(\psi)F_{j,n}(\chi)
\midaLambdaVof{\psi} \midaLambdaVof{\chi}\\
=
-2
\sum_{j\in\Lambda}\sum_{n:\Lambda\to\{0,1,2,\ldots\}}
\left(\int F_{j,n}(\psi)\midaLambdaVof{\psi}\right)^2 \le 0
,
\end{multline*}
where
\begin{equation*}
F_{j,n}(\psi) :=
\frac{1}{Z_{\beta, \Lambda, \eta, v}^{\UP}}
e^{\frac{1}{\sqrt{2}}\psiOf{f}}
\sin\frac{\psi_j}{\sqrt{2}}
\prod_{\ell\in\Lambda}\frac{\left(\sqrt{2\eta}\cos\frac{\psi_\ell}{\sqrt{2}}\right)^{n_\ell}}{\sqrt{n_\ell!}}. \qedhere
\end{equation*}

\section{Villain Model}

In this section we prove Theorem~\ref{thm:Vil}.
\subsection{Duality with the Integer-Valued Discrete Gaussian Free Field Model}
%TODO
The starting point for our discussion is the well-known duality relation between plane rotator models and integer-valued height functions (see \cite[Appendix A]{fs} or \cite[Section 2.9]{ri}). For completeness we detail below the duality between the Villain model with zero boundary conditions and the integer-valued Gaussian free field with free boundary conditions, as used in the proof of Theorem~\ref{thm:Vil}.

For $L>1$, denote by $\Lambda_L^*$ the planar dual graph to $\Lambda_L^{\zero}$ (disregarding the outer face), which is defined to be the graph $\big(V(\Lambda_L^*), E(\Lambda_L^*)\big)$,
\begin{align*}
V(\Lambda_L^*) &:= \{ (i \pm 1/2, j \pm 1/2): (i,j) \in \Lambda_L^{\zero}\setminus\{\zv\}\},\\
E(\Lambda_L^*) &:= \{ \{(a,b), (c,d)\}: \{ \abs{c-a}, \abs{d-b}\} = \{0,1\} \}.
\end{align*}
Note that $\Lambda_L^*$ is isomorphic to $\Lambda_{L+1}^{\free}$. Note also the planar duality between the edges of $\Lambda_L^*$ and $\Lambda_{L+1}^{\free}$. For instance, $e = \{(a,b),(a+1,b)\}\in E(\Lambda_L^*)$ is dual to $e^* = \{(a+\frac{1}{2}, b-\frac{1}{2}),(a+\frac{1}{2}, b+\frac{1}{2})\}$ (which it crosses in the plane) and similarly for the other edges.

%For notational simplicity, we assume in the rest of the section that $(0,0)$ is the upper-leftmost point of $\Lambda_L^*$.
\begin{prop} \label{prop:VilIVRel}
Let $L>1$ and let $v\in\Lambda_L^*$. Then for each $x=(x_0, x_1)\in\Lambda_L^{\zero}\setminus\{\zv\}$ it holds that
$$
\mathbb{E}_{\beta, \Lambda_L^{\zero}}^{\Vil}\big[ \cos\theta_x \big] =
\mathbb{E}_{\beta^{-1}, \Lambda_L^*,v}^{\IV}\left[
\exp\left( -\beta^{-1}\sum_{t = x_0}^{L}\left(m_{(t+1/2,x_1-1/2)} - m_{(t+1/2,x_1+1/2)} + \frac{1}{2}\right) \right)
\right].
$$
%D - disorder operator.
\end{prop}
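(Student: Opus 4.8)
The plan is to run the classical duality transformation for the Villain model in three steps: a Poisson‑summation rewriting of the edge weights, integrating out the angles to produce integer currents with prescribed divergence, and identifying divergence‑free integer currents with discrete curls of integer height functions on the dual graph $\Lambda_L^*$. \emph{Step 1 (Fourier expansion of the Villain weight).} For $u\in\mathbb{R}$, Poisson summation gives that the $2\pi$‑periodic function $W_\beta(u):=\sum_{m\in\mathbb{Z}}e^{-\frac{\beta}{2}(u+2\pi m)^2}$ equals $(2\pi\beta)^{-1/2}\sum_{n\in\mathbb{Z}}e^{-n^2/(2\beta)}e^{inu}$. Fixing an arbitrary orientation of each edge of $\Lambda_L^{\zero}$ (the orientation is bookkeeping only, since both $n_e^2$ and the summation over $n_e\in\mathbb{Z}$ are orientation‑insensitive) and expanding every factor,
\[
\prod_{j\thicksim\ell}W_\beta(\theta_j-\theta_\ell)=(2\pi\beta)^{-|E|/2}\sum_{n\in\mathbb{Z}^{E(\Lambda_L^{\zero})}}\Big(\prod_{e}e^{-n_e^2/(2\beta)}\Big)\exp\Big(i\sum_{j}\theta_j\,(\operatorname{div}n)(j)\Big),
\]
where $(\operatorname{div}n)(j)$ is the net flux of $n$ out of $j$ and always $\sum_j(\operatorname{div}n)(j)=0$.

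\emph{Step 2 (integrating out the angles).} Insert this expansion, together with $\cos\theta_x=\tfrac12(e^{i\theta_x}+e^{-i\theta_x})$, into the definition of $\mu^{\Vil}_{\beta,\Lambda_L^{\zero}}$. Since $\theta_{\zv}=0$ and $\int_{-\pi}^{\pi}e^{ik\theta}\,d\theta=2\pi\,\mathds{1}_{k=0}$ for integer $k$, integrating over $(\theta_j)_{j\neq\zv}$ forces $(\operatorname{div}n)(j)+\delta_{jx}=0$ for every $j\neq\zv$, which by neutrality of the divergence (and $x\neq\zv$) is the same as $\operatorname{div}n=\delta_{\zv}-\delta_x$; the partition function localizes in the same way onto divergence‑free $n$, and the prefactors $(2\pi\beta)^{-|E|/2}$ and $(2\pi)^{\#\{j\neq\zv\}}$ cancel between numerator and denominator. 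The $e^{-i\theta_x}$ term gives the same sum after $n\mapsto-n$, so
\[
\mathbb{E}^{\Vil}_{\beta,\Lambda_L^{\zero}}[\cos\theta_x]=\frac{\displaystyle\sum_{n:\,\operatorname{div}n=\delta_{\zv}-\delta_x}\ \prod_{e}e^{-n_e^2/(2\beta)}}{\displaystyle\sum_{n:\,\operatorname{div}n=0}\ \prod_{e}e^{-n_e^2/(2\beta)}}.
\]

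\emph{Step 3 (planar duality) and conclusion.} The graph $\Lambda_L^{\zero}$ is planar with $\Lambda_L^*$ as its dual, so the bijection $e\mapsto e^*$ between $E(\Lambda_L^{\zero})$ and $E(\Lambda_L^*)$ described before the statement, together with the standard fact that a divergence‑free integer current $n$ is uniquely of the form $n_e=m_{f_+(e)}-m_{f_-(e)}$ for an integer function $m$ on the faces (i.e.\ on $V(\Lambda_L^*)$) normalized by $m_v=0$ — a bijection, as a dimension count $|E|-|V|+1=(L+1)^2-1=|V(\Lambda_L^*)|-1$ confirms — turns the denominator into $Z^{\IV}_{\beta^{-1},\Lambda_L^*,v}$, because $\sum_e n_e^2=\sum_{\{f,f'\}\in E(\Lambda_L^*)}(m_f-m_{f'})^2$. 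For the numerator fix the reference current $n^{(0)}$ equal to $1$ along the edges of the path $P$ in $\Lambda_L^{\zero}$ running from $x=(x_0,x_1)$ horizontally to the boundary and then to $\zv$, so that $\operatorname{div}n^{(0)}=\delta_{\zv}-\delta_x$. Writing $n=n^{(0)}+(\text{curl of }m)$ with $m_v=0$, one has $\sum_e n_e^2=\sum_{e^*}(m_{f_+}-m_{f_-})^2+2\sum_{e\in P}(m_{f_+(e)}-m_{f_-(e)})+|P|$; dividing by the denominator and spreading the constant $|P|/2$ over the $|P|$ edges of $P$ gives
\[
\mathbb{E}^{\Vil}_{\beta,\Lambda_L^{\zero}}[\cos\theta_x]=\mathbb{E}^{\IV}_{\beta^{-1},\Lambda_L^*,v}\Big[\exp\Big(-\beta^{-1}\sum_{e\in P}\big(m_{f_+(e)}-m_{f_-(e)}+\tfrac12\big)\Big)\Big].
\]
Reading off the faces on the two sides of the edges of $P$ as those at $(t+1/2,x_1-1/2)$ and $(t+1/2,x_1+1/2)$ yields exactly the claimed identity; the right‑hand side is independent of the arbitrary choices of $v$ and of the orientation of $P$, since the integrand is invariant under $m\mapsto m+c$ and, using the $m\mapsto-m$ symmetry of $\mu^{\IV}$, unchanged under reversing $P$.

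\emph{Main obstacle.} The delicate part is Step 3: to make the planar picture rigorous one must fix an embedding of $\Lambda_L^{\zero}$ near the wired vertex $\zv$ (whose incident edges carry multiplicities, producing bigon faces), verify that the resulting dual is indeed $\Lambda_L^*\cong\Lambda_{L+1}^{\free}$ with the stated edge duality, and — most of all — track precisely which dual edges are crossed by $P$, which is what pins down the summation range $t=x_0,\dots,L$ and the $\tfrac12$‑shift in the exponent. Steps 1 and 2 are routine manipulations.
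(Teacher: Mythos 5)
Your proposal is correct and takes essentially the same route as the paper: Poisson summation of the Villain edge weights, angle integration to impose a divergence constraint on the integer currents, and planar duality to pass to integer height functions on $\Lambda_L^*$. The only presentational difference is that the paper absorbs the observable $e^{i(\theta_x-\theta_{\zv})}$ into the edge weights via a path indicator $\chi$ before Fourier-transforming (so the constraint becomes $\delta n=\mathbf{0}$ and the source appears as the shift $(n+\chi)^2$ in the Gaussian exponent), whereas you keep the character as an explicit source, derive the constraint that $n$ has unit divergence at $\zv$ and $x$, and then translate by a reference current $n^{(0)}$ supported on the same path — this is exactly the same computation up to a change of summation variable, with the paper's $\chi$ playing the role of your $n^{(0)}$.
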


\begin{proof}

Fix an orientation $\vec{E}=\vec{E}(\Lambda_L^{\zero})$ for the edges $E(\Lambda_L^{\zero})$, in which all horizontal edges of the subgraph $\Lambda_L^{\free}$ are oriented to the left i.e., $((a,b),(c,d))\in\vec{E}$ if $a = c+1, b=d$, $(a,b),(c,d)\in\Lambda_L^{\free}$, edges $\{(L-1, a), z\}$, $(L-1, a)\in\Lambda_L^{\free}$, are oriented from $\zv$ to $(L-1, a)$, and all other edges are oriented arbitrarily. We also fix the orientation of the dual edges accordingly, for instance, the dual $e^*$ to $e = \{(a,b),(a+1,b)\}\in E(\Lambda_L^*)$, which is oriented to the left, is oriented up.

Let $x = (x_0, x_1)\in\Lambda_L^{\zero}$, and define $\chi=\chi_x:\vec{E}(\Lambda_L^{\zero}) \to \mathbb{R}$ by
$$
\chi_{(k,\ell)} :=
\threepartdef
{1}{\ell = (j, x_1), \ k=(j+1,x_1), \  x_0 \leq j, }
{1}{\ell = (L-1, x_1), \ k=\zv,}
{0}{\text{otherwise.}}
$$
Since $\mu_{\beta, \Lambda_L^{\zero}}^{\Vil}$ is invariant under the mapping $\theta\mapsto-\theta$, it holds that
\begin{multline} \label{eqVil1}
\mathbb{E}_{\beta, \Lambda_L^{\zero}}^{\Vil}\big[ \cos\theta_x \big] =
\mathbb{E}_{\beta, \Lambda_L^{\zero}}^{\Vil}\Big[ e^{i(\theta_x - \theta_{\zv})} \Big] \\
= \frac{1}{Z_{\beta, \Lambda_L^{\zero}}^{\Vil}}\int
\prod_{(k,\ell)\in\vec{E}} \sum_{m \in \mathbb{Z}} e^{-\frac{\beta}{2}(\theta_k - \theta_\ell + 2\pi m)^2 - i\chi_{(k,\ell)}(\theta_k - \theta_\ell)} \delta_{0}(\theta_{\zv})
\prod_{j \in \Lambda_L^{\zero}\setminus\{\zv\}} \mathds{1}_{[-\pi, \pi)}\big(\theta_j\big) d\theta_j.
\end{multline}
Substituting in \eqref{eqVil1}
$$
f_\chi(\theta) := \sum_{m\in\mathbb{Z}} e^{-\frac{\beta}{2}(\theta+2\pi m)^2 - i\chi\theta} = \frac{1}{\sqrt{2\pi\beta}} \sum_{n\in\mathbb{Z}} e^{-\frac{1}{2\beta} (n+\chi)^2}e^{i n \theta},
$$
which holds by application of Fourier transform,
$$
\hat{f}_\chi(n) = \frac{1}{2\pi} \sum_{m\in\mathbb{Z}} \int_{-\pi}^\pi e^{-\frac{\beta}{2}(\theta+2\pi m)^2 - i\chi\theta -in\theta} d\theta =
\frac{1}{2\pi} \int_{-\infty}^\infty e^{-\frac{\beta}{2}\theta^2 - i\chi\theta -in\theta} d\theta = \frac{1}{\sqrt{2\pi\beta}} e^{-\frac{1}{2\beta} (n+\chi)^2}.
$$
yields
\begin{equation*}
\begin{split}
&\mathbb{E}_{\beta, \Lambda_L^{\zero}}^{\Vil}\big[ \cos\theta_x\big] =
\frac{1}{\sqrt{2\pi\beta}Z_{\beta, \Lambda_L^{\zero}}^{\Vil}}
\int \prod_{(k,\ell)\in\vec{E}}
\sum_{n \in \mathbb{Z}}
e^{-\frac{1}{2\beta} (n+\chi_{(k,\ell)})^2}
e^{i n (\theta_k-\theta_\ell)}
\delta_0(\theta_{\zv})\prod_{j \in \Lambda_L^{\zero}\setminus\{\zv\}} \mathds{1}_{[-\pi, \pi)}\big(\theta_j\big) d\theta_j \\
&= \frac{1}{\sqrt{2\pi\beta}Z_{\beta, \Lambda_L^{\zero}}^{\Vil}}
\sum_{n:\vec{E}\to\mathbb{Z}}
\prod_{(k,\ell)\in\vec{E}}
e^{-\frac{1}{2\beta} (n_{(k,\ell)}+\chi_{(k,\ell)})^2}
\int e^{i \sum_{(k,\ell)\in\vec{E}}n_{(k,\ell)} (\theta_k-\theta_\ell)} \delta_0(\theta_{\zv})\prod_{j \in \Lambda_L^{\zero}\setminus\{\zv\}} \mathds{1}_{[-\pi, \pi)}\big(\theta_j\big) d\theta_j \\
&= \frac{1}{\sqrt{2\pi\beta}Z_{\beta, \Lambda_L^{\zero}}^{\Vil}}
\sum_{n:\vec{E}\to\mathbb{Z}}
\prod_{(k,\ell)\in\vec{E}}
e^{-\frac{1}{2\beta} (n_{(k,\ell)}+\chi_{(k,\ell)})^2}
\prod_{j \in \Lambda_L^{\zero}\setminus\{\zv\}} \int e^{i (\delta n)_j \theta_j} \mathds{1}_{[-\pi, \pi)}\big(\theta_j\big)d\theta_j\\
&= \frac{1}{\sqrt{2\pi\beta}Z_{\beta, \Lambda_L^{\zero}}^{\Vil}}
\sum_{n:\vec{E}\to\mathbb{Z}}
\prod_{(k,\ell)\in\vec{E}}
e^{-\frac{1}{2\beta} (n_{(k,\ell)}+\chi_{(k,\ell)})^2}
\mathds{1}_{\delta n = \mathbf{0}},
\end{split}
\end{equation*}
where we set $n_{(k,\ell)} := -n_{(\ell,k)}$ whenever $(\ell,k)\in\vec{E}$, and $\delta n:\Lambda_L^{\zero} \to \mathbb{Z}$ is defined by
$$
(\delta n)_k := \sum_{(k,\ell)\in\vec{E}} n_{(k,j)}.
$$

We claim that there is a 1-1 correspondence between $\{n:\vec{E}\to \mathbb{Z}, \delta n = 0\}$, and integer-valued variables $\{ m_p \}_{p\in\Lambda_L^*}$, where $m_{v}$ is set to 0.
It is defined by $n_{(k,\ell)} = m_{k'} - m_{\ell'}$, where $(k',\ell')$ is the corresponding oriented edge.
Indeed, given $\{ m_p \}$, we set $n_{k,\ell} = m_{k'} - m_{\ell'}$ for every $(k,\ell)$, and it is straightforward to see that $(\delta n) =\mathbf{0}$.
Conversely, given $n$ with $(\delta n) =\mathbf{0}$, we set $m_p = \sum_{i=0}^t n_{(p_i, p_{i+1})}$, where $v=p_0,p_1,\ldots, p_t,p_{t+1}=p$, is a path from $v$ to $p$. Since $\delta n = 0$ it is straightforward to see that $m_p$ is uniquely defined.

Therefore:
\begin{align*}
&\mathbb{E}_{\beta, \Lambda_L}^{\Vil}\big[ \cos \theta_x \big]
=
\frac{1}{\sqrt{2\pi\beta}Z_{\beta, \Lambda_L^{\zero}}^{\Vil}}
\sum_{\substack{m:\Lambda_L^*\rightarrow\mathbb{Z} \\ m(v)=0}}
\prod_{(k,\ell)\in\vec{E}}
e^{-\frac{1}{2\beta} (m_{k'} - m_{\ell'}+\chi_{(k,\ell)})^2} \\
&=
\frac{1}{\sqrt{2\pi\beta}Z_{\beta, \Lambda_L^{\zero}}^{\Vil}}
\sum_{\substack{m:\Lambda_L^*\rightarrow\mathbb{Z} \\ m(v)=0}}
e^{-\frac{1}{\beta}\sum_{t = x_0}^{L}\left(m_{(t+1/2,x_1-1/2)} - m_{(t+1/2,x_1+1/2)} + \frac{1}{2}\right)}
e^{-\frac{1}{2\beta} \sum_{(k,\ell)\in\vec{E}(\Lambda_L^*)}(m_{k'} - m_{\ell'})^2},
\end{align*}
We finish the proof by observing that a similar calculation yields
\begin{equation*}
Z_{\beta, \Lambda_L^{\zero}}^{\Vil} =
\frac{1}{\sqrt{2\pi\beta}Z_{\beta, \Lambda_L^{\zero}}^{\Vil}}
\sum_{\substack{m:\Lambda_L^*\rightarrow\mathbb{Z} \\ m(v)=0}}
e^{-\frac{1}{2\beta} \sum_{(k',\ell')\in\vec{E}(\Lambda_L^*)}(m_{k'} - m_{\ell'})^2}. \qedhere
\end{equation*}

\end{proof}

\subsection{Proof of Theorem~\ref{thm:Vil}}
To prove Theorem~\ref{thm:Vil} we first prove the following theorem.
\begin{thm} \label{thm54}
For any $\Gamma > 0$, $\eta\in\mathbb{R}$, $0 \leq \theta < 1/16$, there exists $\beta_0 = \beta_0(\Gamma,\eta, \theta) > 0$ such that the following holds.
Let $L>1$, let $v\in\Lambda_L^{\free}$, let $(\lambda_j)$, $j\in\Lambda$, be a collection of $\SG$ functions, let $\beta < \beta_0$ and let $y=(y_0, y_1)\in\Lambda_L^{\free}$ with $y_1<L-1$.
Then there exists a positive absolute constant $c>0$ such that
\begin{equation}\label{eq:thm54}
\mathbb{E}_{\beta, \Lambda_L^{\free}, \lambda_\Lambda, v}
\left[
\exp\Big[ -2\pi\beta\sum_{t= y_0}^{L}\Big(\phi_{(t,y_1)} - \phi_{(t,y_1+1)}+\frac{2\pi}{2}\Big) \Big]
\right] \geq (L-y_0+1)^{-c\beta}.
\end{equation}
\end{thm}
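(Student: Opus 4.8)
The plan is to derive \eqref{eq:thm54} from Theorem~\ref{thm11} together with the $2\pi$-periodicity of the weights $\lambda_j$. Set $g:=\sum_{t=y_0}^{L}\big(\delta_{(t,y_1)}-\delta_{(t,y_1+1)}\big)$, so $\sum_j g_j=0$ and $\langle\phi,g\rangle=\sum_{t=y_0}^{L}(\phi_{(t,y_1)}-\phi_{(t,y_1+1)})$; the constant parts of the exponent factor out and the left-hand side of \eqref{eq:thm54} equals $e^{-2\pi^2\beta(L-y_0+1)}\,\mathbb{E}_{\beta,\Lambda_L^{\free},\lambda_\Lambda,v}\big[e^{-2\pi\beta\langle\phi,g\rangle}\big]$. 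Thus the task is to show $\mathbb{E}_{\beta,\Lambda_L^{\free},\lambda_\Lambda,v}\big[e^{-2\pi\beta\langle\phi,g\rangle}\big]\ge \exp\big[2\pi^2\beta(L-y_0+1)-c\beta\log(L-y_0+1)\big]$. Note that $g$ is the ``dipole sheet'' supported on the two rows bordering the segment from $(y_0,y_1)$ to the right edge of $\Lambda_L^{\free}$.

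The key structural input is an exact shift identity. For any $n:\Lambda\to\mathbb{Z}$ with $n_v=0$, the substitution $\phi\mapsto\phi+2\pi n$ preserves $\prod_j d\phi_j$, the indicator $\mathds{1}_{[-\pi,\pi)}(\phi_v)$, and each factor $\lambda_j(\phi_j)$ (by $2\pi$-periodicity), and changes only the Gaussian weight; expanding the square one obtains, for every functional $F$,
\[ \mathbb{E}_{\beta,\Lambda,\lambda_\Lambda,v}[F(\phi)] = e^{-2\pi^2\beta\langle n,-\Delta_\Lambda n\rangle}\,\mathbb{E}_{\beta,\Lambda,\lambda_\Lambda,v}\big[F(\phi+2\pi n)\,e^{-2\pi\beta\langle\phi,-\Delta_\Lambda n\rangle}\big]. \]
Applying this with $F(\phi)=e^{-2\pi\beta\langle\phi,g\rangle}$ gives $\mathbb{E}[e^{-2\pi\beta\langle\phi,g\rangle}] = e^{-2\pi^2\beta\langle n,-\Delta_\Lambda n\rangle-4\pi^2\beta\langle n,g\rangle}\,\mathbb{E}\big[e^{-2\pi\beta\langle\phi,\,g-\Delta_\Lambda n\rangle}\big]$. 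Since $g-\Delta_\Lambda n$ is again neutral, Theorem~\ref{thm11} (this is the one place the sub-Gaussianity of the $\lambda_j$ and the smallness of $\beta$ are used) bounds the last expectation below by $\exp\big[\tfrac{2\pi^2\beta}{1+\varepsilon}\langle g-\Delta_\Lambda n,-\Delta_\Lambda^{-1}(g-\Delta_\Lambda n)\rangle\big]\ge 1$. Hence it remains to exhibit an integer field $n$ with $n_v=0$ for which the explicit prefactor is large, i.e.
\[ -\langle n,-\Delta_\Lambda n\rangle - 2\langle n,g\rangle \;\ge\; (L-y_0+1) - O\!\big(\log(L-y_0+1)\big). \]

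To produce such an $n$, I would take $n=-\mathds{1}_{R}$ (adjusted by an integer constant so $n_v=0$), where $R$ is the region of $\Lambda_L^{\free}$ cut off on the $y_1$-side of the segment and completed all the way to $\partial\Lambda_L^{\free}$. Then $-\Delta_\Lambda n$ is a dipole density carried by the ``free'' part of $\partial R$: it agrees with $g$ along the segment, while the remaining part — the short ``closing curve'' $\gamma$ joining the interior endpoint $(y_0,y_1)$ of the segment back to $\partial\Lambda_L^{\free}$, the rest of $\partial R$ lying on $\partial\Lambda_L^{\free}$ and costing nothing — contributes a residual dipole density. One computes $\langle n,g\rangle=-(L-y_0+1)$ and $\langle n,-\Delta_\Lambda n\rangle=(L-y_0+1)+(\text{length of }\gamma)$, so the prefactor equals $\exp\big[2\pi^2\beta\big((L-y_0+1)-(\text{length of }\gamma)\big)\big]$, and combining the three displays yields the left-hand side of \eqref{eq:thm54} $\ge e^{-2\pi^2\beta\,(\text{length of }\gamma)}$. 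The hypothesis $y_1<L-1$ (and the fact that the segment already abuts $\partial\Lambda_L^{\free}$) is what makes a short closing curve available; and to push its ``length'' down from the naive $O(y_1)$ to $O(\log(L-y_0+1))$ — which is exactly what turns the right-hand side into $(L-y_0+1)^{-c\beta}$ with an absolute $c$ — one tapers $R$ along dyadic scales, writing $n$ as a sum of indicators of nested regions and controlling the resulting Dirichlet form and its cross terms, in direct analogy with the spin-wave construction of Proposition~\ref{prop:a_s_construction}. This dyadic construction, and the bookkeeping that the tapered $R$ still has $\langle n,-\Delta_\Lambda n\rangle=(L-y_0+1)+O(\log(L-y_0+1))$ while $-\Delta_\Lambda n$ matches $g$ up to a density of logarithmic size, is the step I expect to require the most care.
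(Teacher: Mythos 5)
There is a genuine gap at the dyadic tapering step, and it cannot be filled: the estimate you need is false for \emph{every} integer-valued $n$ once the endpoint $(y_0,y_1)$ of the dipole sheet lies deep inside $\Lambda_L^{\free}$. Put $u:=n+(-\Delta_\Lambda)^{-1}g$. Completing the square gives $-\langle n,-\Delta_\Lambda n\rangle - 2\langle n,g\rangle = \langle g,-\Delta_\Lambda^{-1}g\rangle - \langle u,-\Delta_\Lambda u\rangle$, so by Claim~\ref{claim:GreenId1} your task is precisely to make $\langle u,-\Delta_\Lambda u\rangle = O\bigl(\log(L-y_0+1)\bigr)$. Now set $r:=\dist\bigl((y_0,y_1),\partial\Lambda_L^{\free}\bigr)$. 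There are $\asymp r$ edge-disjoint square loops in $\Lambda_L^{\free}$ encircling $(y_0,y_1)$, each crossing the segment in exactly one edge. The gradient of $(-\Delta_\Lambda)^{-1}g$ across the segment edges is within $1/2$ of a fixed nonzero integer, while off the segment it is small in $\ell^2$ — this is precisely what the estimate $\sum_{j\thicksim\ell}[\sigma^{\Vil}_j-\sigma^{\Vil}_\ell]_{2\pi}^2\le\DSix\ln(L-y_0+1)$, established in the proof of Claim~\ref{newlemma55eqVil}, encodes. Since the gradients of $n$ are integers that sum to zero around each loop, each loop must contain an edge $\{j,\ell\}$ with $|u_j-u_\ell|$ bounded below by an absolute constant (either the segment edge, if the integer gradient of $n$ there misses the one that $\nabla[(-\Delta_\Lambda)^{-1}g]$ is close to, or else some off-segment edge where the integer gradient of $n$ is forced to be nonzero while $\nabla[(-\Delta_\Lambda)^{-1}g]$ is small). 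Hence $\langle u,-\Delta_\Lambda u\rangle\gtrsim r$. For $(y_0,y_1)\approx(L/2,L/2)$ this is $\gtrsim L-y_0$, vastly exceeding $\log(L-y_0+1)$. The ``length of $\gamma$'' is a hard topological cost: $(-\Delta_\Lambda)^{-1}g$ has a vortex-type singularity at the endpoint of the sheet, and no single-valued integer field can shadow its circulating gradient there. No dyadic nesting of integer indicators can remove this.

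This is exactly where the paper's mechanism differs from yours. Rather than shifting $\phi$ by $2\pi n$ for a global integer field $n$, the paper shifts by the \emph{real} field $\sigma^{\Vil}$ — which cancels the linear term exactly, with the Gaussian prefactor controlled by Claim~\ref{claim:GreenId1} — and then applies the expansion of Theorem~\ref{thm234}. The $2\pi$-periodicity is exploited \emph{inside} that expansion: each factor is $\cos(\phiOf{\bar\varrho}+\sigmaOfVil{\varrho})$ with $\varrho$ integer-valued, so only $\sigmaOfVil{\varrho}$ modulo $2\pi$ matters, and Claim~\ref{newlemma55eqVil} bounds $\sum_{\varrho}|\zed|\,[\sigmaOfVil{\varrho}]_{2\pi}^2$ by $\frac{\beta}{D}\ln(L-y_0+1)$. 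Because the mod-$2\pi$ reduction is applied locally to each neutral, localized density, it never has to confront the global vortex. Theorem~\ref{thm11} alone is therefore not strong enough to prove Theorem~\ref{thm54}; the finer structure of Theorem~\ref{thm234} is essential.
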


Theorem~\ref{thm:Vil} immediately follows from Theorem~\ref{thm54}. Indeed, let $L > 1$ and $y=(y_0,y_1), v\in\Lambda_L^{\free}$ with $y<L-1$. As in the proof of Theorem~\ref{thm:IV} (see~\eqref{eq:IV_convergence}), for any $\beta>0$,
\begin{equation*}
\mathbb{E}_{\beta/(2\pi)^2, \Lambda_L^{\free},F_N, v}\left[e^{-\frac{\beta}{2\pi}\sum_{t=y_0}^{L}\phi_{(t,y_1)} - \phi_{(t,y_1+1)}}
\right] \xrightarrow{N\to\infty}
\mathbb{E}_{\beta, \Lambda_L^{\free}, v}^{\IV}\left[
e^{-\beta\sum_{t = y_0}^{L}m_{(t,y_1)} - m_{(t,y_1+1)}}\right].
\end{equation*}
Therefore, applying Theorem~\ref{thm54} with $\beta$ replaced by $\frac{\beta}{(2\pi)^2}$ we conclude that
\begin{equation*}
\mathbb{E}_{\beta, \Lambda_L^{\free}, v}^{\IV}\left[
\exp\left( -\beta\sum_{t = y_0}^{L}\left(m_{(t,y_1)} - m_{(t,y_1+1)} + \frac{1}{2}\right) \right)\right]\ge (L-y_0+1)^{-\frac{c\beta}{(2\pi)^2}},\quad 0 < \beta<\beta_0 \cdot (2\pi)^2.
\end{equation*}
Comparing the last inequality with Proposition~\eqref{prop:VilIVRel}, and recalling that $\Lambda_L^*$ is isomorphic to $\Lambda_{L+1}^{\free}$, implies Theorem~\ref{thm:Vil} for vertices $x = (x_0, x_1)\in\Lambda_L^{\zero}$ with $\dist(x,z) = L-x_0$. The statement of Theorem~\ref{thm:Vil} for general $x\in\Lambda_L^{\zero}$ follows from the lattice symmetries.

Therefore we are left with proving Theorem~\ref{thm54}. To this end, fix $\Gamma > 0$, $\eta\in\mathbb{R}$, $0 \leq \theta < 1/16$, $L>1$, $v\in\Lambda_L^{\free}$, a collection of $\SG$ functions $(\lambda_j)$, ${j\in\Lambda}$, and $y=(y_0, y_1)\in\Lambda_L^{\free}$ with $y_1<L-1$. For notational simplicity, denote by $\Delta_L^{\f}$ the graph Laplacian of $\Lambda_L^{\free}$.

The proof is very similar to the proof of Theorem~\ref{thm11} (see Section~\ref{sec:proof11}). To obtain an equality similar to \eqref{eq:varChangeInE}, we define $f_y:\Lambda_L^{\free}\to\mathbb{R}$ by
\begin{equation}\label{eq:f_y_def}
f_y(x_0,x_1) = \threepartdef
{1}{y_0 \leq x_0 < L,\ x_1=y_1,}
{-1}{y_0 \leq x_0 < L,\ x_1=y_1+1,}
{0}{\text{otherwise}.}
\end{equation}
Note that the observable in the left-hand side of \eqref{eq:thm54} is now of the form
$$
\exp\Big[ -2\pi\beta\langle \phi, f_y \rangle -\frac{(2\pi)^2 \beta(L-y_0)}{2} \Big].
$$
As in \eqref{eq:sigmaDef}, define $\sigma^{\Vil}: \Lambda_L^{\free} \rightarrow \mathbb{R}$ to be the solution of
$$
\twopartdef
{\Delta_L^{\f} \sigma^{\Vil} = 2\pi f_y,}{}
{\sigma_v^{\Vil} = 0.}{}
$$
Now, a change of variables $\phi_j \to \phi_j + \sigma_j^{\Vil}$ yields
\begin{equation}\label{eq:proofVil0}
\mathbb{E}_{\beta, \Lambda_L^{\free}, \lambda_\Lambda, v}\big[ e^{-2\pi\beta\langle \phi, f_y \rangle -\frac{(2\pi)^2 \beta(L-y_0)}{2}}\big] =
Z_{\beta, \Lambda_L^{\free}, \lambda_\Lambda, v}^{-1}
e^{\frac{(2\pi)^2\beta}{2} \langle f_y, (-\Delta_L^{\f})^{-1} f_y\rangle- \frac{(2\pi)^2\beta}{2}(L-y_0)}Z_{\beta, \Lambda_L^{\free}, \lambda_\Lambda, v}(\sigma^{\Vil}),
\end{equation}
where $Z_{\beta, \Lambda_L^{\free}, \lambda_\Lambda, v}(\sigma)$ is defined in \eqref{eq:ZOfSigmaDef}.
Let $\beta_0$ be the positive number given by Theorem~\ref{thm234}. The equality in that theorem states that
\begin{equation}\label{eq:proofVil1}
Z_{\beta, \Lambda_L^{\free}, \lambda_\Lambda, v}(\sigma^{\Vil}) = \sum_{\ens \in \mathscr{F}} c_\ens Z_\ens (\sigma^{\Vil}), \quad \beta < \beta_0,
\end{equation}
where $Z_\ens (\sigma)$ is defined in \eqref{eq:z_ens_def}.

Denote by $[a]_{2\pi}$ the unique $b\in[-\pi, \pi)$, such that $a-b$ is an integer multiple of $2\pi$. Same calculation to the calculation in Section~\ref{sec:proof11}, with $y$ is replaced with $[y]_{2\pi}$ in Claim~\ref{lemma:doubleCos}, yields
\begin{equation}\label{eq:proofVil2}
\frac{Z_\ens(\sigma^{\Vil})}{Z_\ens (0)}
\geq \exp \big(-\DFr\sum_{\varrho \in \ens} \abs{\zed}[\sigmaOfVil{\varrho}]_{2\pi}^2 \big).
\end{equation}

We require the following replacement for Claim~\ref{newlemma55eq} tailored to the structure of $\sigma^{\Vil}$.
\begin{claim} \label{newlemma55eqVil}
Let $D>0$. There exists $0 < \beta_1 \leq \beta_0$ such that
\begin{equation}
\sum_{\varrho \in \ens} \abs{\zed} \cdot [\langle \sigma^{\Vil}, \varrho \rangle]_{2\pi}^2 \leq
\frac{\beta}{D} \cdot \ln(L-y_0+1) ,\quad\beta<\beta_1, \quad \ens\in\mathscr{F}.
\end{equation}
\end{claim}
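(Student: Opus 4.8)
The plan is to mimic the proof of Claim~\ref{newlemma55eq}, the new ingredient being a telescoping identity for $\sigmaOfVil{\varrho}$ that exploits that $f_y$ is integer valued. Since $\sigma^{\Vil}=(\Delta_L^{\f})^{-1}(2\pi f_y)$ up to an irrelevant additive constant, and $\Delta_L^{\f}$ is self-adjoint on the zero-sum subspace,
\[
\sigmaOfVil{\varrho}=2\pi\bigl\langle f_y,(\Delta_L^{\f})^{-1}\varrho\bigr\rangle=2\pi\sum_{t=y_0}^{L-1}\bigl(\psi(t,y_1)-\psi(t,y_1+1)\bigr),\qquad \psi:=(\Delta_L^{\f})^{-1}\varrho ,
\]
which is discrete harmonic off $\supp\varrho$. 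Each summand is a gradient of $\psi$ along a vertical edge, so by the discrete Cauchy--Riemann relations it equals a nearest-neighbour increment of the discrete harmonic conjugate $\psi^{\sharp}$ of $\psi$ along a fixed row of the dual graph, and the sum telescopes to $2\pi\bigl(\psi^{\sharp}(q_2)-\psi^{\sharp}(q_1)\bigr)$, where $q_1,q_2$ are the dual vertices $(y_0-\tfrac12,y_1+\tfrac12)$ and $(L-\tfrac12,y_1+\tfrac12)$ sitting next to the ends $p_1:=(y_0,y_1)$ and $p_2:=(L-1,y_1)$ of the dipole line carrying $f_y$. Because $\varrho$ is $\mathbb Z$-valued the monodromy of $\psi^{\sharp}$ around every point of $\supp\varrho$ is an integer, so $[\,\cdot\,]_{2\pi}$ of the telescoped expression is unchanged if $\psi^{\sharp}$ is replaced by the branch that is single valued on the unbounded component $U$ of $\Lambda_L^{\free}\setminus D(\varrho)$ (single valued there because $\varrho$ is neutral, so the total monodromy around $\supp\varrho$ vanishes); when $D(\varrho)=\Lambda_L^{\free}$ or $D(\varrho)$ meets $\{q_1,q_2\}$ we fall back on the trivial bound $[\sigmaOfVil{\varrho}]_{2\pi}^2\le\pi^2$.

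Granting this, the key estimate is that for every neutral $\varrho\in\ens$ with $q_1,q_2\notin D(\varrho)$,
\[
[\sigmaOfVil{\varrho}]_{2\pi}^2\le C\,\norm{\varrho}_2^4\,(d(\varrho)+1)^2\,\min\Bigl\{1,\ \delta_\varrho\Bigr\},
\]
where $\delta_\varrho:=(1+\dist(D(\varrho),q_1))^{-2}+(1+\dist(D(\varrho),q_2))^{-2}$ in general and $\delta_\varrho\le C(L-y_0)^2(1+\dist(D(\varrho),\{q_1,q_2\}))^{-4}$ once $\dist(D(\varrho),\{q_1,q_2\})\ge 4(L-y_0)$. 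This follows from standard lattice Green function bounds: for a neutral density $\varrho$ one has $\abs{\psi(w)-c}\lesssim\norm{\varrho}_1\,d(\varrho)\,\dist(w,\supp\varrho)^{-1}$ and $\abs{\nabla_e\psi}\lesssim\norm{\varrho}_1\,d(\varrho)\,\dist(e,\supp\varrho)^{-2}$, hence the same bounds for $\psi^{\sharp}$ on $U$, so $\abs{\psi^{\sharp}(q_i)-\psi^{\sharp}_\infty}\lesssim\norm{\varrho}_1\,d(\varrho)\,\dist(q_i,D(\varrho))^{-1}$, giving the first form of $\delta_\varrho$ (using $\norm{\varrho}_1\le\norm{\varrho}_2^2$), while when both $q_1,q_2$ lie at distance $\ge4(L-y_0)$ from $D(\varrho)$ the straight dual segment from $q_1$ to $q_2$ lies in $U$, so $\psi^{\sharp}(q_2)-\psi^{\sharp}(q_1)=\sum_{e\in[q_1,q_2]}\nabla_e\psi^{\sharp}$ is bounded by $(L-y_0)$ times the gradient bound, yielding the improved form (the latter is exactly what replaces $\log L$ by $\log(L-y_0+1)$). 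I expect this Green-function/harmonic-conjugate input, together with the attendant care for boundary vertices of $\Lambda_L^{\free}$ and the degenerate case $y_0=0$ in which $p_1\in\partial\Lambda$, to be the only genuine obstacle; it is the Villain counterpart of the spin-wave construction, and may equivalently be packaged as a decomposition $\sigma^{\Vil}=2\pi g+r$ with $g:\Lambda_L^{\free}\to\mathbb Z$ and $\abs{\nabla r}$ decaying away from $\{p_1,p_2\}$.

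The remaining bookkeeping parallels Claim~\ref{newlemma55eq}. Multiplying the displayed estimate by $\abs{\zed}\le\exp[-\tfrac{c_1}\beta(\norm{\varrho}_2^2+\log_2(d(\varrho)+1))]$ from Theorem~\ref{thm234} and using $\norm{\varrho}_2^2\ge2$ together with $\sup_{x\ge2}x^2e^{-c_1x/\beta}\le4e^{-2c_1/\beta}$, one gets, for $\beta$ small enough that $c_1/(\beta\ln2)$ exceeds a fixed threshold,
\[
\abs{\zed}\,[\sigmaOfVil{\varrho}]_{2\pi}^2\le C\,e^{-2c_1/\beta}\,(d(\varrho)+1)^{2-c_1/(\beta\ln2)}\,\min\{1,\delta_\varrho\}.
\]
Now sum over $\varrho\in\ens$, grouping first by the dyadic scale of $d(\varrho)+1$ and then by the dyadic scale of $\dist(D(\varrho),q_i)$, $i=1,2$: by property~\ref{enum:thm234PropD} of Theorem~\ref{thm234} the sets $D(\varrho)$ at a fixed scale $d(\varrho)+1\in[2^k,2^{k+1})$ are pairwise disjoint, each of cardinality $\gtrsim2^{2k}$, so the number of such densities with $\dist(D(\varrho),q_i)\in[2^m,2^{m+1})$ is $\lesssim2^{2\max(m-k,0)}$. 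Inserting the two forms of $\delta_\varrho$ and summing the resulting geometric series — the contributions with $\dist(D(\varrho),\{q_1,q_2\})\lesssim L-y_0$ amounting to $\lesssim e^{-2c_1/\beta}\ln(L-y_0+1)$ and those with $\dist(D(\varrho),\{q_1,q_2\})\gtrsim L-y_0$ summing to $O(e^{-2c_1/\beta})$ — gives $\sum_{\varrho\in\ens}\abs{\zed}[\sigmaOfVil{\varrho}]_{2\pi}^2\le C'e^{-2c_1/\beta}\ln(L-y_0+1)$ with $C'$ an absolute constant, uniformly in $\ens\in\mathscr{F}$. Since $\beta\,e^{2c_1/\beta}\to\infty$ as $\beta\to0^+$, given $D>0$ one can choose $\beta_1\in(0,\beta_0]$, depending only on $D$ and the fixed constants $c_1,\Gamma,\eta,\theta$ and not on $L$, so that $C'e^{-2c_1/\beta}\le\beta/D$ for all $\beta<\beta_1$; the Claim follows.
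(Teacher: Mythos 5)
You take a genuinely different route from the paper, and while its skeleton is sound, you leave the decisive technical piece as an acknowledged obstacle rather than carrying it out.

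The paper's proof of Claim~\ref{newlemma55eqVil} begins exactly as in Claim~\ref{newlemma55eq}: decompose $\sigmaOfVil{\varrho}$ into an integer combination of nearest-neighbour increments over edges in $D(\varrho)$, insert $[\cdot]_{2\pi}$ via subadditivity, and exploit disjointness at each dyadic scale (property~\ref{enum:thm234PropD}) to reduce the whole estimate to bounding $\sum_{j\thicksim\ell}[\sigma^{\Vil}_j-\sigma^{\Vil}_\ell]_{2\pi}^2$. The novelty is concentrated in that final sum. The paper reflects to the torus $\Lambda_{2L}^{\per}$ via $T_{\fp}$ and then uses the elementary operator identities \eqref{eq:PerId1}--\eqref{eq:PerId2}: horizontal increments of $\sigma^{\Vil}$ equal, up to the factor $2\pi$, the corresponding increments of $h=(\Delta_{2L}^{\p})^{-1}\partial_1 F_y$; vertical increments equal them only modulo $2\pi\mathbb{Z}$, which is exactly where the integer-valuedness of $F_y$ enters. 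Since $\partial_1 F_y$ consists of a few point masses near the two ends of the dipole line, the Dirichlet energy $\langle\partial_1 F_y,(-\Delta_{2L}^{\p})^{-1}\partial_1 F_y\rangle$ is $O(\ln(L-y_0+1))$ by \eqref{eq:GreenId2}, which closes the argument.

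Your proposal dispenses with the per-edge reduction entirely and instead estimates each $[\sigmaOfVil{\varrho}]_{2\pi}$ head-on: telescope $\langle f_y,(\Delta_L^{\f})^{-1}\varrho\rangle$ into a two-point boundary evaluation of the discrete harmonic conjugate $\psi^\sharp$ of $\psi=(\Delta_L^{\f})^{-1}\varrho$, use the integer monodromy to pass to a single-valued branch on the unbounded component, then sum over $\varrho$ by dyadic scale with Green-function decay. This is precisely the dual picture of the paper's argument — your $\psi^\sharp$ plays the role that the paper's $h$ plays, with the roles of $\varrho$ and $f_y$ interchanged — so the two are morally equivalent. But your packaging is heavier: constructing the discrete harmonic conjugate on $\Lambda_L^{\free}$ is delicate at the free boundary, where primal degrees drop below four and the Cauchy--Riemann bookkeeping needs a wired-dual boundary treatment; one must also establish the $1/R$ and $1/R^2$ Green-function decay on the free-b.c. box, and quantify the degenerate cases where $D(\varrho)$ meets $q_1,q_2$ or disconnects the box. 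You flag these, correctly, as ``the only genuine obstacle'', but they are real and would need to be filled in; by contrast, the reflection $T_{\fp}$ and \eqref{eq:PerId1}--\eqref{eq:PerId2} resolve all of them at once, which is why the paper takes that route. Your closing dyadic bookkeeping, including the use of property~\ref{enum:thm234PropD} and the sharper $\delta_\varrho$ decay to extract $\ln(L-y_0+1)$ rather than $\ln L$, is correct.
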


To finish the proof, we use the following claim.
\begin{claim} \label{claim:GreenId1}
There exists a positive absolute constant $\DSix$ such that
\begin{equation} \label{eqVilProof2}
\langle f_y, (-\Delta_L^{\f})^{-1}f_y\rangle \geq
(L-y_0) - \DSix\cdot\ln(L-y_0+1)\quad y=(y_0,y_1)\in\Lambda_L^{\free}, y_1 < L-1.
\end{equation}
\end{claim}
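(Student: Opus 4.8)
The plan is to bound the quadratic form from below by its dual (variational) expression and insert an explicit test function adapted to the geometry of $f_y$. For any $f:\Lambda_L^{\free}\to\mathbb R$ with $\sum_j f_j=0$, completing the square — using that $-\Delta_L^{\f}$ is positive definite on the mean-zero subspace and that $\langle g,-\Delta_L^{\f}g\rangle=\sum_{j\thicksim\ell}(g_j-g_\ell)^2$ — gives
\[
\langle f,(-\Delta_L^{\f})^{-1}f\rangle=\max_{g:\Lambda_L^{\free}\to\mathbb R}\Big(2\langle f,g\rangle-\sum_{j\thicksim\ell}(g_j-g_\ell)^2\Big),
\]
the maximum being attained at $g=(-\Delta_L^{\f})^{-1}f$. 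Hence it suffices to produce one test function $g$ with $\langle f_y,g\rangle\ge (L-y_0)-C\ln(L-y_0+1)$ together with $\sum_{j\thicksim\ell}(g_j-g_\ell)^2\le (L-y_0)+C\ln(L-y_0+1)$; the right-hand side above is then at least $(L-y_0)-\DSix\ln(L-y_0+1)$ for a suitable absolute constant $\DSix$, which is the claim (note $\ln(L-y_0+1)>0$ since $y_0\le L-1$).

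For the test function, recall that $f_y$ is a ``dipole strip'': $+1$ along row $y_1$ and $-1$ along row $y_1+1$ for $y_0\le x_0\le L-1$. Its electrostatic field is essentially that of a uniform double layer — energy proportional to the strip length $L-y_0$ — with only a logarithmic correction localised at the single free endpoint $c:=(y_0-\tfrac12,\,y_1+\tfrac12)$ (the opposite end abuts the boundary of $\Lambda_L^{\free}$ and incurs no such correction). Accordingly I take $g$ to be a cut-off discrete harmonic conjugate with branch cut along the strip: set $g_0(x_0,x_1):=\tfrac{1}{2\pi}\arg\big((x_0-y_0+\tfrac12)+i(x_1-y_1-\tfrac12)\big)$ with $\arg\in[0,2\pi)$ and branch cut along the positive real axis, so the cut runs exactly between rows $y_1$ and $y_1+1$ for all $x_0\ge y_0$; then put $g:=\chi g_0$, where $\chi$ equals $1$ on the sup-norm ball $B_\infty(c,\,L-y_0+1)$, equals $0$ off $B_\infty(c,\,2(L-y_0+1))$, and satisfies $|\chi_j-\chi_\ell|\le C/(L-y_0+1)$ for $j\thicksim\ell$.

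Finally the two estimates. Since $\supp f_y\subset B_\infty(c,L-y_0+1)$ we have $g=g_0$ on $\supp f_y$, and a direct computation gives $g_0(x_0,y_1)-g_0(x_0,y_1+1)=1-\tfrac1\pi\arctan\tfrac{1/2}{x_0-y_0+1/2}\ge 1-\tfrac1{\pi(2(x_0-y_0)+1)}$; summing over $y_0\le x_0\le L-1$ yields $\langle f_y,g\rangle\ge (L-y_0)-C\ln(L-y_0+1)$. For the energy I split the edges of $\Lambda_L^{\free}$ into three types: the $L-y_0$ edges joining rows $y_1,y_1+1$ at columns $x_0\ge y_0$ — precisely the edges crossing the branch cut — each with $|g_j-g_\ell|\le 1$, contributing at most $L-y_0$; the other edges inside $\{\chi=1\}$, where the Lipschitz bound $|g_{0,j}-g_{0,\ell}|\le C/(1+\dist(j,c))$ holds (the standard estimate for the discrete angle function away from its cut), so summing $O(r)$ edges at each scale $r\le L-y_0$ gives $O(\ln(L-y_0+1))$; and the edges in the transition annulus, with $|g_j-g_\ell|\le C/(L-y_0+1)$ over $O((L-y_0)^2)$ edges, i.e.\ $O(1)$. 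Adding these, $\sum_{j\thicksim\ell}(g_j-g_\ell)^2\le (L-y_0)+C\ln(L-y_0+1)$, which combined with the variational inequality proves the claim. The one genuinely delicate point is the energy bookkeeping near the endpoint $c$ — in particular the discrete Lipschitz estimate for $g_0$ on edges that pass close to, but do not cross, the branch cut — but this is routine for the lattice harmonic-conjugate function; the rest is direct computation.
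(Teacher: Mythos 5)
Your proof is correct, and it takes a genuinely different route from the paper's. The paper lifts $f_y$ to the torus $\Lambda_{2L}^{\per}$ via the reflection operator $T_{\fp}$, exploits the operator identity $\partial_1\partial_1^T+\partial_2\partial_2^T=-\Delta_{2L}^{\p}$ (and its Green-function consequence \eqref{eq:PerId1}) together with the factorization $\partial_2 F_y = T_{\fp}f_y$ to derive the \emph{exact} decomposition $\langle f_y,(-\Delta_L^{\f})^{-1}f_y\rangle=(L-y_0)-\langle \partial_1(F_y\cdot\mathds{1}_{\Lambda_L^{\free}}),(-\Delta_{2L}^{\p})^{-1}\partial_1 F_y\rangle$, and then bounds the correction term by the logarithmic asymptotics \eqref{eq:GreenLnBound} of the periodic Green function. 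You instead use the Dirichlet-form variational characterization of the pseudoinverse and construct an explicit competitor: a cutoff discrete harmonic conjugate (winding angle around the single free endpoint of the dipole strip, with branch cut along the strip). This is more elementary and self-contained --- it avoids the torus lift and the Green-function log bound entirely, and the role of the boundary (no log cost at the end that abuts $\partial\Lambda_L^{\free}$) is visible directly from the fact that the test function has only one vortex. The trade-off is that the paper's setup simultaneously yields the companion bound \eqref{eq:GreenId2} on $\langle\partial_1 F_y,(-\Delta_{2L}^{\p})^{-1}\partial_1 F_y\rangle$, which is then reused verbatim in the proof of Claim~\ref{newlemma55eqVil}; your argument would need a separate (though similar) construction for that, whereas the paper gets it for free from the same identity. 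The one spot in your write-up that deserves expansion if this were to be written out in full is the assertion that cut-crossing edges lie entirely inside $\{\chi=1\}$ and hence are not double-counted in the transition annulus --- this follows from $\supp f_y\subset B_\infty(c,L-y_0+1)$ as you note, but it is the hinge on which the clean split $(L-y_0)+O(\log)$ rests.
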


Applying Claim~\ref{newlemma55eqVil} with $D=\DFr$ and substituting back in \eqref{eq:proofVil0}, and using Claim~\ref{claim:GreenId1} yields
$$
\mathbb{E}_{\beta, \Lambda_L^{\free}, \lambda_\Lambda, v}\big[ D_{\Lambda_L^{\free}, y}^\beta(\phi) \big] \geq
\exp \Big[ -\beta\ln(L-y_0+1)(1+(2\pi)^2\DSix/2)\Big],
$$
which finishes the proof of Theorem~\ref{thm54}.

\subsection{Proof of Claims \ref{newlemma55eqVil} and \ref{claim:GreenId1}}
%Identities and Bounds on Green Functions
The proof of the two claims relies on the properties of the Green function $(-\Delta_L^{\f})^{-1}$. It is technically convenient to replace this Green function by $(-\Delta_{2L}^{\p})^{-1}$, where $\Delta_{2L}^{\p}$ defined as the graph Laplacian of $\Lambda_{2L}^{\per}$, as this latter Green function is invariant to translations. Our first task is thus to establish a link between the two Green functions.
%Recalling that the vertex sets of $\Lambda_L^{\free}$ and $\Lambda_{2L}^{\per}$ are $\{0,1,\ldots,L-1\}^2$ and $\{0,1,\ldots, 2L-1\}^2$, respectively, we will consider the former graph as embedded in the latter.

Define a linear expansion operator $T_{\fp}:\mathbb{R}^{\Lambda_L^{\free}} \rightarrow \mathbb{R}^{\Lambda_{2L}^{\per}}$ by
$$
T_{\fp}[f]_{(a,b)} = f_{(\min\{a,2L-1-a\}, \min\{b,2L-1-b\})},\quad f:\Lambda_L^{\free}\to\mathbb{R}, (a,b)\in \Lambda_{2L}^{\per},
$$
so that $T_{\fp}[f]$ is formed by `reflecting $f$ across the principal axes'. The definition of $T_{\fp}$ is chosen to complement the definition \eqref{eq:Laplacian_def} of the Laplacian so that
\begin{equation*}
  \Delta_{2L}^{\p}\circ T_{\fp} = T_{\fp}\circ \Delta_L^{\f}.
\end{equation*}
This readily implies, taking care that $\Delta_\Lambda^{-1}$ is only a pseudoinverse as in \eqref{eq:pseudo-inverse_def}, that
\begin{equation}\label{eq:T_fp_eq}
  T_{\fp}\circ (\Delta_L^{\f})^{-1} = (\Delta_{2L}^{\p})^{-1}\circ T_{\fp}.
\end{equation}

With the above link in hand, we proceed to point out several simple properties of the Green function $(-\Delta_{2L}^{\p})^{-1}$. Our starting point is the equality
\begin{equation}\label{eq:Delta_with_finite_differences}
  \partial_1  \partial_1^T + \partial_2 \partial_2^T = -\Delta_{2L}^{\p},
\end{equation}
where $\partial_1, \partial_2:\mathbb{R}^{\Lambda_{2L}^{\per}}\to\mathbb{R}^{\Lambda_{2L}^{\per}}$ are the backward difference (linear) operators in the first and second coordinate, respectively, as given by
\begin{align*}
(\partial_1 f)_{(a,b)} &:= f(a,b)-f((a-1) \text{ mod } 2L, b),\quad f:\Lambda_L^{\free}\to\mathbb{R}, (a,b)\in \Lambda_{2L}^{\per},\\
(\partial_2 f)_{(a,b)} &:= f(a,b)-f(a, (b-1) \text{ mod } 2L),\quad f:\Lambda_L^{\free}\to\mathbb{R}, (a,b)\in \Lambda_{2L}^{\per}.
\end{align*}

With this definition, \eqref{eq:Delta_with_finite_differences} and using the translation invariance of $\Delta_{2L}^{\p}$, it follows in a straightforward manner that
\begin{gather}
\label{eq:PerId1}
(\partial_1)^T (-\Delta_{2L}^{\p})^{-1} \partial_1 + (\partial_2)^T (-\Delta_{2L}^{\p})^{-1} \partial_2 = I, \quad \text{ $I$ is the identity on $\{ f: \sum_{j\in\Lambda_{2L}} f(j) = 0\}$}, \\
\label{eq:PerId2}
\partial_1(\Delta_{2L}^{\p})^{-1}\partial_2 = \partial_2(\Delta_{2L}^{\p})^{-1}\partial_1.
\end{gather}

It also immediately follows by translation invariance that
$$
(\Delta_{2L}^{\p})^{-1}_{(0,0), (t,s)} = (\Delta_{2L}^{\p})^{-1}_{(a,b), (a+t \text{ mod } 2L, b+s \text{ mod } 2L)}, \quad (a,b), (t,s) \in \Lambda_{2L}^{\per},
$$
which allows us to reduce the Green function to the function $G_{2L}: \Lambda_{2L}^{\per}\to\mathbb{R}$ defined by
$$
G_{2L}(t,s) := (-\Delta_{2L}^{\p})^{-1}_{(0,0), (t,s)}, \quad (t,s)\in\Lambda_{2L}^{\per}.
$$
It is a well known fact that there exists a positive absolute constant $\DSev$ such that
\begin{equation} \label{eq:GreenLnBound}
G_{2L}(0,0)-G_{2L}(2a,0) \leq \DSev \ln(a+1),\quad a\geq0.
\end{equation}

To make use of \eqref{eq:T_fp_eq} we denote by $F_y:\Lambda_{2L}^{\p}\rightarrow\mathbb{R}$ the function
$$
F_y(a,b):=\threepartdef
{1}{y_0\leq a \leq 2L-1-y_0,\quad b=y_1,}
{-1}{y_0\leq a \leq 2L-1-y_0,\quad b=2L-2-y_1,}
{0}{\text{otherwise},}
$$
and note that
\begin{equation}\label{eq:F_y}
\partial_2F_y = T_{\fp}f_y.
\end{equation}

Recalling that $V(\Lambda_L^{\free}) = \{0,1,\ldots,L-1\}^2$ and $V(\Lambda_{2L}^{\per}) = \{0,1,\ldots,2L-1\}^2$, it is convenient to regard $\Lambda_L^{\free}$ as a subgraph of $\Lambda_{2L}^{\per}$.
We are now ready to prove Claim~\ref{claim:GreenId1}. We will also show that
\begin{equation} \label{eq:GreenId2}
\langle \partial_1 F_y, (-\Delta_{2L}^{\p})^{-1} \partial_1 F_y \rangle \leq
\DSix\cdot\ln(L-y_0+1).
\end{equation}

\begin{proof}[Proof of Claim~\ref{claim:GreenId1} and \eqref{eq:GreenId2}]

Denote by $\mathds{1}_{\Lambda_L^{\free}}:\Lambda_{2L}^{\per}\to\mathbb{R}$ the function satisfying $\mathds{1}_{\Lambda_L^{\free}}(j)=1$ for $j\in\Lambda_L^{\free}$ and $0$ otherwise. It is straightforward to see by \eqref{eq:F_y} that $\partial_2(F_y\cdot\mathds{1}_{\Lambda_L^{\free}})(j)=f_y(j)$ if $j\in\Lambda_L^{\free}$ and $\partial_2(F_y\cdot\mathds{1}_{\Lambda_L^{\free}})(j)=0$ otherwise. Therefore, by \eqref{eq:T_fp_eq} and \eqref{eq:F_y},
\begin{multline*}
\langle f_y, (-\Delta_L^{\f})^{-1}f_y\rangle =
\langle \partial_2 (F_y\cdot\mathds{1}_{\Lambda_L^{\free}}), (-\Delta_{2L}^{\p})^{-1}\partial_2 F_y \rangle =
\langle F_y\cdot\mathds{1}_{\Lambda_L^{\free}}, \partial_2^T(-\Delta_{2L}^{\p})^{-1}\partial_2 F_y \rangle \\
= \langle F_y\cdot\mathds{1}_{\Lambda_L^{\free}}, (I - \partial_1^T(-\Delta_{2L}^{\p})^{-1}\partial_1) F_y \rangle =
(L - y_0) - \langle \partial_1 (F_y\cdot\mathds{1}_{\Lambda_L^{\free}}), (-\Delta_{2L}^{\p})^{-1}\partial_1 F_y \rangle.
\end{multline*}
It is straightforward to check that (using \eqref{eq:GreenLnBound})
\begin{align*}
\langle \partial_1 (F_y\cdot\mathds{1}_{\Lambda_L^{\free}}), (-\Delta_{2L}^{\p})^{-1}\partial_1 F_y \rangle &= \gamma_{1}(y) + \gamma_{2}(y) \leq \DSev\cdot\ln(L-y_0 + 1) + \DSev\ln2,\\
\langle \partial_1 F_y, (-\Delta_{2L}^p)^{-1} \partial_1 F_y \rangle &= 4\gamma_{G,1}(y)\leq\DSev\cdot\ln(L-y_0 + 1),
\end{align*}
where
\begin{align*}
\gamma_{1}(y) &:= G_{2L}\big(0,0\big) - G_{2L}\big(0, 2(L-y_1)\big) - G_{2L}\big(2(L-1-y_0), 0\big) + G_{2L}\big(2(L-1-y_0), 2(L-y_1)\big),\\
\gamma_{2}(y) &:= G_{2L}\big(0,L-1-y_1\big) - G_{2L}\big(0, L+1-y_1\big) - G_{2L}\big(2(L-1-y_0), L-1-y_1\big) \\
&+ G_{2L}\big(2(L-1-y_0), L+1-y_1\big),
\end{align*}
which finishes the proof.\qedhere
\end{proof}

\begin{proof}[Proof of Claim~\ref{newlemma55eqVil}]
For notational simplicity, denote $\sigma := \sigma^{\Vil}$.
The proof is similar to the proof of Claim~\ref{newlemma55eq}. Our starting point is
\begin{equation*}
[\sigmaOf{\varrho}]_{2\pi}^2 \leq 2\abs{D(\varrho)} \sum_{\substack{j,\ell\in D(\varrho)\\j\thicksim\ell}} c_{\{j,\ell\}}^2[\sigma_j - \sigma_\ell]_{2\pi}^2,
\end{equation*}
for integer $c_{\{j,\ell\}}$ satisfying $|c_{\{j,\ell\}}|\leq\frac{1}{2}\norm{\varrho}_2^2$ (where we used $\abs{[a+b]_{2\pi}} \leq \abs{[a]_{2\pi}} + \abs{[b]_{2\pi}}$, $a,b\in\mathbb{R}$). The proof continues identically to the proof of Claim~\ref{newlemma55eq} (with $D\cdot \DSix$ replacing $D$) and yields
\begin{equation}\label{eq:claim63_eq}
\sum_{\varrho \in \ens}\abs{\zed} \cdot [\sigmaOf{\varrho}]_{2\pi}^2 \leq
\frac{\beta}{D\cdot \DSix}\sum_{j\thicksim\ell} [\sigma_j - \sigma_\ell]_{2\pi}^2.
\end{equation}

We continue by separately analysing $[\sigma_j - \sigma_\ell]_{2\pi}^2$ for horizontal and vertical edges $\{j,\ell\}$.
Suppose $\{j,\ell\}$ is horizontal and that $j$ has the larger first coordinate (i.e. $j_0 = \ell_0 + 1$). Then, by \eqref{eq:PerId2},
\begin{align*}
\sigma_j - \sigma_\ell &=
2\pi\big[(\Delta_L^{\f})^{-1}f_y(j) - (\Delta_L^{\f})^{-1}f_y(\ell)\big] \\
&= 2\pi\big[(\Delta_{2L}^{\p})^{-1}T_{\fp}f_y(j) - (\Delta_{2L}^{\p})^{-1}T_{\fp}f_y(\ell)\big]
= 2\pi\big(\partial_1(\Delta_{2L}^{\p})^{-1}T_{\fp}f_y\big)(j) \\
&= 2\pi\big(\partial_1(\Delta_{2L}^{\p})^{-1}\partial_2 F_y\big)(j)
= 2\pi\big(\partial_2(\Delta_{2L}^{\p})^{-1}\partial_1 F_y\big)(j)\\
&= 2\pi\big[\big((\Delta_{2L}^{\p})^{-1}\partial_1 F_y\big)(j)-(\Delta_{2L}^{\p})^{-1}\partial_1 F_y\big)(\ell)\big]
\end{align*}
Now suppose $\{j,\ell\}$ is vertical and that $j$ has the larger second coordinate (i.e. $j_1 = \ell_1 + 1$). Then, similarly to the previous case, using \eqref{eq:PerId1} and the fact that $F_y(j)$ is integer for $j\in\Lambda_{2L}^{\per}$,
\begin{align*}
[\sigma_j - \sigma_\ell]_{2\pi}^2 &=
[2\pi\big(\partial_2^T(\Delta_{2L}^{\p})^{-1}\partial_2 F_y\big)(j)]_{2\pi}^2 =
[2\pi\big(F_y - \partial_1^T(\Delta_{2L}^{\p})^{-1}\partial_1 F_y\big)(j)]_{2\pi}^2\\
&=[2\pi\big(\partial_1^T(\Delta_{2L}^{\p})^{-1}\partial_1 F_y\big)(j)]_{2\pi}^2 =
2\pi\big[\big((\Delta_{2L}^{\p})^{-1}\partial_1 F_y\big)(j)-(\Delta_{2L}^{\p})^{-1}\partial_1 F_y\big)(\ell)\big]_{2\pi}^2.
\end{align*}

Now, using \eqref{eq:GreenId2},
$$
\sum_{j\thicksim\ell} [\sigma_j - \sigma_\ell]_{2\pi}^2 \leq \gradNorm{\sigma} = \langle \partial_1 F_y, (-\Delta_{2L}^{\p})^{-1} \partial_1 F_y \rangle \leq \DSix \ln(L-y_0+1),
$$
which completes the proof by substituting the last inequality in \eqref{eq:claim63_eq}.
\end{proof}

%\begin{remark} Lemma \ref{lemmaZSigmaVarrhoBoundVil} was essential in the proof, and Lemma \ref{newlemma55eq} was not enough, since $\norm{\nabla\sigma_{\Vil}}^2_2$ has a term of the form $\exp(c (L-1-y_0))$, which cannot cancel out.\end{remark}

\section{Discussion and Open Questions} \label{sec:openQ}
In this paper we presented the Fr\"ohlich-Spencer proof for delocalization of the two-dimensional integer-valued Gaussian free field at high temperature and for power-law decay of correlations in the two-dimensional plane rotator model with Villain interaction at low temperature. Our understanding of these and related models is still rather incomplete and in this section we mention some of the outstanding open problems (see also the discussion by Velenik \cite{V06}).

{\bf Scaling limit.} What is the scaling limit of the two-dimensional integer-valued discrete Gaussian free field with small $\beta$? The results of Fr\"ohlich and Spencer suggest that it converges to the (massless) continuum Gaussian free field but no proof is available.

{\bf Maximum.} Recall that $\Lambda_L^{\free}$ is the subgraph of $\mathbb{Z}^2$ with vertex set $\{0,1,\ldots, L-1\}^2$. The maximum of the two-dimensional \emph{real-valued} discrete Gaussian free field on $\Lambda_L^{\free}$ (defined in \eqref{label:mida}) is of order $\log L$ (with much more refined information available). Is the same true for the \emph{integer-valued} discrete Gaussian free field with small $\beta$? Proposition~\ref{prop:IVUpperBound} implies an upper bound with this order of magnitude.

{\bf Gibbs states of the plane rotator model.} It is known that the plane rotator model has a unique translation-invariant Gibbs state \cite{BFL77} at every temperature. It is expected that, in fact, it has a unique Gibbs state but this remains unproven.

{\bf Integer-valued random surfaces with general interaction potentials.} Let $u = (u_n)_{n\in\mathbb{Z}}$ be a sequence satisfying $u_n = u_{-n}$. One may define an integer-valued random surface with interaction function $u$ as follows. Given a finite, connected graph $\Lambda$ and vertex $v\in\Lambda$, the measure $\mu_{u, \Lambda,v}^{\IV}$ on functions $m:\Lambda\to\mathbb{Z}$ is given by
$$
d\mu_{u, \Lambda, v}^{\IV} =
\frac{1}{Z_{u, \Lambda,v}^{\IV}}
\exp\Bigg[-\sum_{j\thicksim\ell}u_{m_j - m_\ell}\Bigg] \cdot
d\delta_0(m_v)
\prod_{j \in \Lambda\setminus\{v\}} d_{\text{count}(\mathbb{Z})}(m_j)
$$
where $\delta_0$ is the Dirac delta measure at $0$, $d_{\text{count}(\mathbb{Z})}$ is the counting measure on $\mathbb{Z}$, and the normalization constant $Z_{u, \Lambda,v}^{\IV}$ normalizes the measure $\mu_{u, \Lambda, v}^{\IV}$ to be a probability measure, which is possible assuming suitable decay properties of $u$. The corresponding expectation is denoted by $\mathbb{E}_{u, \Lambda, v}^{\IV}$. An outstanding challenge is to analyze the fluctuations of such surfaces, e.g., when $\Lambda$ is a subset of the hypercubic lattice $\mathbb{Z}^d$. Versions of the standard Peierls argument apply when the ratio $u_{n}/ u_0$ grows sufficiently rapidly, and show that the surface is in a localized phase, with uniformly bounded variance at every vertex. In fact, localization is predicted to always (perhaps with mild growth assumptions on $u$) occur in dimensions $d\ge 3$ but is generally unknown. The main result in this direction is provided by G\"opfert and Mack \cite{GM82} who proved that the the integer-valued discrete Gaussian free field in dimension $d=3$ is localized and, remarkably, has exponential decay of correlations (positive mass) at all positive inverse temperatures $\beta$. In contrast, one expects the random surface to be delocalized in two dimensions for a large class of interaction sequences $u$, but the only available results are those of Fr\"ohlich and Spencer \cite{fs} on the integer-valued discrete Gaussian free field and the Solid-On-Solid model (when $u_n = |n|$). It is of interest to extend the methods of \cite{fs} to more general interaction potentials (some steps in this direction are in \cite[Lemma 4.3, Section 6 and Section 7]{fs}). In this regard we put forward the following conjecture.
\begin{conj}
  Let $U:\mathbb{R}\to\mathbb{R}$ be smooth and satisfy $U(x)=U(-x)$ and $\lim_{x\to\infty} U(x)/x^{\alpha}=\infty$ for some $\alpha>0$. For each $M>0$ define $u^M = (u^M_n)$ by $u^M_n := U(n / M)$. Then there exists some $M_0(U)>0$ such that for each $M>M_0(U)$,
  \begin{equation*}
    \lim_{L\to\infty} \frac{\mathbb{E}_{u^M, \Lambda_L^{\free}, (0,0)}^{\IV} \left(m_{(L-1,L-1)}^2\right)}{\log(L)} > 0.
  \end{equation*}
\end{conj}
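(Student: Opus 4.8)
The plan is to adapt the Fr\"ohlich--Spencer treatment of non-Gaussian interactions (\cite[Lemma~4.3, Sections~6 and~7]{fs}) to the finite-volume renormalization framework developed in this paper. As in the deduction of Theorem~\ref{thm:IV}\eqref{eq:IVThm2} from Theorem~\ref{thm:IV}\eqref{eq:IVThm1} in Section~\ref{sec:proofIV}, it suffices to establish, for each $M>M_0(U)$, an exponential-moment lower bound with some positive constant $c_M$: for every square domain $\Lambda$ with free b.c., every $v\in\Lambda$, and every $f:\Lambda\to\mathbb{R}$ with $\sum_{j\in\Lambda}f_j=0$,
\begin{equation*}
\mathbb{E}_{u^M,\Lambda,v}^{\IV}\big[e^{\langle m,f\rangle}\big]\ \geq\ \exp\big[\,c_M\,\langle f,-\Delta_\Lambda^{-1}f\rangle\,\big].
\end{equation*}
Indeed, specializing $\Lambda=\Lambda_L^{\free}$, $v=(0,0)$ and taking $f$ equal to $1$ at $(L-1,L-1)$, to $-1$ at $(0,0)$, and to $0$ elsewhere, one has $\langle f,-\Delta_{\Lambda_L^{\free}}^{-1}f\rangle\geq c'\log L$, and Taylor expanding in the spirit of Section~\ref{sec:proofIV} gives $\mathbb{E}^{\IV}_{u^M,\Lambda_L^{\free},(0,0)}[m_{(L-1,L-1)}^2]\geq 2c_Mc'\log L$; this yields the conjectured positive lower bound on the $\liminf$ (finiteness of the $\limsup$, needed for the stated limit to make sense, should follow for convex $U$ from a Brascamp--Lieb-type comparison with a Gaussian surface, in the spirit of Proposition~\ref{prop:IVUpperBound}).

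To bring the renormalization machinery to bear one first dualizes. Writing $g^M_n:=e^{-u^M_n}$, the hypothesis $U(x)/x^\alpha\to\infty$ guarantees that $(g^M_n)_{n\in\mathbb{Z}}$ is a rapidly decaying, even, summable sequence, so one may introduce a dual angle $\xi_e$ on each edge through $g^M_n=\tfrac1{2\pi}\int_{-\pi}^{\pi}\widehat{g^M}(\xi)e^{-in\xi}\,d\xi$, perform the sum over the integer heights $m$ (each vertex $k\neq v$ contributing a constraint pinning the discrete divergence of $\xi$ to $2\pi\mathbb{Z}$), and arrive --- via a Sine-Gordon/Siegert-type representation, generalizing both Proposition~\ref{prop:VilIVRel} and the duality underlying the measures $\mu_{\beta,\Lambda,\lambda_\Lambda,v}$ of \eqref{label:eq116} --- at a representation of $\mathbb{E}^{\IV}_{u^M,\Lambda,v}[e^{\langle m,f\rangle}]$ in terms of a real field $\phi$ whose law, for $M$ large, is a small perturbation of a very-high-temperature discrete Gaussian free field carrying periodic weights; the point of the hypothesis $M>M_0(U)$ is that $u^M_{n+1}-u^M_n=O(1/M)$ uniformly on bounded $n$, which is exactly what makes that perturbation small. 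In this regime the weight expansion of Theorem~\ref{thm:21}, the measure modification of Theorem~\ref{thm:thm41}, and the spin-wave construction of Proposition~\ref{prop:aExistance} should go through with positive effective activities obeying sub-Gaussian-type bounds, after which the argument of Section~\ref{sec:proof11} delivers the displayed estimate. When, moreover, $x\mapsto U(\sqrt{x})$ is a Bernstein function --- for instance $U(x)=|x|^p$ with $0<p\le 2$ --- there is a cleaner route avoiding the dualization: $e^{-u^M_{m_j-m_\ell}}=\int_0^\infty e^{-t_{j\ell}(m_j-m_\ell)^2/M^2}\,\nu(dt_{j\ell})$ exhibits $\mu^{\IV}_{u^M,\Lambda,v}$ as a mixture of integer-valued Gaussian surfaces with inhomogeneous, mostly $O(1/M^2)$, edge conductances, to which an inhomogeneous-conductance variant of Theorem~\ref{thm11} applies, followed by an application of Jensen's inequality to average over the conductances.

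The main obstacle is that for a general interaction the dualization does not reproduce the clean single-site weight structure $\prod_{j\in\Lambda}\lambda_j(\phi_j)$ on which Theorem~\ref{thm234} rests: one is left either with a genuinely non-Gaussian reference measure on the field side --- equivalently a ``dielectric medium'' in the Coulomb-gas picture --- or, in the Bernstein case, with unbounded conductances on a sparse set of edges that must be controlled before the renormalization can be run. Handling this requires, first, a non-Gaussian extension of the complex-translation identity \eqref{eq:ComplexTrans} together with quantitative control of the error terms it produces, which is the substance of \cite[Sections~6 and~7]{fs}; and, second, re-deriving the activity bound \eqref{eq:ZBoundThm234} and the neutrality and separation structure of Theorem~\ref{thm234} in this broader setting. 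It is in this second point that the growth hypothesis $U(x)/x^\alpha\to\infty$ is essential --- it is what forces the high charges, equivalently the high Fourier modes of $g^M$, to be damped strongly enough that the weight expansion converges and that the spin-wave energy lower bound \eqref{eq:SpinWaveBound} can absorb the activities. Finally, carrying every estimate out in finite volume, uniformly in $\Lambda$, as throughout the present paper, adds a layer of bookkeeping but should introduce no new difficulty of principle.
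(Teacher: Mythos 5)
This statement is presented in the paper as a \emph{conjecture}, not a theorem: the authors explicitly leave it open and offer no proof. What you have written is therefore not a proof to be compared against the paper's, but a research plan, and on its own terms it does not close the problem. The reduction you make at the outset --- passing to an exponential-moment lower bound $\mathbb{E}^{\IV}_{u^M,\Lambda,v}[e^{\langle m,f\rangle}]\geq\exp[c_M\langle f,-\Delta_\Lambda^{-1}f\rangle]$ and then Taylor-expanding as in Section~\ref{sec:proofIV} --- is correct and mirrors the paper's own deduction of \eqref{eq:IVThm2} from \eqref{eq:IVThm1}. The observation that $u^M_{n+1}-u^M_n=O(1/M)$ makes the interaction ``high-temperature'' for large $M$, and the Bernstein/conductance-mixture remark, are both sensible. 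But everything from ``the weight expansion of Theorem~\ref{thm:21}\ldots should go through'' onward is asserted, not proved, and you then name the obstructions yourself: after dualization the reference measure is no longer a GFF with single-site weights $\prod_j\lambda_j(\phi_j)$, so Theorem~\ref{thm234} does not apply as stated; the complex-translation identity \eqref{eq:ComplexTrans} relies on the quadratic exponent and would need a genuinely non-Gaussian analogue with quantitative error control; and the activity bound \eqref{eq:ZBoundThm234}, the neutrality/separation structure, and the spin-wave energy estimate \eqref{eq:SpinWaveBound} would all have to be re-derived in the dielectric (or unbounded-conductance) setting. These are exactly the reasons the statement is a conjecture rather than a theorem, so a ``proof'' that lists them as ``the main obstacle'' and says the remaining work ``adds a layer of bookkeeping but should introduce no new difficulty of principle'' has not demonstrated anything.

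Two smaller points. First, your argument only yields a lower bound on the $\liminf$; the conjecture as written asserts $\lim$, and you correctly note that finiteness of the $\limsup$ needs a separate argument, which your Brascamp--Lieb suggestion covers only for convex $U$ while the conjecture permits non-convex $U$. Second, the illustrative example $U(x)=|x|^p$, $0<p\le2$, that you invoke for the Bernstein route is not smooth at the origin unless $p$ is an even integer, so it does not actually satisfy the hypothesis of the conjecture (the paper makes the same informal remark, but in your write-up it is presented as a case to which the conjecture's hypotheses apply). In summary: this is a plausible and well-informed sketch of a line of attack, consonant with what the paper itself suggests when it points to \cite[Lemma~4.3, Sections~6 and~7]{fs}, but it is not a proof, and it should not be marked as one.
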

In the special case that $U(x) = |x|^\alpha$ for some $\alpha>0$ the conjecture states that the models with $u_n = \beta |n|^\alpha$ will delocalize in two dimensions at sufficiently low $\beta$ (depending on $\alpha$), extending the results of Fr\"ohlich and Spencer which apply to $\alpha=1,2$.

\medskip
{\bf Acknowledgements.} We thank J\"urg Fr\"ohlich for explaining to the second author the proof of Proposition~\ref{prop:IVUpperBound} from his work with Park \cite{upperBound}.

\end{document}